\definecolor{red}{rgb}{1,0,0}
\definecolor{gre}{rgb}{0,0.7,0}
\definecolor{col}{rgb}{0,0.3,0.8}
\newtheorem{thm}{Theorem}[section]%
\newtheorem{lem}{Lemma}[section]%
\newtheorem{prop}{Proposition}[section]%
\newtheorem{cor}{Corollary}[section]%
\newtheorem{con}{Conjecture}[section]%
\theoremstyle{definition}
\theoremstyle{remark}
\theoremstyle{plain}
\numberwithin{equation}{section}
\def\CC{{\mathbb C}}
\def\QQ{{\mathbb Q}}
\def\RR{{\mathbb R}}
\def\TT{{\mathbb T}}
\def\ZZ{{\mathbb Z}}
\def\veck{{\text{\boldmath$k$}}}
\def\vecm{{\text{\boldmath$m$}}}
\def\vecM{{\text{\boldmath$M$}}}
\def\vecp{{\text{\boldmath$p$}}}
\def\vecs{{\text{\boldmath$s$}}}
\def\vect{{\text{\boldmath$t$}}}
\def\vecu{{\text{\boldmath$u$}}}
\def\vecv{{\text{\boldmath$v$}}}
\def\vecx{{\text{\boldmath$x$}}}
\def\vecX{{\text{\boldmath$X$}}}
\def\vecy{{\text{\boldmath$y$}}}
\def\vecY{{\text{\boldmath$Y$}}}
\def\vecz{{\text{\boldmath$z$}}}
\def\vecZ{{\text{\boldmath$Z$}}}
\def\vecalf{{\text{\boldmath$\alpha$}}}
\def\vecbeta{{\text{\boldmath$\beta$}}}
\def\veceta{{\text{\boldmath$\eta$}}}
\def\vecomega{{\text{\boldmath$\omega$}}}
\def\vectau{{\text{\boldmath$\tau$}}}
\def\vecphi{{\text{\boldmath$\phi$}}}
\def\vecxi{{\text{\boldmath$\xi$}}}
\def\vecXi{{\text{\boldmath$\Xi$}}}
\def\vecnull{{\text{\boldmath$0$}}}
\def\scrE{{\mathcal E}}
\def\scrF{{\mathcal F}}
\def\scrI{{\mathcal I}}
\def\scrH{{\mathcal H}}
\def\scrJ{{\mathcal J}}
\def\scrL{{\mathcal L}}
\def\scrQ{{\mathcal Q}}
\def\scrR{{\mathcal R}}
\def\scrS{{\mathcal S}}
\def\scrT{{\mathcal T}}
\def\scrW{{\mathcal W}}
\def\fH{{\mathfrak H}}
\def\Im{\operatorname{Im}}
\def\H{{\mathfrak H}}
\def\e{\mathrm{e}}
\def\i{\mathrm{i}}
\def\d{\mathrm{d}}
\def\diag{\operatorname{diag}}
\def\error{\operatorname{\scrE}}
\def\cl{{\operatorname{cl}}}
\def\C{\operatorname{C{}}}
\def\L{\operatorname{L{}}}
\def\Op{\operatorname{Op}}
\def\hOp{\widehat{\operatorname{Op}}}
\def\SL{\operatorname{SL}}
\def\Tr{\operatorname{Tr}}
\def\supp{\operatorname{supp}}
\def\HiS{{\operatorname{HS}}}
\def\GamG{\Gamma\backslash G}
\def\SLZ{\SL(2,\ZZ)}
\def\trans{\,^\mathrm{t}\!}
\def\colker{\Sigma}
\def\yy{\vecy}
\title[Quantum transport in a low-density periodic potential]{Quantum transport in a low-density periodic potential: homogenisation via homogeneous flows}
\author{Jory Griffin}
\address{Jory Griffin, Department of Mathematics and Statistics,
Queen's University,
Kingston ON, K7L 3N6, Canada}
\email{\tt j.griffin@queensu.ca}
\author{Jens Marklof}
\address{Jens Marklof, School of Mathematics, University of Bristol, Bristol BS8 1TW, U.K.} 
\email{\tt j.marklof@bristol.ac.uk}
\date{2 November 2018/20 March 2019}
\begin{document}

\begin{abstract} 
We show that the time evolution of a quantum wavepacket in a periodic potential converges in a combined high-frequency/Boltzmann-Grad limit, up to second order in the coupling constant, to terms that are compatible with the linear Boltzmann equation. This complements results of Eng and Erd\"os for low-density random potentials, where convergence to the linear Boltzmann equation is proved in all orders. We conjecture, however, that the linear Boltzmann equation fails in the periodic setting for terms of order four and higher. Our proof uses Floquet-Bloch theory, multi-variable theta series and equidistribution theorems for homogeneous flows. Compared with other scaling limits traditionally considered in homogenisation theory, the Boltzmann-Grad limit requires control of the quantum dynamics for longer times, which are inversely proportional to the total scattering cross section of the single-site potential.
\end{abstract}

\maketitle

%\tableofcontents

\section{Introduction}

The analysis of wave transport in periodic media plays an important role in explaining numerous physical phenomena, most notably in solid state physics, continuum mechanics and optics. A key challenge is the derivation of macroscopic transport equations from the underlying microscopic laws, and to thus describe effects on scales which are several orders of magnitude above the length scale given by the period of the medium. Semiclassical analysis and homogenisation theory have produced a remarkable collection of results in scaling limits where the characteristic wavelength is either much larger than the period (low-frequency homogenisation) or of the same or smaller order (high-frequency homogenisation); see for example \cite{Allaire05,Beniot,BLP,Birman,Craster10,Gerard91,Gerard97,HMC16,Markowich94,Panati03}.

In this paper we  study the limit when the diameter $2r$ of the  interaction region in each fundamental cell is significantly smaller than the period, and the wavelength $h$ is comparable to the interaction region, see Figure \ref{fig1}. 

\begin{figure}
\includegraphics[width=0.9\textwidth]{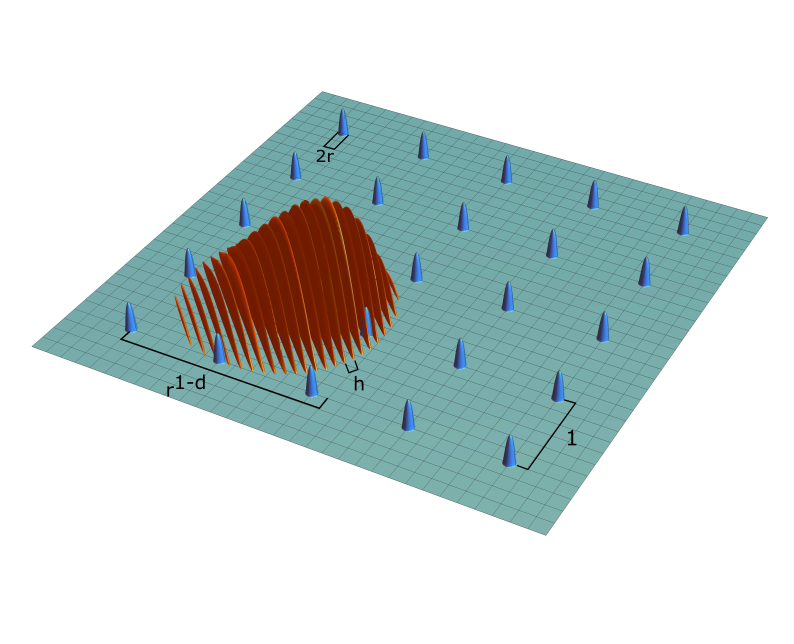}
\caption{Illustration of a wavepacket at time $t=0$ with wavelength $h$ in a $\ZZ^d$-periodic potential with interaction regions of diameter $2r$. For small $r$, the classical mean free path length in this setting is of the order $r^{1-d}$.} \label{fig1}
\end{figure}

Such a scaling, which is not traditionally discussed in high-frequency homogenisation, is motivated by the desire to understand the Boltzmann-Grad limit of particle transport in crystals. This problem is currently only understood (a) in the case of zero quasi-momentum \cite{Castella_WC,Castella_LD,Castella_Plagne_LD}, (b) in the classical limit \cite{CagliotiGolse,partIII,partI,partII,partIV}, and (c) when the medium is random rather than periodic, in both the classical \cite{Gallavotti69,Spohn78,Boldrighini83} and quantum setting \cite{Eng_Erdos} (see also \cite{Erdos_Yau,Spohn77} for the weak-coupling limit and \cite{Bal99,Bal2002,Bal2011} for related models). In the random setting---classical and quantum---the limit transport equation is proved to be the \textit{linear Boltzmann equation}, as predicted by Lorentz in 1905 \cite{Lorentz}. 

The linear Boltzmann equation for a particle density $f(t,\vecx,\vecy)$ at time $t$, where $\vecx$ denotes position and $\vecy$ momentum, is given by
\begin{equation}\label{LBeq}
\partial_t \, f(t,\vecx,\vecy) + \vecy \cdot \nabla_{\!\vecx}\, f(t,\vecx,\vecy) =  \rho(\vecx) \int_{\RR^d} \colker(\vecy,\vecy') \, [f(t,\vecx,\vecy') - f(t,\vecx,\vecy)]  \, \d \vecy' ,
\end{equation}
subject to initial data $f(0,\vecx,\vecy)=a(\vecx,\vecy)$. The collision kernel $\colker(\vecy,\vecy')$ is determinded by the single-site scatterering potential, and can be interpreted as the rate of particles with velocity $\vecy$ being scattered to velocity $\vecy'$ (or vice versa). The quantity $\rho(\vecx)$ denotes the macroscopic scatterer density at $\vecx$, which for a homogeneous medium means $\rho(\vecx)$ is constant. In the absence of scatterers $\rho(\vecx) = 0$, and the solution of \eqref{LBeq} is $f(t,\vecx,\vecy) = a(\vecx - t \vecy,\vecy)$, which is consistent with free transport. In the case of a single scatterer, classical and semiclassical scattering theory yields a linear Boltzmann equation with $\rho(\vecx)=\delta(\vecx)$ \cite{Nier95}. See also \cite{Nier96}, in particular Section 7.2 for the case when $\rho(\vecx)$ is an infinite superposition of point masses in dimension $d=1$.

The principal result of the present work establishes convergence in the Boltzmann-Grad limit for the quantum periodic setting, at least up to second order in the coupling constant. Perhaps surprisingly, and unlike the classical case \cite{Golse07}, this limit is compatible with the linear Boltzmann equation. We nevertheless conjecture that higher-order terms in the coupling constant are incompatible, and that in particular the limit process does not satisfy the linear Boltzmann equation.  A heuristic description of the full limit process will be provided elsewhere \cite{GM19}. 

A technical step in this paper is to generalise the limit theorems for multi-variable theta series, which were employed in the proof of the Berry-Tabor conjecture for the Laplacian on tori with quasi-periodic boundary conditions \cite{Marklof02,Marklof03}. Crucial ingredients in the proof of these statements are equidistribution results for homogeneous flows against unbounded test functions, which requires estimates on the escape of mass into the cusp of the relevant homogeneous space. The results in \cite{Marklof02,Marklof03} are based on Ratner's measure classification theorem and are therefore ineffective. The recent paper \cite{SP19} provides effective rate-of-convergence estimates in this context (we will not need these for the present study).

Given initial data $f_0$ in the Schwartz class $\scrS(\RR^d)$ and scaling parameter $h>0$, the quantum amplitude $f(t,\vecx)$ at time $t$ is given by the Schr\"odinger equation
\begin{equation}\label{Schrodinger}
\i \tfrac{h}{2\pi}\; \partial_t f (t,\vecx) = H_{h,\lambda} f(t,\vecx),\qquad f(0,\vecx)=f_0(\vecx),
\end{equation}
with quantum Hamiltonian
\begin{equation}
H_{h,\lambda}=H_{h,0} + \lambda \Op(V), \qquad H_{h,0}=-\frac{h^2}{8\pi^2}\; \Delta .
\end{equation}
Here $\Delta$
is the standard Laplacian in $\RR^d$, and $\Op(V)$ denotes multiplication by the $\ZZ^d$-periodic potential
\begin{equation} \label{potentialdef}
V(\vecx)=V_r(\vecx)=\sum_{\vecm\in\ZZ^d} W(r^{-1} (\vecx+\vecm)),
\end{equation}
with a fixed single-site potential $W$. We will assume from here onwards that $d \geq 2$, and that $W\in\scrS(\RR^d)$ is  real-valued. 

 We expect that our analysis can be extended to scatterer configurations where $\ZZ^d$ is replaced by an arbitrary Euclidean lattice $\scrL$ of full rank in $\RR^d$, and more generally to locally finite $\scrL$-periodic point sets. This requires, however, a substantial generalisation of the asymptotics discussed in Section 7, which are based on limit theorems for the pair correlation of general positive definite quadratic forms. The latter are currently understood, in the necessary scaling regime, only in dimension $d=2$ \cite{EskinMargulisMozes,MargulisMohammadi}.

The quantities $r,\lambda>0$ are scaling parameters which we will refer to as the scattering radius and coupling constant respectively. The operator $H_{h,\lambda}$ can be realised as the Weyl quantisation of the classical Hamiltonian $H_\lambda^\cl(\vecx,\vecy)=\frac12 \|\vecy\|^2+\lambda V(\vecx)$. The solution of \eqref{Schrodinger} can be represented as $f(t,\vecx)=U_{h,\lambda}(t)f_0(\vecx)$ with
\begin{equation}
U_{h,\lambda}(t)= \e(- H_{h,\lambda} t/h), \qquad \e(z):=\e^{2\pi\i z} .
\end{equation}
To characterise the asymptotic behaviour of the quantum dynamics, it will be convenient to use the time evolution of linear operators $A(t)$ (``quantum observables'') given by the Heisenberg evolution
\begin{equation}
A(t)=U_{h,\lambda}(t)\, A\, U_{h,\lambda}(t)^{-1} .
\end{equation}
We will use the $\L^2$ inner product 
\begin{equation}\label{innerp}
\langle a, b\rangle = \int_{\RR^d\times\RR^d} a(\vecx,\vecy)\, \overline{b(\vecx,\vecy)}\, \d \vecx \d \vecy,
\end{equation}
and the Hilbert-Schmidt inner product
\begin{equation}
\langle A, B\rangle_\HiS = \Tr A B^\dagger.
\end{equation}
As is standard in semiclassics, we will measure momentum in units of $h$, and use the rescaling $a(\vecx,\vecy) \mapsto h^{d/2} a(\vecx,h \vecy)$; the normalisation is chosen so that the $\L^2$-norm is preserved. In the classical picture of a point particle moving through an infinite field of scatterers, the {\em Boltzmann-Grad} scaling limit is one in which the radius of the scatterers is taken to zero, while space and time are simultaneously rescaled in order to ensure the mean free path length and mean free flight time remain finite. The classical mean free path length scales like $r^{1-d}$, and so we define the {\em semiclassical Boltzmann-Grad scaling} of $a\in\scrS(\RR^d\times\RR^d)$ by
\begin{equation}\label{BGscaling}
D_{r,h} a(\vecx,\vecy) =  r^{d(d-1)/2} h^{d/2} \, a( r^{d-1} \vecx, h \vecy),
\end{equation}
where again the normalisation is chosen so that $D_{r,h}$ preserves the inner product \eqref{innerp}. In order to ensure that the mean free flight time remains of constant order as $r \to 0$ we similarly rescale time by a factor of $r^{1-d}$.

We denote by $\Op(a)$ the standard Weyl quantisation of $a\in\scrS(\RR^d\times\RR^d)$:
\begin{equation}\label{def:Opa}
\Op(a) f(\vecx) = \int_{\RR^d\times\RR^d} a(\tfrac12(\vecx+\vecx'),\vecy) \, \e((\vecx-\vecx')\cdot\vecy)\, f(\vecx')\, \d \vecx' \d \vecy
\end{equation}
where $f\in\scrS(\RR^d)$. We furthermore define the corresponding scaled quantisation by $\Op_{r,h}=\Op\circ D_{r,h}$, and set $\Op_h=\Op_{1,h}$.

Throughout this paper we will consider the scaling limit where the quantum wavelength is of the same order as the scattering radius $r$, i.e. $h=h_0 r$ where $h_0$ is a fixed constant. 
By a simple scaling argument, we may assume without loss of generality that $h_0=1$.

\begin{con}\label{conj:allalpha}
There exists a family of linear operators $L(t):\L^1(\RR^d\times\RR^d)\to\L^1(\RR^d\times\RR^d)$ such that

\begin{enumerate}
\item[(i)] for all $a,b\in\scrS(\RR^d\times\RR^d)$, $A=\Op_{r,h}(a)$, $B=\Op_{r,h}(b)$, $\lambda>0$ and $t>0$,
\begin{equation}\label{eq:conj}
\lim_{h= r\to 0} \langle A(t r^{1-d}) , B \rangle_\HiS = \langle L(t) a, b\rangle ,
\end{equation}
\item[(ii)] $L(t) a(\vecx,\vecy)$ is in general {\em not} a solution of the linear Boltzmann equation.
\end{enumerate}
\end{con}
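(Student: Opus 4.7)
The plan is to expand the evolution $U_{h,\lambda}(tr^{1-d})$ in a Dyson series in the coupling constant $\lambda$ and then to analyse each term separately in the combined semiclassical/Boltzmann-Grad limit $h=r\to 0$. In the interaction picture,
\begin{equation}
U_{h,\lambda}(t) = \sum_{n \geq 0} \Bigl(\tfrac{-2\pi\i\lambda}{h}\Bigr)^{\!n} \int_{0 \leq s_1 \leq \cdots \leq s_n \leq t} U_{h,0}(t-s_n)\,\Op(V_r)\,U_{h,0}(s_n-s_{n-1}) \cdots \Op(V_r)\,U_{h,0}(s_1)\,\d s_1 \cdots \d s_n .
\end{equation}
Inserting this into $\langle A(tr^{1-d}), B\rangle_\HiS$ and using the Fourier expansion $V_r(\vecx) = r^d \sum_{\vecm\in\ZZ^d} \widehat W(r\vecm)\,\e(\vecm\cdot\vecx)$, one obtains a double sum over $n,n'$ of iterated lattice sums, each term being an integral of the symbols $D_{r,h}a$ and $\overline{D_{r,h}b}$ against an oscillatory phase built from the free propagators and the momentum shifts $\vecm_k$.

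The first step is to show that, at each fixed order $(n,n')$, the lattice sum converges in the limit $r\to 0$ to an explicit integral concentrated on the momentum shell $\|\vecy'\|=\|\vecy\|$. Since the phases are quadratic in lattice vectors rescaled by $r$, the problem reduces to a multi-variable theta-sum limit of the type treated in \cite{Marklof02,Marklof03}. The equidistribution on $\ASL(d,\ZZ)\backslash\ASL(d,\RR)$ has to be extended to the relevant unbounded test functions, which is possible via the cusp escape-of-mass estimates alluded to in the introduction; the on-shell constraint emerges as the locus on which the $r$-dependent phase is stationary and selects a Born-type amplitude from $\widehat W$. For $n+n' \leq 2$ this yields part (i) of the conjecture perturbatively to second order in $\lambda$, and a direct computation then matches the limit with the second-order term of the linear Boltzmann semigroup generated by the Born collision kernel, giving compatibility to that order---this is what the paper actually proves.

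The main obstacle is to upgrade the term-by-term convergence to a statement about the full series, and still more to establish part (ii). The interchange of the $\lambda$-summation with the $r\to 0$ limit requires a uniform bound on the $n$-th Dyson term; in the random case independence supplies the needed $1/n!$ gain, but in the periodic setting constructive interference between scatterers is expected to spoil this---precisely the mechanism responsible for the conjectural failure of the Boltzmann equation. For part (ii), I would try to isolate a genuinely non-factorising piece of the fourth-order theta limit, corresponding to correlations among four lattice vectors constrained to a common sphere, and show that this contribution is not reproduced by the fourth iterate of the linear Boltzmann collision operator. I do not expect the techniques that handle (i) to deliver (ii); an explicit model calculation, perhaps for a symmetric $W$ in low dimension where the theta limit can be made completely explicit, appears unavoidable.
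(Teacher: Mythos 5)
This statement is a \emph{conjecture}, and the paper itself does not prove it; it proves Theorems~1.1 and~1.2, which establish convergence of the first three terms of the $\lambda$-expansion and give the error estimate $\sum_{n=3}^{6} O(r^{-nd/2}\lambda^n)$. You correctly recognize this status: your sketch is an honest account of what the paper's machinery delivers (the $n\le 2$ terms plus error control) and where the genuine open problems lie (summing the full series and, especially, part~(ii)). Your overall plan for the provable part---Duhamel/Dyson expansion, rescaled lattice sums, a theta-sum limit giving an on-shell Born amplitude---matches the paper's strategy.

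There is, however, a concrete technical error in the identification of the relevant homogeneous space. You invoke ``equidistribution on $\ASL(d,\ZZ)\backslash\ASL(d,\RR)$'', which is the space underlying Marklof--Str\"ombergsson's work on free path lengths and the classical Lorentz gas. The theta-function limits needed here live instead on $\Gamma\backslash G$ with $G=\SL(2,\RR)\ltimes\RR^{2d}$ and $\Gamma$ a finite-index subgroup of $\SL(2,\ZZ)\ltimes(\tfrac12\ZZ)^{2d}$; this is the Jacobi-type group from Marklof's pair-correlation papers \cite{Marklof02,Marklof03}, where the $\SL(2,\RR)$-factor acts through the metaplectic representation and the $\RR^{2d}$-factor encodes the inhomogeneous shift $\vecalf$. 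The dimension of the Euclidean factor grows with the Duhamel order (one passes to $(\Gamma\backslash G)^{n-1}$ for the $n$-th error term), not with $d$. This matters because the escape-of-mass estimates (Propositions~7.2--7.3) are tailored to the cusp geometry of $\SL(2,\ZZ)\backslash\H$ and to the Diophantine condition on the quasi-momentum $\vecalf$, not to the higher-rank cusps of $\ASL(d,\RR)$.

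You also omit the Floquet--Bloch decomposition, which is not merely cosmetic: the paper first proves the result for a fixed Diophantine quasi-momentum $\vecalf$ (Theorem~1.2) and then recovers the unconditional Theorem~1.1 by an $\vecalf$-average requiring separate dominated-convergence estimates (Propositions~11.1--11.2). Working directly on $\L^2(\RR^d)$ without the $\Pi_\vecalf$ projection obscures this dichotomy and the role the Diophantine type plays in controlling the cusp excursion. Finally, your instinct that part~(ii) cannot be reached by the same techniques is consistent with the paper: the authors defer the fourth-order analysis and the expected failure of Boltzmann to the companion paper \cite{GM19}, and your suggested route---identifying a non-factorizing fourth-order theta correlation not reproduced by the iterated collision operator---is indeed the kind of argument that would be needed, but it is not carried out here.
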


Appendix \ref{appendix0} provides an interpretation of $\langle A(t r^{1-d}) , B \rangle_\HiS$ in terms of the phase-space distribution of a solution $f(t,\vecx)$ of the Schr\"odinger equation \eqref{Schrodinger} with initial condition
\begin{equation}
f_0(\vecx) = r^{d(d-1)/2} \phi(r^{d-1} \vecx) \, \e(\vecp\cdot\vecx/h) ,
\end{equation}
for $\phi\in\scrS(\RR^d)$ and $\vecp\in\RR^d$. A schematic drawing of the initial wavepacket $f_0$ is given in Figure \ref{fig1} (shown is the positive real part of $f_0$). 

In the case of random (rather than periodic) scatterer configurations, Eng and Erd\"os \cite{Eng_Erdos} have proved convergence to a limit $L(t) a(\vecx,\vecy)$, which in fact is a solution to the linear Boltzmann equation with the standard quantum mechanical collision kernel 
\begin{align}\label{crosssection}
\colker(\vecy,\vecy') &=  8 \pi^2 \, \delta(\|\vecy\|^2-\|\vecy'\|^2) \, | T(\vecy,\vecy')|^2.
\end{align}
Here $T(\vecy,\vecy')$ is the kernel of the $T$-matrix in momentum representation.
It is related to the quantum scattering cross section by the formula (c.f.~\cite[App.~A]{Nier95})
\begin{align}
\sigma(\vecy,\vecy') = 4\pi^2 \|\vecy\|^{d-3} |T(\vecy,\vecy')|^2 .
\end{align}
%In can then be shown that for a system described by the linear Boltzmann equation whose collisions are governed by quantum mechanics, the collision kernel takes the form (c.f. \cite[Eq. A2- A3]{Nier95})
%\begin{align}
%\colker(\vecy,\vecy') &= \frac{1}{m^2} \, \sigma(\vecy,\vecy') \, \frac{1}{\|\vecy\|^{d-3}} \, \delta\left( \frac{\|\vecy\|^2}{2 m} - \frac{\|\vecy'\|^2}{2m} \right) \\
%&=  2 m \, \frac{2 \pi}{\hbar} \, |T(\vecy,\vecy')|^2 \, \frac{ 1}{(2\pi \hbar)^d}   \, \delta\left( \|\vecy\|^2 - \|\vecy'\|^2 \right).
%\end{align}
%Now we note that this is precisely the quantity given in equation \eqref{crosssection} after taking $\hbar := (2\pi)^{-1}$ and $m=1$. Furthermore, by iterating \eqref{Tmat} one obtains a formal expansion for $T$, and thus $\colker$, in terms of $\lambda$. 
%Iterating \eqref{Tmat} and then expanding $\colker(\vecy,\vecy') \sim \sum_{n=2}^\infty \colker_n(\vecy,\vecy') \, \lambda^n$ we have for the first term 
The Born approximation for the $T$-matrix yields Fermi's golden rule,
\begin{equation}
\colker_2(\vecy,\vecy')= 8 \pi^2 \,\delta(\|\vecy\|^2-\|\vecy'\|^2)  \big|\hat W\big(\vecy-\vecy'\big)\big|^2,
\end{equation}
where $\hat W$ is the Fourier transform of the single-site potential $W$. 

We will use a perturbative approach to provide evidence for Conjecture \ref{conj:allalpha}: The present paper establishes convergence up to second order in the coupling constant $\lambda$, where all terms are consistent with the linear Boltzmann equation. Based on this analysis, we develop in \cite{GM19} a heuristic model for higher order terms some of which do not match the linear Boltzmann equation; this provides support for the second assertion of Conjecture \ref{conj:allalpha}. To formulate the main theorem of the present paper, consider the formal expansion
\begin{equation}\label{LL}
L(t) \sim \sum_{n=0}^\infty L_n(t) \lambda^n,
\end{equation}
and define the linear operators $L_0$, $L_1$ and $L_2$ acting on functions in $\scrS(\RR^d\times\RR^d)$ by
\begin{equation}\label{L0}
L_0(t)a(\vecx,\vecy) = a(\vecx-t\vecy,\vecy), \qquad L_1(t)a(\vecx,\vecy) = 0,
\end{equation}
\begin{align}\label{L2}
L_2(t) a(\vecx,\vecy) 
= \int_0^t  \int_{\RR^{d}}  \colker_2(\vecy,\vecy') [ a( \vecx- s \vecy - (t-s)  \vecy', \vecy') -  a(\vecx-t\vecy, \vecy)] \d \vecy' \d s.
\end{align}
Relations \eqref{LL}--\eqref{L2} are consistent with $L(t)$ generating solutions of the linear Boltzmann equation with $\rho(\vecx)=1$.

Our main result is as follows.

\begin{thm}\label{thm:allalpha}
Let $t>0$ and $a,b\in\scrS(\RR^d\times\RR^d)$, $A=\Op_{r,h}(a)$, $B=\Op_{r,h}(b)$. Then there exist linear operators $A_0^{(r)}(t)$, $A_1^{(r)}(t)$, $A_2^{(r)}(t)$, such that for $h=r\in(0,1]$,
\begin{equation}\label{eq:118}
\langle A(t r^{1-d}) , B \rangle_\HiS = \sum_{n=0}^2 \langle A_n^{(r)}(t r^{1-d}) , B \rangle_\HiS  \,\lambda^n + \sum_{n=3}^6 O(r^{-nd/2} \lambda^n )
\end{equation}
and 
\begin{equation}\label{eq:118b}
\lim_{h= r\to 0} \langle A_n^{(r)}(t r^{1-d}) , B \rangle_\HiS = \langle L_n(t) a, b \rangle   \qquad (n=0,1,2).
\end{equation}
\end{thm}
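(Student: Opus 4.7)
The plan is to perform a Dyson expansion of $U_{h,\lambda}(t)$ in powers of $\lambda$ and isolate the terms of degree at most two. Iterating Duhamel's formula gives $U_{h,\lambda}(t)=\sum_{n=0}^{2}\lambda^{n}U^{(n)}(t)+\lambda^{3}R(t)$, where $U^{(n)}(t)$ is the $n$-fold time-ordered product of free propagators $U_{h,0}(\cdot)$ interspersed with $n$ copies of $\Op(V)$, integrated over the simplex $0\le s_{1}\le\cdots\le s_{n}\le t$ with prefactor $(-2\pi\i/h)^{n}$. Setting $A_{n}^{(r)}(t)=\sum_{k+\ell=n}U^{(k)}(t)\,A\,U^{(\ell)}(t)^{\dagger}$ for $n=0,1,2$ and expanding $A(t)=U_{h,\lambda}(t)\,A\,U_{h,\lambda}(t)^{\dagger}$ produces $\sum_{n=0}^{2}\lambda^{n}A_{n}^{(r)}(t)$ exactly, while the cross and pure terms involving $R(t)$ account for contributions at orders $\lambda^{3},\ldots,\lambda^{6}$, yielding the structure displayed in \eqref{eq:118}.

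For $n=0$ convergence is essentially immediate: since $H_{h,0}$ is quadratic, conjugation by $U_{h,0}(t)$ transports Weyl symbols exactly along the classical free flow, and the Boltzmann-Grad rescaling $D_{r,h}$ is arranged so that after the change of variables $t\mapsto tr^{1-d}$ the pairing reduces to $\langle a(\vecx-t\vecy,\vecy),b\rangle=\langle L_{0}(t)a,b\rangle$ for every $r>0$. For $n=1$, I would use Poisson summation to write $V_{r}(\vecx)=r^{d}\sum_{\vecm\in\ZZ^{d}}\hat W(r\vecm)\,\e(\vecm\cdot\vecx)$ and then compute the Hilbert-Schmidt pairing by Weyl calculus. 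The modes $\vecm\neq 0$ generate rapidly oscillating phases (after conjugation by $U_{h,0}$ the frequency scales like $h^{-1}tr^{1-d}$) against Schwartz integrands and so vanish by non-stationary phase; the $\vecm=0$ contribution from $U^{(1)}A U^{(0)\dagger}$ is anti-Hermitian and therefore cancelled exactly by its Hermitian-conjugate partner $U^{(0)}A U^{(1)\dagger}$, yielding $\lim\langle A_{1}^{(r)}(tr^{1-d}),B\rangle_{\HiS}=0=\langle L_{1}(t)a,b\rangle$.

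The heart of the proof is the second-order term. Expanding both factors of $\Op(V)$ in their Fourier series produces a double sum over $(\vecm_{1},\vecm_{2})\in\ZZ^{d}\times\ZZ^{d}$ of quadratic exponential phases, weighted by $\hat W(r\vecm_{1})\overline{\hat W(r\vecm_{2})}$ and modulated by the Schwartz symbols of $A$ and $B$. The phases arising from the alternation $U_{h,0}\,\Op(V)\,U_{h,0}\,\Op(V)\,U_{h,0}$ are quadratic in the $\vecm_{j}$ with coefficients given by the intermediate times $s_{1},s_{2}-s_{1},t-s_{2}$ and by $\vecy$. Under $h=r$ and $t\mapsto tr^{1-d}$, the resulting sum is recognised as a multivariable theta series of the type studied in \cite{Marklof02,Marklof03}, living on a homogeneous space of $\ASL$-type. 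I would then invoke the corresponding equidistribution theorem, applied against the unbounded test function supplied by $a$, $b$ and the kernel; the on-shell condition $\delta(\|\vecy\|^{2}-\|\vecy'\|^{2})$ and the factor $|\hat W(\vecy-\vecy')|^{2}$ in $\colker_{2}$ emerge as the natural output of this averaging, and the remaining time integration produces the form in \eqref{L2}.

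For the remainders ($n=3,\ldots,6$), operator bounds suffice: $\|U_{h,0}(s)\|=1$, each $\Op(V_{r})$ has operator norm $\|V_{r}\|_{\infty}=O(1)$, the time simplex has volume $(tr^{1-d})^{n}/n!$, and the prefactor $h^{-n}=r^{-n}$ comes from Duhamel. Pairing with $A$, $B$ via $|\Tr(X\,B^{\dagger})|\le\|X\|_{\HiS}\|B\|_{\HiS}$ and the identity $\|\Op_{r,h}(b)\|_{\HiS}=\|b\|_{L^{2}}$ (since $D_{r,h}$ preserves the inner product), combined with an $L^{2}$-type estimate on $V_{r}$ that exploits the $O(r^{d})$-volume support of each bump, delivers the stated $O(r^{-nd/2}\lambda^{n})$ bound. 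The main obstacle is the second-order analysis: identifying the oscillatory double sum with a multivariable theta series, and then integrating it in the $h=r\to 0$ limit against an unbounded phase-space weight, requires the escape-of-mass estimates in the cusp of the relevant homogeneous space that are alluded to in the introduction.
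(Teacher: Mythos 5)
Your overall architecture — Duhamel expansion, free Egorov for order zero, Poisson summation and sign cancellation for order one, identifying the order-two contribution with a theta series and invoking equidistribution against unbounded test functions — does match the paper's strategy. But the error-term argument has a genuine gap, and it is not a small one.

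Your proposed bound on the remainders uses $\|\Op(V_r)\|_{\mathrm{op}}=\|V_r\|_\infty=O(1)$, the simplex volume $(tr^{1-d})^n/n!$, and the Duhamel prefactor $h^{-n}=r^{-n}$. Multiplying these out gives $\lambda^n\, r^{-n}\,(r^{1-d})^n=O(\lambda^n r^{-nd})$, which overshoots the stated $O(\lambda^n r^{-nd/2})$ by a factor of $r^{-nd/2}$ — a catastrophic loss given that $r\to 0$. You try to recover this by invoking ``an $L^2$-type estimate on $V_r$ that exploits the $O(r^d)$-volume support'', but that does not work: $\Op(V_r)$ is a multiplication operator, so its norm on $L^2(\RR^d)$ is exactly $\|V_r\|_\infty=O(1)$, and the smallness of $\|V_r\|_{L^2}$ per cell cannot be inserted into the operator-norm chain without further structure. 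In the paper, the extra factor $r^{nd/2}$ is recovered only by passing from the inner-product pairing to a Hilbert--Schmidt norm $\scrJ_{\ell,n}$ via Cauchy--Schwarz (Lemma~\ref{lem:errortermsCS}), and then computing that HS norm exactly in terms of a higher-order theta function $\Theta^{(n-1)}_{f_*}$ on $(\Gamma\backslash G)^{n-1}$; the savings come from the Poisson summation over the external lattice variables (which introduces $r^{-2d}$ and converts $\hat W$ into the self-convolution $\scrW$), from pulling out the $\det(\vecv)^{d/2}=r^{(n-1)d}$ normalisation of the theta series, and finally from the square root implicit in the HS norm. This step (Sections~\ref{sec:higher-order} and~\ref{sec:error}) is as delicate as the order-two analysis and requires the uniform Diophantine-type bounds of Propositions~\ref{diophprop234} and~\ref{maincorollary22} with the exponent $\beta=1/2$ — none of which is captured by ``operator bounds suffice''.

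A second, structural difference is that the paper does not prove Theorem~\ref{thm:allalpha} directly: it first introduces the Bloch projections $\Pi_\vecalf$, proves the fixed-quasimomentum Theorem~\ref{thm:dioalpha} for Diophantine $\vecalf$ (which is where the theta functions on $\Gamma\backslash G$ actually live), and then integrates over $\vecalf\in[0,1)^d$ using the $\vecalf$-averaged escape-of-mass estimates of Section~\ref{sec:averages}. Since the pointwise convergence in $\vecalf$ is highly non-uniform, this averaging step itself requires a non-trivial dominated-convergence argument (Propositions~\ref{avephprop} and~\ref{aveprop234}). Your proposal bypasses the Bloch decomposition entirely, but then you would still need to justify exchanging the implicit $\vecalf$-average (hidden inside the Hilbert--Schmidt trace over $\RR^d$) with the $r\to 0$ limit — and at that point you would be forced to reinvent precisely the $\vecalf$-decomposition machinery the paper uses.
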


The notation $f(x)=O(g(x))$ means ``there is a positive constant $C$ such that $|f(x)| \leq C \, |g(x)|$.''  We will also use $f(x) \ll g(x)$ synonymously, and subscript $O_\epsilon$ or $\ll_\epsilon$ to highlight the dependence of the implied constant $C=C_\epsilon$ on a parameter $\epsilon$.

The key point here is to view the first sum on the right hand side of \eqref{eq:118} as the first three terms of a formal power series expansion in $\lambda$, which according to \eqref{eq:118b} each converge to the corresponding terms of the conjectured limit \eqref{LL}. The second sum in \eqref{eq:118} provides a error estimate that allows an interpretation beyond a formal power series, but this is only of secondary interest.

We will actually prove a stronger result than Theorem \ref{thm:allalpha}. For a given quasi-momentum $\vecalf\in[0,1)^d$, consider the Bloch functions $\varphi_\vecm^\vecalf(\vecx)=\e((\vecm+\vecalf)\cdot\vecx)$, $\vecm\in\ZZ^d$,
and define the projection $\Pi_\vecalf$ acting on $f\in\scrS(\RR^d)$ by
\begin{equation}\label{Pi-def}
\Pi_\vecalf f(\vecx) = \sum_{\vecm\in\ZZ^d} \langle f, \varphi_\vecm^\vecalf\rangle \;  \varphi_\vecm^\vecalf(\vecx) ,
\end{equation}
with inner product
\begin{equation}
\langle f, g\rangle = \int_{\RR^d} f(\vecx)\, \overline{g(\vecx)}\, \d \vecx .
\end{equation}
Note that, by Poisson summation,
\begin{equation} \label{eq:Frep}
\Pi_\vecalf f(\vecx) = \sum_{\vecm\in\ZZ^d} \e(-\vecm\cdot\vecalf) f(\vecx+\vecm),
\end{equation}
and hence that by integrating over $\vecalf \in [0,1)^d$ one regains $f(\vecx)$. We will refer to $\Pi_\vecalf$ as a {\em Bloch projection} and $\vecalf$ as a {\em Bloch vector} or {\em quasi-momentum}. Instead of \eqref{eq:118} we consider now
\begin{align} \label{blochprojnorm}
\langle \Pi_\vecalf A(t r^{1-d}),B \rangle_{\HiS}.
\end{align}
As we will see, the behaviour of \eqref{blochprojnorm} in the limit $ h= r\to 0$ depends on the number theoretic properties of $\vecalf$. We  call a vector $\vecalf=(\alpha_1,\ldots,\alpha_d)\in\RR^d$ 
{\em Diophantine of type $\kappa$}, if there exists a 
constant $C>0$ such that 
\begin{equation}
\max_j \bigg|\alpha_j - \frac{m_j}{q}\bigg| > \frac{C}{q^\kappa}
\end{equation}
for all $m_1,\ldots,m_d,q\in\ZZ$, $q>0$. The smallest possible value
for $\kappa$ is $\kappa=1+\frac1d$. In this case $\vecalf$
is called {\em badly approximable}.

\begin{thm}\label{thm:dioalpha}
Suppose $\vecalf$ is Diophantine of type $\kappa < (d-1)/(d-2)$ and the components of $(1,\trans\vecalf)$ are linearly independent over $\QQ$. Let $t>0$ and $a,b\in\scrS(\RR^d\times\RR^d)$, $A=\Op_{r,h}(a)$, $B=\Op_{r,h}(b)$.  Then there exist linear operators $A_0^{(r,\vecalf)}(t)$, $A_1^{(r,\vecalf)}(t)$, $A_2^{(r,\vecalf)}(t)$, such that for $h=r\in(0,1]$,
\begin{equation}\label{eq:dioalpha001}
\langle \Pi_\vecalf A(t r^{1-d}) , B \rangle_\HiS = \sum_{n=0}^2 \langle A_n^{(r,\vecalf)}(t r^{1-d}) , B \rangle_\HiS  \,\lambda^n + \sum_{n=3}^6 O(r^{-nd/2} \lambda^n ) 
\end{equation}
and
\begin{equation} \label{thmconv}
\lim_{h=r\to 0} \langle A_n^{(r,\vecalf)}(t r^{1-d}) , B \rangle_\HiS = \langle L_n(t) a, b \rangle   \qquad (n=0,1,2) .
\end{equation}
\end{thm}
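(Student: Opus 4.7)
The starting point is the Duhamel expansion of the propagator
\[
U_{h,\lambda}(t) = \sum_{n=0}^\infty \bigl(-2\pi\i\lambda/h\bigr)^n \int_{\Delta_n(t)} U_{h,0}(t-s_1)\,\Op(V)\,U_{h,0}(s_1-s_2)\cdots\Op(V)\,U_{h,0}(s_n)\,\d\vecs ,
\]
with $\Delta_n(t)$ the time-ordered simplex. Substituting into $A(tr^{1-d}) = U_{h,\lambda}(tr^{1-d})\,A\,U_{h,\lambda}(tr^{1-d})^{-1}$ defines the operators $A_n^{(r,\vecalf)}(tr^{1-d})$ as the $n$-th order terms in $\lambda$. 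Since $V$ is $\ZZ^d$-periodic, each $\Op(V)$ preserves Bloch fibres, and $U_{h,0}$ is diagonal in the plane-wave basis $\varphi_\vecm^\vecalf$ with eigenvalue $\e(-h\|\vecm+\vecalf\|^2 t/2)$; hence $\langle \Pi_\vecalf A_n^{(r,\vecalf)}, B\rangle_{\HiS}$ can be evaluated on the single quasi-momentum fibre $\vecalf$.

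\textbf{Fourier representation and theta-type sums.} Expanding $V(\vecx) = r^d \sum_{\veck\in\ZZ^d} \hat W(r\veck)\,\e(\veck\cdot\vecx)$, the $n$-th Hilbert--Schmidt inner product becomes a sum over $\vecm,\veck_1,\ldots,\veck_n\in\ZZ^d$ of a product $\prod_j r^d \hat W(r\veck_j)$, a Schwartz factor built from $a$ and $b$ evaluated at momenta $r(\vecm+\vecalf+\veck_1+\cdots+\veck_j)$, and an oscillatory factor whose phases involve differences of $\|\vecm+\vecalf+\veck_1+\cdots+\veck_j\|^2$ along the time simplex. Setting $h=r$ and rescaling time by $r^{1-d}$, these phases take the form
\[
\exp\!\Bigl(\pi\i r^{1-d} s_j \bigl(\|\vecm+\vecalf\|^2-\|\vecm+\vecalf+\veck_1+\cdots+\veck_j\|^2\bigr)\Bigr),
\]
which is precisely the multi-variable theta-series regime treated in \cite{Marklof02,Marklof03}.

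\textbf{Homogeneous-flow limit and identification of $L_n$.} I would interpret the lattice sums as test-function values along orbits of a diagonal subgroup on a homogeneous space of the type $\SL(n+1,\ZZ)\backslash\SL(n+1,\RR)$ (or an affine extension encoding $\vecalf$) and invoke Ratner-type equidistribution to obtain the Haar limit. For $n=0$ the sum is trivial and yields free transport $L_0$. For $n=1$ the limit contains the factor $\hat W(\vecxi)\delta(\|\vecy\|^2-\|\vecy+\vecxi\|^2)$, and a change of variable $\vecxi \mapsto -\vecxi$ combined with the symmetric time-integral structure produces $L_1=0$. For $n=2$, writing $\vecxi_1 = \vecy'-\vecy$ and $\vecxi_2 = \vecy-\vecy'$, the equidistribution limit reproduces the Fermi-golden-rule kernel $\colker_2(\vecy,\vecy')=8\pi^2 \delta(\|\vecy\|^2-\|\vecy'\|^2)|\hat W(\vecy-\vecy')|^2$ and the broken trajectory $\vecx-s\vecy-(t-s)\vecy'$ of \eqref{L2}. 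The hypotheses on $\vecalf$ enter precisely here: $\QQ$-linear independence of $(1,\trans\vecalf)$ guarantees that the relevant orbit is dense and equidistributes, while the Diophantine type assumption bounds the escape of mass into the cusp.

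\textbf{Main obstacle and higher-order control.} The principal difficulty is that the relevant test functions on $\SL(n+1,\ZZ)\backslash\SL(n+1,\RR)$ are unbounded: Schwartz decay in the macroscopic momentum $\vecy=r(\vecm+\vecalf)$ does not translate into decay in the cusp coordinate, because small values of $\|\vecm+\vecalf\|$ push the orbit towards infinity. Ineffective equidistribution is therefore insufficient on its own, and one must combine it with a quantitative tail estimate whose rate is controlled by the Diophantine type of $\vecalf$; the bound $\kappa<(d-1)/(d-2)$ is the sharp threshold making the cusp contribution summable. For the remainders $O(r^{-nd/2}\lambda^n)$ with $n=3,\ldots,6$, I would dispense with equidistribution and instead apply the triangle inequality to each Dyson term, using unitarity of $U_{h,0}$, the elementary operator-norm bound $\|\Op(V)\|\ll 1$, and the Boltzmann-Grad rescaling $D_{r,h}$ of \eqref{BGscaling}, which accounts for the polynomial prefactor $r^{-nd/2}$ per order.
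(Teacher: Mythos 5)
Your high-level strategy (Duhamel expansion, reduction to Bloch fibres, rewriting the lattice sums as theta series, and invoking equidistribution of homogeneous flows, with the Diophantine hypothesis controlling cusp excursions) matches the paper's architecture, but there are two genuine gaps that make your proposal fall short of the stated theorem. First, the homogeneous space is not $\SL(n+1,\ZZ)\backslash\SL(n+1,\RR)$: the relevant dynamics live on $\Gamma\backslash G$ with $G=\SL(2,\RR)\ltimes\RR^{2d}$ (and, for higher orders, products $(\Gamma\backslash G)^k$). The theta function $\Theta_f$ of \eqref{def:theta} is a Jacobi-type theta series invariant under a finite-index subgroup of $\SL(2,\ZZ)\ltimes(\tfrac12\ZZ)^{2d}$, and the equidistribution statement that is actually used is for a long closed horocycle $\{(u+\i r^2,0,(\begin{smallmatrix}\vecnull\\\vecy\end{smallmatrix}))\}$ on this space (Proposition \ref{equim}), not for a diagonal orbit on an $\SL(n+1)$ quotient. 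The Diophantine condition on $\vecalf$ enters through Propositions \ref{diophprop} and \ref{diophprop234}, which bound the contribution of the dominating function $\Psi_{R,f}^\beta$ in the cusp of this particular space; your formulation doesn't connect to these estimates.

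Second, and more seriously, your treatment of the remainders $n=3,\dots,6$ would not give the claimed bound. If you "dispense with equidistribution" and only use unitarity plus $\|\Op(V)\|\ll 1$, the bound you get on $\scrJ_{\ell,n}(hr^{1-d}t,D_{r,h}a)$ is controlled by the volume of the time simplex, namely $O\bigl((r^{2-d})^n\bigr)=O(r^{2n-nd})$; after the coupling rescaling $\lambda\mapsto\lambda/h^2$ this yields $O(r^{-nd}\lambda^n)$, which is weaker than the stated $O(r^{-nd/2}\lambda^n)$ by a factor $r^{nd/2}$. The paper needs the full equidistribution machinery for the error terms as well: after a Cauchy--Schwarz step (Lemma \ref{lem:errortermsCS}) each $\scrJ_{\ell,n}$ is rewritten via higher-order theta functions $\Theta_f^{(n-1)}$ on $(\Gamma\backslash G)^{n-1}$ and bounded through Proposition \ref{maincorollary22} — which is exactly where the Diophantine hypothesis on $\vecalf$ is used again — giving the improved $\scrJ_{\ell,n}=O(r^{-nd/2+2n})$ of Lemma \ref{lem:scrJbound}. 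In short, the error terms are not elementary: they are the place where the extra $r^{nd/2}$ gain is extracted, and your naive estimate does not recover it. A smaller point: your mechanism for $L_1=0$ (a $\delta$-function and a sign flip in $\vecxi$) is not what happens; at first order both $\scrI_{0,1}$ and $\scrI_{1,1}$ reduce by Poisson summation to the same nonzero constant $r^d\hat W(\vecnull)\int \tilde a\tilde b$ and cancel because they carry opposite signs $(-1)^\ell$ in \eqref{theexpansion} — there is no energy delta at first order.
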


Since the set of Diophantine $\vecalf\in[0,1)^d$ has full Lebesgue measure, Theorem \ref{thm:allalpha} may be viewed as an averaged (and thus weaker) version of Theorem \ref{thm:dioalpha}. The convergence in \eqref{thmconv} is however highly non-uniform in $\vecalf$, and the derivation of Theorem \ref{thm:allalpha} from Theorem \ref{thm:dioalpha} requires non-trivial dominated convergence estimates.

In his PhD thesis \cite{Griffin-thesis}, the first author established a version of Theorem \ref{thm:dioalpha} for the small-scatter problem on the torus $\TT^d=\RR^d/\ZZ^d$ with quasi-periodic boundary conditions $f(\vecx+\vecm)=e(\vecm\cdot\vecalf) f(\vecx)$ ($\vecm\in\ZZ^d$), for observables that do not depend on position $\vecx$. This in particular complements results in \cite{Castella_WC,Castella_Plagne_LD} where $\vecalf=\vecnull$, and furthermore provides a discussion of the expansion terms leading to a failure of the linear Boltzmann equation. The key observation in \cite{Castella_WC,Castella_Plagne_LD} is that due to the large mean degeneracy of the spectrum of the Laplacian on the torus $\TT^d$, the semiclassical Boltzmann-Grad limit diverges; a different normalisation then yields a non-universal limit, which in particular is not consistent with the linear Boltzmann equation. It is interesting to note that adding a suitably chosen damping term allows one to recover the linear Boltzmann equation even in this singular case \cite{Castella_LD,Castella_LD2}. The small-scatterer problem in rectangular domains (Sinai billiards) has also been investigated in the context of quantum chaos; here the smooth potential is replaced by a disc with Dirichlet boundary conditions \cite{Berry81,Dahlqvist98}. 

This paper is organised as follows.
Sections \ref{sec:momentum} and \ref{sec:bloch} provide basic background and notation on Weyl calculus in momentum representation and Floquet-Bloch theory. Section \ref{sec:duhamel} uses the Duhamel principle to obtain a perturbation series in $\lambda$. We then apply the Boltzmann-Grad scaling in Section \ref{sec:boltzmann-grad}. The zeroth and first order terms are elementary, and are calculated in Section \ref{sec:0and1}. Terms of second order require equidistribution results for horocycles (Section \ref{sec:equidistribution}) and mean value theorems for theta functions (Section \ref{sec:theta}), which build on the papers \cite{Marklof02,Marklof03}. The second order terms are computed in Section \ref{sec:two}. The estimates of the error term in Theorem \ref{thm:dioalpha} require analogous results for higher-dimensional theta functions (Section \ref{sec:higher-order}), and are presented in Sections \ref{sec:error}. The proof of Theorem \ref{thm:dioalpha} is given at the end of Section \ref{sec:error}. Section \ref{sec:averages} concludes with the proof of Theorem \ref{thm:allalpha}.

\subsection*{Acknowledgements}
We thank Laszlo Erd\"os and Leonid Parnovski for helpful discussions,  and the anonymous referee for many valuable comments. We are grateful to the Isaac Newton Institute, Cambridge, for its support and hospitality during the programme ``Periodic and Ergodic Spectral Problems.'' The research leading to these results has received funding from the European Research Council under the European Union's Seventh Framework Programme (FP/2007-2013) / ERC Grant Agreement n. 291147.

\section{Momentum representation}\label{sec:momentum}

We have so far represented quantum wave amplitudes $f$ in the position representation. It will in fact be more convenient to work with its Fourier transform $\hat f$, which represents the wave amplitude as a function of the quantum particle's momentum. Set $\e(x)=\exp(2\pi\i x)$, and define the Fourier transform $\hat f= \scrF f$ of $f$ by
\begin{equation}\label{FT}
\hat f(\vecy)= \scrF f(\vecy) =  \int_{\RR^d} \e(-\vecy\cdot\vecx)\, f(\vecx)\,\d \vecx .
\end{equation}
The Fourier transform of a linear operator $A$ on $\L^2(\RR^d)$ is then naturally defined by
\begin{equation}
\hat A = \scrF A \scrF^{-1} .
\end{equation}
Explicitly, the corresponding Schwartz kernel satisfies
\begin{equation} \label{Skern}
\hat A(\vecy,\vecy') = \int_{\RR^{2d}} A(\vecx,\vecx') \e(-\vecx\cdot\vecy + \vecx'\cdot\vecy') \, \d \vecx\,\d \vecx' .
\end{equation}
The Schwartz kernel of the Fourier transform of $\Op(a)$ reads
\begin{equation} \begin{split}\label{hatOpa}
\hOp(a) (\vecy,\vecy') & = \int_{\RR^d} a(\vecx,\tfrac12(\vecy+\vecy')) \, \e(-\vecx\cdot(\vecy-\vecy'))\, \d \vecx \\
& = \tilde a(\vecy-\vecy',\tfrac12(\vecy+\vecy')) ,
\end{split}\end{equation}
where $\tilde a$ denotes the Fourier transform of $a$ in the first variable only, i.e. 
\begin{equation}
\tilde a(\veceta,\vecy) = \int_{\RR^d} a(\vecx,\vecy) \e(-\vecx \cdot \veceta) \, \d \vecx.
\end{equation}
The above definition extends to larger function spaces by standard arguments \cite{Folland}. Two notable special cases occur when $a$ is a function exclusively of either $\vecx$ or $\vecy$. In the first case when $a = a(\vecx)$ we have 
$\hOp(a) (\vecy,\vecy') 
%= \int_{\RR^d\times\RR^d} a(\vecx) \, \e(\vecx\cdot(\vecy-\vecy'))\, f(\vecy')\, \d \vecx \d \vecy' 
= \hat a(\vecy-\vecy')$,
and in the second case when $a = a(\vecy)$ we obtain
$\hOp(a) (\vecy,\vecy') = a(\vecy)\, \delta_\vecnull(\vecy-\vecy')$.
The choice $a=L_0(t) V$ in \eqref{hatOpa} yields for instance
\begin{equation}
\hOp(L_0(t) V)(\vecy,\vecy')=r^d \sum_{\vecm\in\ZZ^d} \hat W(r\vecm)
\e(- \tfrac12 t \vecm\cdot (\vecy +\vecy')) \, \delta_\vecm(\vecy-\vecy') ,
\end{equation}
where $\delta_\vecm$ denotes the Dirac delta mass at the point $\vecm$.

The quantizations of the Hamilton functions $H_0^\cl$ and $H_\lambda^\cl$ are denoted by $H_0=\Op H_0^\cl=-\frac{1}{8\pi^2}\Delta$ and $H_\lambda=\Op H_\lambda^\cl= H_0+\lambda \Op V$ respectively. The Schr\"odinger equation for the time evolution of the the wave amplitude $f(t,\vecx)$ can then be written (in units where Planck's constant is $1$)
\begin{equation}
\tfrac{\i}{2\pi}\partial_t f(t,\vecx)= H_\lambda f(t,\vecx),\qquad f(0,\vecx)=f_0(\vecx),
\end{equation}
which has the solution 
\begin{equation}
f(t,\vecx) = U_\lambda(t) f_0(\vecx), \qquad U_\lambda(t):= \e(- H_\lambda t) .
\end{equation}
The relation to the corresponding operators in the introduction is
\begin{equation} \label{lambdaisrescaled}
H_{h,\lambda}= h^2 H_{\lambda/h^2}, \qquad
U_{h,\lambda}(t)= U_{\lambda/h^2}(h t) .
\end{equation}
It will be more convenient to work with $U_\lambda(t)$ in what follows, and then later appeal to \eqref{lambdaisrescaled}.

Since $H_0^\cl$ is a quadratic polynomial,  we have the exact Egorov property,
\begin{equation}\label{egorov}
U_0(t) \Op(a) U_0(-t) = \Op (L_0(t) a) .
\end{equation}
In momentum representation the kernel of the operator $\hat H_0$ takes the form
\begin{equation}
\hat H_0(\vecy,\vecy')=\tfrac12 \|\vecy\|^2 \delta_\vecnull(\vecy-\vecy')
\end{equation}
and thus also
\begin{equation}
\hat U_0 (t)(\vecy,\vecy')=\e(-\tfrac12 \, t\|\vecy\|^2) \delta_\vecnull(\vecy-\vecy') .
\end{equation}

\section{Bloch projections}\label{sec:bloch}

As is standard in the study of periodic potentials, we use the fact that any solution to our Schr\"odinger equation can be decomposed into quasiperiodic functions parametrised by quasimomentum $\vecalf \in \TT^d=\RR^d/\ZZ^d$ (Floquet-Bloch decomposition). For $f\in\scrS(\RR^d)$ the function $\psi(\vecx)=\Pi_\vecalf f(\vecx)$ satisfies, for every $\veck\in\ZZ^d$, 
\begin{equation}\label{quasip}
\psi(\vecx+\veck) = \e(\veck\cdot\vecalf) \psi(\vecx) .
\end{equation}
We denote by $\scrH_\vecalf$ the Hilbert space of functions that satisfy the quasiperiodicity condition \eqref{quasip} and have finite $\L^2$-norm with respect to the inner product
\begin{equation}
\langle \psi, \varphi  \rangle_\vecalf = \int_{\TT^d} \psi(\vecx)\, \overline{\varphi(\vecx)}\, \d \vecx .
\end{equation}
We define the corresponding Hilbert-Schmidt product for linear operators from $\L^2(\RR^d)$ to $\scrH_\vecalf$ by

\begin{equation}
\langle  A,  B  \rangle_{\HiS,\vecalf} = \Tr  A B^\dagger = \int_{\TT^d} \bigg(\int_{\RR^d} A(\vecx,\vecx') \overline{B(\vecx,\vecx')} \, \d \vecx' \bigg)\, \d \vecx.
\end{equation}

\begin{lem}\label{three-one}
If $f,g\in\scrS(\RR^d)$, then $\Pi_\vecalf f,\Pi_\vecalf g\in\scrH_\vecalf\cap\C^\infty(\RR^d)$ and
\begin{equation}
\langle \Pi_\vecalf f, g \rangle=\langle f,  \Pi_\vecalf g \rangle=\langle \Pi_\vecalf f,  \Pi_\vecalf g \rangle_\vecalf 
= \sum_{\vecm\in\ZZ^d} \hat f(\vecm+\vecalf) \overline{\hat g(\vecm+\vecalf)}.
\end{equation}
\end{lem}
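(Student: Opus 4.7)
The plan is to establish (a) quasi-periodicity and smoothness of $\Pi_\vecalf f$ and $\Pi_\vecalf g$, (b) identify the Fourier series of $\Pi_\vecalf f$ on the quasi-periodic torus, and (c) read off the three identities via Parseval on $\TT^d$ and a direct pairing on $\RR^d$.

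First I would use the position-space Poisson representation \eqref{eq:Frep} for both claims at once. A shift of the summation index $\vecm\mapsto\vecm-\veck$ in $\sum_{\vecm}\e(-\vecm\cdot\vecalf)f(\vecx+\veck+\vecm)$ pulls out the factor $\e(\veck\cdot\vecalf)$, giving \eqref{quasip}. Since $f\in\scrS(\RR^d)$ and all its derivatives lie in $\scrS(\RR^d)$, the series in \eqref{eq:Frep} and each of its formal derivatives converge absolutely and uniformly on compact sets, so $\Pi_\vecalf f\in C^\infty(\RR^d)$; uniform convergence on the fundamental domain $[0,1)^d$ then yields $\Pi_\vecalf f\in\scrH_\vecalf$, and similarly for $g$.

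Next I would compute the Fourier coefficients of the $\ZZ^d$-periodic function $u(\vecx):=\e(-\vecalf\cdot\vecx)\Pi_\vecalf f(\vecx)$ by the standard unfolding trick: combining the sum over $\veck\in\ZZ^d$ in \eqref{eq:Frep} with the integral over $\TT^d$ through the substitution $\vecy=\vecx+\veck$ turns the exponential factors into $\e(-(\vecm+\vecalf)\cdot\vecy)$ and one obtains
\begin{equation}
\int_{\TT^d} u(\vecx)\,\e(-\vecm\cdot\vecx)\,\d\vecx \;=\; \int_{\RR^d} f(\vecy)\,\e(-(\vecm+\vecalf)\cdot\vecy)\,\d\vecy \;=\; \hat f(\vecm+\vecalf).
\end{equation}
This recovers the $\scrH_\vecalf$-Fourier expansion $\Pi_\vecalf f=\sum_{\vecm\in\ZZ^d}\hat f(\vecm+\vecalf)\,\varphi_\vecm^\vecalf$, consistent with the original definition \eqref{Pi-def}, and the analogous expansion for $\Pi_\vecalf g$.

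Finally, the three identities follow by pairing these Fourier series. For $\langle\Pi_\vecalf f,\Pi_\vecalf g\rangle_\vecalf$ the phases $\e(\pm\vecalf\cdot\vecx)$ cancel in $\Pi_\vecalf f\cdot\overline{\Pi_\vecalf g}$ and Parseval on $\TT^d$ gives $\sum_\vecm \hat f(\vecm+\vecalf)\overline{\hat g(\vecm+\vecalf)}$. For $\langle\Pi_\vecalf f,g\rangle$ I would expand $\Pi_\vecalf f$ in its $\scrH_\vecalf$-Fourier series and pair term-by-term with $g$ on $\RR^d$, using that $\langle\varphi_\vecm^\vecalf,g\rangle=\overline{\hat g(\vecm+\vecalf)}$ by the very definition of the Fourier transform; the interchange of sum and integral is justified by Cauchy--Schwarz together with $\hat f,\hat g\in\scrS(\RR^d)$, which makes $\sum_\vecm|\hat f(\vecm+\vecalf)|\,|\hat g(\vecm+\vecalf)|$ finite. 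The identity $\langle f,\Pi_\vecalf g\rangle=\langle\Pi_\vecalf f,g\rangle$ then follows either by conjugating the previous equality (swapping the roles of $f$ and $g$) or, equivalently, by a Fubini argument directly from \eqref{eq:Frep}. There is no real obstacle here: the sole technical point is justifying the exchange of sum and integral, which is immediate from Schwartz decay.
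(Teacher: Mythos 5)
Your proof is correct and follows essentially the same route as the paper: both hinge on the Poisson-summation representation \eqref{eq:Frep}, the quasi-periodicity \eqref{quasip}, and the observation that the $\vecm$-th Fourier coefficient of $\Pi_\vecalf f$ relative to the basis $\varphi_\vecm^\vecalf$ is $\hat f(\vecm+\vecalf)$. The paper is a bit more compact (it folds $\langle \Pi_\vecalf f,\Pi_\vecalf g\rangle_\vecalf$ directly into $\langle \Pi_\vecalf f, g\rangle$ and then invokes \eqref{Pi-def}, rather than first extracting the full $\scrH_\vecalf$-Fourier expansion as you do), and the interchange of sum and integral in your third paragraph is most cleanly justified by Fubini using $\sum_\vecm|\hat f(\vecm+\vecalf)|<\infty$ and $g\in\L^1$ rather than by Cauchy--Schwarz, but these are cosmetic differences.
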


\begin{proof}
We have by \eqref{eq:Frep}
\begin{equation}\label{zerhs}
\langle \Pi_\vecalf f,  \Pi_\vecalf g \rangle_\vecalf
= \sum_{\vecm\in\ZZ^d} \e(\vecm\cdot\vecalf) \int_{\TT^d} (\Pi_\vecalf f)(\vecx)\, \overline{g(\vecx+\vecm)}\, \d \vecx .
\end{equation}
Using the invariance \eqref{quasip} of $\Pi_\vecalf f$, we see that the summation and integration can be combined to an integral over $\RR^d$ which equals $\langle \Pi_\vecalf f, g \rangle$. The final identity follows directly from the definition \eqref{Pi-def}, which yields
\begin{equation}
\langle \Pi_\vecalf f, g \rangle = \sum_{\vecm\in\ZZ^d} \langle f, \varphi_\vecm^\vecalf\rangle \;  \langle \varphi_\vecm^\vecalf, g \rangle = \sum_{\vecm\in\ZZ^d} \hat f(\vecm+\vecalf) \overline{\hat g(\vecm+\vecalf)}.
\end{equation}
\end{proof}

Note that for the Fourier transform,
\begin{equation}\label{FTse}
\hat\Pi_\vecalf f(\vecy) = \sum_{\vecm\in\ZZ^d} f(\vecm+\vecalf)\,\delta_{\vecm+\vecalf}(\vecy).
\end{equation}

\begin{lem} \label{lem:projectionHSnorm}
If $A,B$ have Schwartz kernel in $\scrS(\RR^d\times\RR^d)$, then $\Pi_\vecalf A$, $\Pi_\vecalf B$ are linear operators $\L^2(\RR^d)\to\scrH_\vecalf$, and
\begin{equation}
\begin{split}
\langle \Pi_\vecalf A, B \rangle_\HiS & =\langle A,  \Pi_\vecalf B \rangle_\HiS =\langle \Pi_\vecalf A,  \Pi_\vecalf B \rangle_{\HiS,\vecalf} \\
& = \sum_{\vecm\in\ZZ^d} \int_{\RR^d} \hat A(\vecm+\vecalf,\vecy) \overline{\hat B(\vecm+\vecalf,\vecy)} \, \d \vecy.
\end{split}
\end{equation}
\end{lem}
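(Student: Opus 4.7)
The plan is to reduce the statement to Lemma \ref{three-one} by slicing the Schwartz kernels along their second argument. For fixed $\vecx'$, the function $f_{\vecx'}(\vecx):=A(\vecx,\vecx')$ belongs to $\scrS(\RR^d)$, and the kernel of $\Pi_\vecalf A$ is obtained by applying $\Pi_\vecalf$ in the first variable: $(\Pi_\vecalf A)(\vecx,\vecx') = (\Pi_\vecalf f_{\vecx'})(\vecx)$. To verify the target space, I would first check that $A$ maps $\L^2(\RR^d)$ into $\scrS(\RR^d)$: the Schwartz seminorms of $Ag$ are controlled via Cauchy--Schwarz by $\|g\|_{\L^2}$ times seminorms of the kernel, so $Ag\in\scrS(\RR^d)$, and then $\Pi_\vecalf(Ag)\in\scrH_\vecalf$ by Lemma \ref{three-one}.

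For the three Hilbert--Schmidt equalities, I would fix $\vecx'$ and apply Lemma \ref{three-one} with $f(\vecx)=A(\vecx,\vecx')$ and $g(\vecx)=B(\vecx,\vecx')$, obtaining
\begin{equation*}
\int_{\RR^d} (\Pi_\vecalf A)(\vecx,\vecx') \overline{B(\vecx,\vecx')}\,\d\vecx = \int_{\RR^d} A(\vecx,\vecx') \overline{(\Pi_\vecalf B)(\vecx,\vecx')}\,\d\vecx = \int_{\TT^d} (\Pi_\vecalf A)(\vecx,\vecx') \overline{(\Pi_\vecalf B)(\vecx,\vecx')}\,\d\vecx
\end{equation*}
for each $\vecx'$. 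Integrating over $\vecx'\in\RR^d$, with Fubini justified by the Schwartz decay of the kernels, then produces the three Hilbert--Schmidt pairings $\langle\Pi_\vecalf A,B\rangle_\HiS$, $\langle A,\Pi_\vecalf B\rangle_\HiS$ and $\langle\Pi_\vecalf A,\Pi_\vecalf B\rangle_{\HiS,\vecalf}$, respectively.

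For the final Fourier expression, I would again use the last identity of Lemma \ref{three-one} slicewise to write
\begin{equation*}
\int_{\RR^d} A(\vecx,\vecx') \overline{(\Pi_\vecalf B)(\vecx,\vecx')}\,\d\vecx = \sum_{\vecm\in\ZZ^d} \tilde A(\vecm+\vecalf,\vecx') \, \overline{\tilde B(\vecm+\vecalf,\vecx')},
\end{equation*}
where $\tilde A(\veceta,\vecx')$ denotes the Fourier transform of $A$ in its first argument. After integrating over $\vecx'$ and exchanging sum and integral (again by Schwartz decay), Plancherel in the second variable converts each summand into $\int_{\RR^d} \hat A(\vecm+\vecalf,\vecy) \overline{\hat B(\vecm+\vecalf,\vecy)}\,\d\vecy$. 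The only point requiring care is the asymmetric sign convention in \eqref{Skern}: the Fourier transform in the second variable of the kernel uses $\e(+\vecx'\cdot\vecy')$, so this is really an inverse Fourier transform; Plancherel applies equally in that case and produces no extra factors. I do not expect any genuine obstacle beyond this Fubini/Plancherel bookkeeping.
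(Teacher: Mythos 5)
Your proof is correct, and a shade cleaner than the paper's. The paper states the proof is ``analogous to the proof of Lemma \ref{three-one}'' and then re-runs that argument at the kernel level: it writes $[\Pi_\vecalf B](\vecx,\vecx')=\sum_{\vecm}\e(-\vecm\cdot\vecalf)B(\vecx+\vecm,\vecx')$, unfolds the $\TT^d$-integral to $\RR^d$ using the quasi-periodicity of $\Pi_\vecalf A$ in its first argument, and establishes the Fourier identity separately by computing the momentum-space kernel $\widehat{\Pi_\vecalf A}(\vecy,\vecy')=\sum_{\vecm}\delta_{\vecm+\vecalf}(\vecy)\hat A(\vecm+\vecalf,\vecy')$ from \eqref{Skern} and \eqref{FTse}. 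You instead invoke Lemma \ref{three-one} verbatim on the $\vecx'$-slices $f_{\vecx'}=A(\cdot,\vecx')$, $g_{\vecx'}=B(\cdot,\vecx')$, so that the three Hilbert--Schmidt pairings drop out simultaneously after Fubini, and the Fourier formula follows from the last identity of Lemma \ref{three-one} plus one more Plancherel step in $\vecx'$. Your aside about \eqref{Skern} is exactly the right worry to have raised: the kernel transform is an \emph{inverse} Fourier transform in the second argument, and Plancherel is indifferent to that, so no extra constants appear. The two proofs rest on the same facts (quasi-periodicity of $\Pi_\vecalf$ in the first variable, Poisson summation, Plancherel), so this is a reorganization rather than a new idea, but reducing to the earlier lemma by slicing avoids re-deriving its argument and keeps the four equalities under one roof.
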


\begin{proof}
This is analogous to the proof of Lemma \ref{three-one}.
By \eqref{Pi-def}, we have
\begin{equation} 
[\Pi_\vecalf B] (\vecx,\vecx') = \sum_{\vecm\in\ZZ^d} \e(-\vecm\cdot\vecalf) B(\vecx+\vecm, \vecx'),
\end{equation}
and so 
\begin{equation} 
\begin{split}
\langle \Pi_\vecalf A , \Pi_\vecalf B \rangle_{\HiS,\vecalf} 
& = \sum_{\vecm\in\ZZ^d} \e(\vecm\cdot\vecalf) \int_{\TT^d} \bigg( \int_{\RR^d} [\Pi_\vecalf A](\vecx, \vecx') \,  \overline{B(\vecx+\vecm, \vecx')} \d\vecx' \bigg) \d\vecx \\
& = \int_{\RR^d} \bigg( \int_{\RR^d} [\Pi_\vecalf A](\vecx, \vecx') \,  \overline{B(\vecx, \vecx')} \d\vecx' \bigg) \d\vecx 
= \langle \Pi_\vecalf A, B \rangle_\HiS,
\end{split}
\end{equation}
where we have used the identity $[\Pi_\vecalf A](\vecx+\vecm, \vecx')=\e(\vecm\cdot\vecalf) [\Pi_\vecalf A](\vecx, \vecx')$, cf.~\eqref{quasip}.
The proof of $\langle A,  \Pi_\vecalf B \rangle_\HiS =\langle \Pi_\vecalf A,  \Pi_\vecalf B \rangle_{\HiS,\vecalf}$ is analogous. Finally, in view of \eqref{Skern} and \eqref{FTse} we have that
\begin{align}
[\widehat{\Pi_\vecalf A }](\vecy,\vecy') &=  \sum_{\vecm \in \ZZ^d} \delta_{\vecm + \vecalf}(\vecy)\hat A(\vecm + \vecalf, \vecy'),
\end{align}
which yields
\begin{equation} \begin{split}
\langle \Pi_\vecalf A, B \rangle_{\HiS} &= \int_{\RR^{2d}} \sum_{\vecm \in\ZZ^d} \delta_{\vecm+\vecalf} (\vecy) \, \hat A(\vecm+\vecalf, \vecy') \overline{\hat B(\vecy, \vecy')}  \, \d \vecy  \, \d \vecy' \\
&= \, \sum_{\vecm \in\ZZ^d} \int_{\RR^{d}}  \hat A(\vecm+\vecalf, \vecy) \overline{\hat B(\vecm+\vecalf, \vecy)} \, \d \vecy .
\end{split} \end{equation}

\end{proof}

We denote by $\Delta^\vecalf$ the standard Laplacian acting on $\scrH_\vecalf$, and set 
\begin{equation}
H^\vecalf_\lambda=H^\vecalf_0 + \lambda \Op(V),  \qquad U^\vecalf_\lambda(t)= \e(- H^\vecalf_\lambda t)  .
\end{equation}

\begin{lem}\label{lem:comm}
For $f\in\scrS(\RR^d)$,
\begin{equation} \label{lem:comm:statement}
\Pi_\vecalf U_\lambda(t) f = U^\vecalf_\lambda(t) \Pi_\vecalf f .
\end{equation}
\end{lem}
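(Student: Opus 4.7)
The plan is to show that both sides of \eqref{lem:comm:statement} solve the same initial-value problem in $\scrH_\vecalf$, so that uniqueness of solutions to the Schr\"odinger equation with self-adjoint generator $H_\lambda^\vecalf$ forces equality.

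The first step is an intertwining identity at the Hamiltonian level: $\Pi_\vecalf H_\lambda g = H_\lambda^\vecalf \Pi_\vecalf g$ for sufficiently regular $g$. This rests on two observations. First, $H_0 = -\Delta/(8\pi^2)$ commutes with every translation $T_\vecm: h \mapsto h(\cdot+\vecm)$, so the representation \eqref{eq:Frep} of $\Pi_\vecalf$ as a weighted sum of $T_\vecm$ immediately yields $\Pi_\vecalf H_0 g = H_0 \Pi_\vecalf g$, and on quasi-periodic functions $-\Delta/(8\pi^2)$ is by definition $H_0^\vecalf$. Second, the $\ZZ^d$-periodicity $V(\vecx+\vecm)=V(\vecx)$ gives
\[
\Pi_\vecalf(Vg)(\vecx)=\sum_{\vecm\in\ZZ^d}\e(-\vecm\cdot\vecalf)\,V(\vecx+\vecm)\,g(\vecx+\vecm)=V(\vecx)\,\Pi_\vecalf g(\vecx),
\]
so multiplication by $V$ intertwines with $\Pi_\vecalf$ as well. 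Alternatively, on the Fourier side \eqref{FTse} shows $\Pi_\vecalf$ is just restriction of $\hat g$ to the lattice $\ZZ^d+\vecalf$; $H_0$ is multiplication by $\tfrac12\|\vecy\|^2$, and multiplication by the periodic $V$ becomes convolution by a distribution supported on $\ZZ^d$, so both manifestly preserve that lattice.

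With the intertwining in hand, set $\psi(t):=\Pi_\vecalf U_\lambda(t)f$ and $\phi(t):=U_\lambda^\vecalf(t)\Pi_\vecalf f$; both coincide at $t=0$ with $\Pi_\vecalf f\in\scrH_\vecalf\cap C^\infty(\RR^d)$ (Lemma \ref{three-one}). Differentiating $\psi$ and applying the Schr\"odinger equation for $U_\lambda(t)f$ together with Step 1,
\[
\tfrac{\i}{2\pi}\,\partial_t\psi(t) = \Pi_\vecalf H_\lambda U_\lambda(t)f = H_\lambda^\vecalf \Pi_\vecalf U_\lambda(t)f = H_\lambda^\vecalf\,\psi(t),
\]
which is the same Cauchy problem in $\scrH_\vecalf$ satisfied by $\phi$. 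Since $\Op(V)$ is a bounded perturbation of $H_0^\vecalf$, the operator $H_\lambda^\vecalf$ is self-adjoint on $\scrH_\vecalf$ and the problem is well-posed, so $\psi\equiv\phi$.

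The main obstacle is the technical justification of these manipulations for $u(t,\cdot):=U_\lambda(t)f$, which a priori lies only in $\L^2(\RR^d)$ and need not be Schwartz (so the pointwise summation in \eqref{eq:Frep} requires care, and differentiation under $\Pi_\vecalf$ must be validated). The cleanest way around this is to work entirely in the momentum representation: there $\Pi_\vecalf$ is literal restriction to $\ZZ^d+\vecalf$, $\hat H_\lambda$ acts block-diagonally with respect to the decomposition $\RR^d=\bigsqcup_{\vecalf\in[0,1)^d}(\ZZ^d+\vecalf)$, and the desired identity follows directly from the functional calculus for $\hat H_\lambda^\vecalf$ on the $\ell^2$-type space indexed by $\vecm\in\ZZ^d$.
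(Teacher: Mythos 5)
Your proof follows the same route as the paper's: establish the intertwining relations $\Pi_\vecalf H_0 = H_0^\vecalf \Pi_\vecalf$ and $\Pi_\vecalf \Op(V) = \Op(V) \Pi_\vecalf$, differentiate $\Pi_\vecalf U_\lambda(t)f$ in $t$, and invoke uniqueness for the resulting Cauchy problem in $\scrH_\vecalf$. You add explicit derivations of the commutation relations and some welcome attention to the regularity of $U_\lambda(t)f$ (which the paper glosses over), but the core argument is identical.
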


\begin{proof}
 We have the commutation relations 
 \begin{equation}
\Pi_\vecalf H_0 = H_0^\vecalf \Pi_\vecalf, \qquad \Pi_\vecalf \Op(V) = \Op(V) \Pi_\vecalf .
\end{equation}
Consider the time derivative of the left hand side of \eqref{lem:comm:statement},
\begin{equation}
\begin{split}
\partial_t \Pi_\vecalf U_\lambda(t) f  &= -2 \pi \i \Pi_\vecalf (H_0 + \lambda \Op(V)) U_\lambda(t) f \\
& = -2 \pi \i (H_0^\vecalf  + \lambda \Op(V)) \Pi_\vecalf  U_\lambda(t) f.
\end{split}
\end{equation}
Thus the left hand side of  \eqref{lem:comm:statement} is the unique solution to
\begin{align}
\partial_t g(t,\vecy) =  -2 \pi \i \, H_\lambda^\vecalf \, g(t,\vecy)
\end{align}
with initial condition $g(0,\vecy) := \Pi_\vecalf f(\vecy)$. The right hand side of \eqref{lem:comm:statement} solves the same PDE, and the proof is complete.
\end{proof}

\section{Duhamel's principle}\label{sec:duhamel}

Duhamel's principle provides an explicit expansion of the solution in terms of the coupling constant $\lambda$. By truncating the expansion at order $2$, we will be left with theta functions that, in a certain scaling limit, can be treated with the tools of homogeneous dynamics. The explicit error terms can be handled separately.
Our first aim is to work out the time evolution of un-scaled observables, \begin{equation}U_\lambda(t) \Op(a) U_\lambda(-t),\end{equation}
perturbatively in $\lambda$. We first study the problem in the interaction picture, i.e., consider 
\begin{equation}
U_\lambda(t) U_0(-t) \Op(a) U_0(t) U_\lambda(-t).
\end{equation}
Note that in view of the Egorov property \eqref{egorov} this is equivalent to the original problem upon replacing $a$ by $L_0(t) a$. We define the operators $K(t)$ and $R(t)$ for $t \in \RR$ by
\begin{equation}\label{def:KR}
K(t)=U_0(t) \Op(V) U_0(-t) \quad \text{ and } \quad R(t)=U_\lambda(t) U_0(-t)  .
\end{equation}
Furthermore, for $\vecs = (s_1,\dots,s_n)$ and $\ell\leq n$ we denote by $K_{\ell,n}(\vecs)$ the product 
\begin{equation}
K_{\ell,n}(\vecs)=
K(s_\ell) \cdots K(s_n)  .
\end{equation}
Then
\begin{multline}\label{propeq}
\langle \Pi_\vecalf U_\lambda(t) U_0(-t) \Op(a) U_0(t) U_\lambda(-t),\Op(b)\rangle_\HiS \\
=
\langle \Pi_\vecalf R(t) \Op(a) R(t)^{-1},\Op(b)\rangle_\HiS.
\end{multline}
Duhamel's principle asserts that
\begin{equation} \label{Duhamel}
R(t)= I - 2\pi\i \lambda \int_0^t R(s) K(s) \, \d s,
\end{equation}
and iterating this expression $N$ times yields 
\begin{equation} \label{duhamelexpansion}
R(t) = \sum_{n=0}^N  \lambda^n R_n(t) 
+  \lambda^{N+1} R_{N+1,\error}(t) ,
\end{equation}
where $R_0(t)=I$ and
\begin{equation}
R_n(t)= (- 2\pi\i)^n \int_{0<s_1<\ldots<s_n<t}  K_{1,n}(\vecs)  \, \d s  \qquad (n\geq 1).
\end{equation}
The error term is similarly given by
\begin{equation}
R_{N+1,\error}(t) = (-2\pi\i)^{N+1} \, \int_{0<s_1<\ldots<s_{N+1}<t} R(s_1) K_{1,N+1}(\vecs) \, \d \vecs.
\end{equation}

The inverse of $R(t)$ can be calculated by taking Hermitian conjugate. It is given by
\begin{equation} \label{duhamelexpansionpast}
R(t)^{-1} = \sum_{n=0}^N  \lambda^n R^-_n(t) 
+  \lambda^{N+1} R^-_{N+1,\error}(t) ,
\end{equation}
where $R^-_0(t)=I$,
\begin{equation}
R^-_n(t)= (2\pi\i)^n\, \int_{0<s_n<\ldots<s_1<t}  K_{1,n}(\vecs)  \, \d \vecs  \qquad (n\geq 1),
\end{equation}
and the error term is
\begin{equation}
R^-_{N+1,\error}(t) = (2\pi\i)^{N+1} \, \int_{0<s_{N+1}<\ldots<s_1<t}  K_{1,N+1}(\vecs) R(s_{N+1})^{-1} \, \d \vecs .
\end{equation}
We have also used the fact that $\Op(V)=\Op(V)^\dagger$ (since $V$ is real-valued) and thus $K(t)=K(t)^\dagger$. Our methods will permit explicit calculation of the terms in this expansion up to order $2$, and so specializing to the case $N=2$ the expansion takes the following form
\begin{equation}\label{propeq2}
\langle \Pi_\vecalf U_\lambda(t) U_0(-t) \Op(a) U_0(t) U_\lambda(-t),\Op(b)\rangle_\HiS
= \sum_{n=0}^6 \lambda^n Q_n(t,a,b) 
\end{equation}
with the main terms $Q_0$ to $Q_2$ given by
\begin{equation} \label{propeq2terms}
\begin{split}
Q_0(t,a,b) &= \langle \Pi_\vecalf  \Op(a) ,\Op(b)\rangle_\HiS \\ 
Q_1(t,a,b) &= \langle \Pi_\vecalf R_1(t) \Op(a) ,\Op(b)\rangle_\HiS \\
& \qquad + \langle \Pi_\vecalf \Op(a) R^-_1(t)  ,\Op(b)\rangle_\HiS \\ 
Q_2(t,a,b) &= \langle \Pi_\vecalf R_2(t) \Op(a) ,\Op(b)\rangle_\HiS \\
& \qquad +  \langle \Pi_\vecalf R_1(t) \Op(a) R^-_1(t)  ,\Op(b)\rangle_\HiS \\ 
& \qquad + \langle \Pi_\vecalf \Op(a) R^-_2(t)  ,\Op(b)\rangle_\HiS .
\end{split}
\end{equation}
The error terms $Q_3$ through $Q_6$ read
\begin{equation} \label{propeq2errorterms}
\begin{split}
Q_3(t,a,b) &=  \langle \Pi_\vecalf R_{3,\error}(t) \Op(a) ,\Op(b)\rangle_\HiS  \\
& \qquad +  \langle \Pi_\vecalf R_2(t) \Op(a) R^-_1(t)  ,\Op(b)\rangle_\HiS \\ 
&\qquad +  \langle \Pi_\vecalf R_1(t) \Op(a) R^-_2(t)  ,\Op(b)\rangle_\HiS  \\
& \qquad + \langle \Pi_\vecalf \Op(a) R^-_{3,\error}(t)  ,\Op(b)\rangle_\HiS \\ 
Q_4(t,a,b) &= \langle \Pi_\vecalf R_{3,\error}(t) \Op(a) R^-_1(t)  ,\Op(b)\rangle_\HiS  \\
& \qquad +  \langle \Pi_\vecalf R_2(t) \Op(a) R^-_2(t)  ,\Op(b)\rangle_\HiS \\ 
& \qquad + \langle \Pi_\vecalf  R_1(t) \Op(a)  R^-_{3,\error}(t)  ,\Op(b)\rangle_\HiS \\ 
Q_5(t,a,b) &= \langle \Pi_\vecalf R_{3,\error}(t) \Op(a) R^-_2(t)  ,\Op(b)\rangle_\HiS  \\ 
& \qquad +  \langle \Pi_\vecalf R_2(t) \Op(a)  R^-_{3,\error}(t)  ,\Op(b)\rangle_\HiS \\ 
Q_6(t,a,b) &= \langle \Pi_\vecalf R_{3,\error}(t) \Op(a)R^-_{3,\error}(t) ,\Op(b)\rangle_\HiS. 
\end{split}
\end{equation}
We will treat these error terms in the following way. First of all, Lemma \ref{lem:errortermsCS} shows that all of the $Q_j$ can be bounded above by quantities which are independent of $U_\lambda(t)$, and depend only on the free evolution $U_0(t)$. Then after rescaling, the resulting quantities, which we  denote $\scrJ_{\ell,n}$, can be treated with similar techniques to those used in the computation of the limit of the second order terms. 

Define
\begin{equation}\label{errortermthetafunc0}
\scrJ_{\ell,n}(t,a)= (2\pi)^{n} \int_{\substack{0<s_1<\ldots<s_{\ell}<t\\ 0<s_{n}<\ldots<s_{\ell+1}<t}} \|  \Pi_\vecalf K_{1,\ell}(\vecs) \Op(a)  K_{\ell+1,n}(\vecs)  \|_{\HiS,\vecalf} \, \d \vecs .
\end{equation}

\begin{lem} \label{lem:errortermsCS}
For $a,b\in\scrS(\RR^d)$, 
\begin{equation}
\begin{split}
\big| \langle \Pi_\vecalf  R_\ell(t) \Op(a) R^-_{n-\ell}(t), \Op(b) \rangle_\HiS \big| &\leq   \scrJ_{\ell,n}(t,a) \; \|  \Pi_\vecalf \Op(b) \|_{\HiS,\vecalf}, \\
\big| \langle \Pi_\vecalf  R_\ell(t) \Op(a) R^-_{n-\ell,\error}(t), \Op(b) \rangle_\HiS \big| &\leq   \scrJ_{\ell,n}(t,a) \; \|  \Pi_\vecalf \Op(b) \|_{\HiS,\vecalf},\\
\big| \langle \Pi_\vecalf R_{\ell,\error}(t) \Op(a) R^-_{n-\ell} (t) , \Op(b) \rangle_\HiS \big| &\leq   \scrJ_{\ell,n}(t,a) \;  \|  \Pi_\vecalf \Op(b) \|_{\HiS,\vecalf},\\
\big| \langle \Pi_\vecalf R_{\ell,\error}(t) \Op(a) R^-_{n-\ell,\error}(t) , \Op(b) \rangle_\HiS \big| &\leq   \scrJ_{\ell,n}(t,a) \; \|  \Pi_\vecalf \Op(b) \|_{\HiS,\vecalf}.\\
\end{split}
\end{equation}
\end{lem}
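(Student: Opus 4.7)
The plan is to apply one Cauchy--Schwarz step to peel off $\Op(b)$, then exploit the unitarity of $R(s)$ and $R(s)^{-1}$ to eliminate every occurrence of the full evolution $U_\lambda$ from the bound, and finally push the Hilbert--Schmidt norm inside the iterated time integrals defining the $R$'s and $R^-$'s via Minkowski's inequality.

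First I would invoke Lemma \ref{lem:projectionHSnorm} to rewrite, for any $X$ with Schwartz kernel,
\begin{equation}
\langle \Pi_\vecalf X, \Op(b)\rangle_\HiS = \langle \Pi_\vecalf X, \Pi_\vecalf \Op(b)\rangle_{\HiS,\vecalf},
\end{equation}
whereupon Cauchy--Schwarz in the $\HiS,\vecalf$-inner product gives $|\langle \Pi_\vecalf X, \Op(b)\rangle_\HiS| \leq \|\Pi_\vecalf X\|_{\HiS,\vecalf}\,\|\Pi_\vecalf \Op(b)\|_{\HiS,\vecalf}$. It therefore suffices to bound $\|\Pi_\vecalf X\|_{\HiS,\vecalf} \leq \scrJ_{\ell,n}(t,a)$ for each of the four choices of $X$ in the statement.

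Next I would unpack each such $X$ using the Duhamel integral formulas of Section \ref{sec:duhamel}. In all four cases $X$ equals $(\pm 2\pi\i)^n$ times an iterated integral over the two ordered simplices $\{0<s_1<\ldots<s_\ell<t\}$ and $\{0<s_n<\ldots<s_{\ell+1}<t\}$, whose integrand is $K_{1,\ell}(\vecs)\Op(a)K_{\ell+1,n}(\vecs)$ sandwiched between one unitary factor on the left (namely $R(s_1)$, appearing exactly when $R_{\ell,\error}$ is present) and/or one on the right (namely $R(s_n)^{-1}$, appearing exactly when $R^-_{n-\ell,\error}$ is present). The key observation is that Lemma \ref{lem:comm}, applied both to $U_\lambda$ and to $U_0$, yields $\Pi_\vecalf R(s_1) = R^\vecalf(s_1)\Pi_\vecalf$ with $R^\vecalf(s_1) := U^\vecalf_\lambda(s_1) U^\vecalf_0(-s_1)$ unitary on $\scrH_\vecalf$; since $\|\cdot\|_{\HiS,\vecalf}$ is left-unitary-invariant on $\scrH_\vecalf$, this factor disappears. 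The right factor $R(s_n)^{-1}$ is unitary on $\L^2(\RR^d)$ and drops out by right-unitary-invariance of the Hilbert--Schmidt norm. Thus for every admissible $\vecs$ the integrand's $\HiS,\vecalf$-norm collapses to $\|\Pi_\vecalf K_{1,\ell}(\vecs)\Op(a)K_{\ell+1,n}(\vecs)\|_{\HiS,\vecalf}$.

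Finally, Minkowski's integral inequality moves $\|\Pi_\vecalf\cdot\|_{\HiS,\vecalf}$ inside the iterated integral, producing exactly $\scrJ_{\ell,n}(t,a)$ from \eqref{errortermthetafunc0}, with moduli of the prefactors combining into the required $(2\pi)^n$. I do not anticipate a serious obstacle: Schwartz regularity of $W$ (hence of $V$ via \eqref{potentialdef}) and of $a$ secures the measurability and finiteness of the integrand required by Minkowski, and the unitarity bookkeeping is entirely mechanical. The conceptual content of the lemma is that all dependence on the interacting evolution $U_\lambda$ collapses under unitarity, reducing estimation of the error terms $Q_3,\ldots,Q_6$ in \eqref{propeq2errorterms} to quantities $\scrJ_{\ell,n}(t,a)$ built from free evolution alone, which is precisely what makes the subsequent theta-function analysis possible.
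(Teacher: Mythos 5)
Your proposal is correct and follows essentially the same route as the paper: identify the $\HiS,\vecalf$-inner product via Lemma \ref{lem:projectionHSnorm}, strip off $R(s_1)$ and $R(s_n)^{-1}$ by unitarity (with $\Pi_\vecalf R(s_1)=R^\vecalf(s_1)\Pi_\vecalf$ from Lemma \ref{lem:comm} for the left factor), and bound by $\scrJ_{\ell,n}(t,a)\,\|\Pi_\vecalf\Op(b)\|_{\HiS,\vecalf}$. The only difference is cosmetic — you apply Cauchy--Schwarz globally and then push the norm inside the time integral via Minkowski, whereas the paper first moves the absolute value inside the integral and applies Cauchy--Schwarz to the pointwise inner product; these are equivalent.
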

\begin{proof}
For the first bound, note that by Lemma \ref{lem:projectionHSnorm} and direct computation we have
\begin{equation}
\begin{split}
&| \langle \Pi_\vecalf  R_\ell(t) \Op(a) R^-_{n-\ell}(t), \Op(b) \rangle_\HiS| \\
&= | \langle \Pi_\vecalf  R_\ell(t) \Op(a) R^-_{n-\ell}(t), \Pi_\vecalf \Op(b) \rangle_{\HiS,\vecalf} | \\ 
& = (2\pi)^n  \left|\int_{\substack{0<s_1<\cdots<s_\ell<t \\ 0<s_n<\cdots<s_{\ell+1}<t}}  \langle \Pi_\vecalf  K_{1,\ell}(\vecs) \Op(a) K_{\ell+1,n}(\vecs), \Pi_\vecalf \Op(b) \rangle_{\HiS,\vecalf} \d\vecs \right| \\
& \leq (2\pi)^n \int_{\substack{0<s_1<\cdots<s_\ell<t \\ 0<s_n<\cdots<s_{\ell+1}<t}}   \left| \langle \Pi_\vecalf  K_{1,\ell}(\vecs) \Op(a) K_{\ell+1,n}(\vecs), \Pi_\vecalf \Op(b) \rangle_{\HiS,\vecalf}  \right| \d\vecs.
\end{split}
\end{equation}
The bound then follows by an application of the Cauchy-Schwarz inequality. For the second bound we similarly have that
\begin{equation}
\begin{split}
& \langle \Pi_\vecalf  R_\ell(t) \Op(a) R^-_{n-\ell,\error}(t), \Op(b) \rangle_\HiS \\
& \leq (2\pi)^n  \int_{\substack{0<s_1<\cdots<s_\ell<t \\ 0<s_n<\cdots<s_{\ell+1}<t}} \left| \langle \Pi_\vecalf  K_{1,\ell}(\vecs) \Op(a) K_{\ell+1,n}(\vecs) R(s_{n})^{-1}, \Pi_\vecalf \Op(b) \rangle_{\HiS,\vecalf} \right| \d\vecs.
\end{split}
\end{equation}
The result then follows by applying Cauchy-Schwarz and using that $R(s_n)$ is unitary. For the third bound we have 
\begin{equation}
\begin{split}
& \langle \Pi_\vecalf  R_{\ell,\error}(t) \Op(a) R^-_{n-\ell}(t), \Op(b) \rangle_\HiS \\
& \leq (2\pi)^n  \int_{\substack{0<s_1<\cdots<s_\ell<t \\ 0<s_n<\cdots<s_{\ell+1}<t}} \left| \langle \Pi_\vecalf R(s_1) K_{1,\ell}(\vecs) \Op(a) K_{\ell+1,n}(\vecs) , \Pi_\vecalf \Op(b) \rangle_{\HiS,\vecalf} \right| \d\vecs.
\end{split}
\end{equation}
This time the bound follows by first applying Lemma \ref{lem:comm}, then the Cauchy-Schwarz inequality and finally using the unitarity of $R(s)$. The last bound follows by combining the arguments for bounds two and three.
\end{proof}

Let us introduce the shorthand
\begin{equation}
\scrT_{\ell,n}(\vecy) 
=\begin{cases} \prod_{j=\ell}^{n} \e(-\tfrac12 \, (s_{j+1}-s_j) \|\vecy-\vecm_j \|^2) \hat W(r(\vecm_{j+1}-\vecm_j)) & (l \leq n) \\
1 & (l> n). 
\end{cases}
\end{equation}

\begin{lem}
The kernel of $\hat K_{\ell,n}(\vecs)=\scrF K_{\ell,n}(\vecs) \scrF^{-1}$ is explicitly given by
\begin{multline} \label{Kker}
[\hat K_{\ell,n}(\vecs) ] (\vecy,\vecy') \\
= r^{(n-\ell+1)d} \hspace{-0.2cm} \sum_{\vecm_\ell,\ldots,\vecm_n\in\ZZ^d} \e(-\tfrac12 \, s_\ell\|\vecy\|^2) \hat W(r\vecm_\ell) 
\scrT_{\ell,n-1}(\vecy) \e(\tfrac12 \, s_n \|\vecy-\vecm_n \|^2) \delta_{\vecm_n}(\vecy-\vecy') . 
\end{multline}
\end{lem}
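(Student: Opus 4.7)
The plan is to compute the kernel of $\hat K(t)$ first, and then compose $n-\ell+1$ such factors by induction. All computations take place in momentum representation, where the free evolution and the delta-type kernel of $\hOp(V)$ combine cleanly.

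First, apply Poisson summation to the periodic potential
\eqref{potentialdef}. Writing $f(\vecx)=W(r^{-1}\vecx)$ so that
$\hat f(\vecy)=r^d\hat W(r\vecy)$, we obtain the absolutely convergent Fourier series
\begin{equation*}
V(\vecx)=r^d\sum_{\vecm\in\ZZ^d}\hat W(r\vecm)\,\e(\vecm\cdot\vecx),
\end{equation*}
and hence, by the special case $a=a(\vecx)$ of \eqref{hatOpa},
\begin{equation*}
\hOp(V)(\vecy,\vecy')=r^d\sum_{\vecm\in\ZZ^d}\hat W(r\vecm)\,\delta_{\vecm}(\vecy-\vecy').
\end{equation*}
Combining this with the diagonal kernel $\hat U_0(t)(\vecy,\vecy')=\e(-\tfrac12 t\|\vecy\|^2)\delta_\vecnull(\vecy-\vecy')$ from Section \ref{sec:momentum}, and integrating against the two delta factors from the free evolutions, yields
\begin{equation*}
\hat K(t)(\vecy,\vecy')=r^d\sum_{\vecm\in\ZZ^d}\hat W(r\vecm)\,\e\bigl(-\tfrac12 t\|\vecy\|^2+\tfrac12 t\|\vecy-\vecm\|^2\bigr)\,\delta_\vecm(\vecy-\vecy').
\end{equation*}
This is exactly \eqref{Kker} in the base case $n=\ell$ (where $\scrT_{\ell,\ell-1}=1$ by convention).

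For the inductive step, assume \eqref{Kker} holds for $\hat K_{\ell,n-1}(\vecs)$, and compute
\begin{equation*}
[\hat K_{\ell,n}(\vecs)](\vecy,\vecy')=\int_{\RR^d}[\hat K_{\ell,n-1}(\vecs)](\vecy,\vecz)\,[\hat K(s_n)](\vecz,\vecy')\,\d\vecz.
\end{equation*}
The $\delta_{\vecm_{n-1}}(\vecy-\vecz)$ appearing in the inductive hypothesis forces $\vecz=\vecy-\vecm_{n-1}$. Parametrising the summation index $\vecn$ in $\hat K(s_n)$ by $\vecn=\vecm_n-\vecm_{n-1}$ turns $\delta_{\vecn}(\vecz-\vecy')$ into $\delta_{\vecm_n}(\vecy-\vecy')$ and produces the factor $\hat W(r(\vecm_n-\vecm_{n-1}))$. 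The two phase factors that multiply at the junction combine as
\begin{equation*}
\e\bigl(\tfrac12 s_{n-1}\|\vecy-\vecm_{n-1}\|^2\bigr)\,\e\bigl(-\tfrac12 s_n\|\vecy-\vecm_{n-1}\|^2\bigr)=\e\bigl(-\tfrac12(s_n-s_{n-1})\|\vecy-\vecm_{n-1}\|^2\bigr),
\end{equation*}
which is precisely the missing $j=n-1$ factor of $\scrT_{\ell,n-1}(\vecy)$; appending it to $\scrT_{\ell,n-2}(\vecy)$ gives $\scrT_{\ell,n-1}(\vecy)$. The remaining end phases $\e(-\tfrac12 s_\ell\|\vecy\|^2)$ and $\e(\tfrac12 s_n\|\vecy-\vecm_n\|^2)$ and the prefactor $r^{(n-\ell+1)d}$ line up as in \eqref{Kker}.

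There is essentially no obstacle beyond bookkeeping: the delta factors ensure that every integration and every composition collapses to a sum over the cumulative momentum transfers $\vecm_j$, and the only point to be careful about is the telescoping of the phases between consecutive free-evolution pieces. The rapid decay of $\hat W\in\scrS(\RR^d)$ justifies all sum/integral interchanges.
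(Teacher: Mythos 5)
Your proof is correct and follows essentially the same route as the paper's: compute the momentum-representation kernel of a single $\hat K(s)$, compose the $n-\ell+1$ factors, and re-index the momentum transfers cumulatively so that $\scrT_{\ell,n-1}$ appears. The only difference is bookkeeping—you package the composition and re-indexing as an induction, whereas the paper unfolds the whole product at once and performs the change of variables $\vecm_j\mapsto\vecm_\ell+\cdots+\vecm_j$ at the end.
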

\begin{proof}
We have that
\begin{multline}
\hat K_{\ell,n}(\vecs) \, f (\vecy) = \hat K(s_\ell) \cdots \hat K(s_n) f (\vecy)\\
= \scrF U_0(s_\ell) \Op(V) U_0(s_{\ell+1} - s_\ell) \cdots U_0(s_n-s_{n-1}) \Op(V) U_0(-s_n) \scrF^{-1} f(\vecy),
\end{multline}
and
\begin{align}
\scrF U_0(s) \Op(V) \scrF^{-1}f (\vecy) &=r^d \e(-\tfrac12 s \|\vecy\|^2)\sum_{\vecm \in \ZZ^d} \hat W(r \vecm) f(\vecy-\vecm).
\end{align}
By iterating we thus see
\begin{equation} \begin{split}
\hat K_{\ell,n}(\vecs) \, f (\vecy) &= \hat K(s_\ell) \cdots \hat K(s_n) f (\vecy)\\
&=  r^{(n-\ell+1) d} \e(-\tfrac12 s_\ell \|\vecy\|^2)\sum_{\vecm_\ell, \dots, \vecm_n \in \ZZ^d} \hat W(r \vecm_\ell) \\
& \times  \e(-\tfrac12 (s_{\ell+1} - s_\ell) \|\vecy-\vecm_\ell\|^2)  \hat W(r \vecm_{\ell+1}) \\
& \times  \e(-\tfrac12 (s_{\ell+2} - s_{\ell+1}) \|\vecy-\vecm_\ell-\vecm_{\ell+1}\|^2)  \hat W(r \vecm_{\ell+2}) \\
& \cdots \times  \e(-\tfrac12 (s_{n} - s_{n-1}) \|\vecy - \vecm_\ell - \cdots - \vecm_{n-1}\|^2)\hat W(r \vecm_n) \\
& \times \e(\tfrac12 s_n \|\vecy - \vecm_\ell - \cdots - \vecm_{n}\|^2) \,  f(\vecy - \vecm_\ell - \cdots - \vecm_{n}).
\end{split} \end{equation}
We then make the variable substitutions $\vecm_j = \tilde \vecm_j - \sum_{i= \ell}^{j-1} \vecm_i$ for $j = \ell+1, \dots, n$. Note that this gives 
$\vecy - \vecm_\ell - \cdots - \vecm_{j} = \vecy - \tilde \vecm_j$
and also
$\vecm_{j} =  \tilde \vecm_{j} - \tilde \vecm_{j-1}.$
Inserting these new variables, dropping the tildes, and using the definition of $\scrT_{\ell,n}$ yields the result.
\end{proof}

\section{The Boltzmann-Grad limit}\label{sec:boltzmann-grad}

Recall the semiclassical Boltzmann-Grad scaling \eqref{BGscaling} given by
\begin{equation}
D_{r,h} a(\vecx,\vecy) =  r^{d(d-1)/2} h^{d/2} \, a( r^{d-1} \vecx, h \vecy).
\end{equation}
Performing the Fourier transform in the $\vecx$ variable yields the expression
\begin{equation}
\tilde D_{r,h} \tilde a(\veceta,\vecy) = 
\widetilde{(D_{r,h} a)}(\veceta,\vecy)  =r^{-d(d-1)/2} h^{d/2}\, \tilde a( r^{1-d} \veceta, h \vecy),
\end{equation}
and thus after quantizing the rescaled observables we see
\begin{align}
\widehat\Op(D_{r,h} a)(\vecy,\vecy') &=  r^{-d(d-1)/2} h^{d/2}\, \tilde a( r^{1-d} (\vecy-\vecy'),  \tfrac{h}{2}(\vecy+\vecy')). 
\end{align}
Note that after this rescaling we have the relation
\begin{equation}
\begin{split}
D_{r,h}L_0(t) a(\vecx,\vecy) & = r^{d(d-1)/2} h^{d/2} \, L_0(t) a( r^{d-1} \vecx, h \vecy) \\
& = r^{d(d-1)/2} h^{d/2} \, a( r^{d-1} \vecx - t h \vecy, h \vecy) \\
& =  D_{r,h} a ( \vecx - t h r^{1-d} \vecy, \vecy) \\
& = L_0(t h r^{1-d}) D_{r,h} a( \vecx, \vecy)
\end{split}
\end{equation}
and so the Egorov property \eqref{egorov} becomes
\begin{equation} \label{egorovrescaled}
U_0(t h r^{1-d}) \Op( D_{r,h} a) U_0(-t h r^{1-d}) = \Op (D_{r,h} L_0(t)a). 
\end{equation}

Given a linear operator $A$ on $\L^2(\RR^d)$ with Schwartz kernel in $\scrS(\RR^d\times\RR^d)$, we define the partial trace 
\begin{equation}
\Tr_\vecalf A % = \Tr(\Pi_\vecalf A \,\Pi_\vecalf) 
= \sum_{\vecm\in\ZZ^d} \hat A(\vecm+\vecalf,\vecm+\vecalf) ,
\end{equation}
and note that in view of Lemma \ref{lem:projectionHSnorm}
$ \langle\Pi_\vecalf A, B \rangle_\HiS = \Tr_\vecalf A B^\dagger $.
Let us furthermore define $\scrI_{\ell,n}$, implicitly dependent on $r$ and $h$, by
\begin{equation} \label{def:scrI}
\scrI_{\ell,n}(\vecs) = 
\begin{cases}
\Tr_\vecalf [\Op(D_{r,h} a) \Op(D_{r,h} b)] & (\ell= n=0)\\
\Tr_\vecalf [K_{1,\ell}(\vecs) \Op(D_{r,h} a) K_{\ell+1,n}(\vecs) \Op(D_{r,h} b)]  & (1\leq \ell <n)\\
\Tr_\vecalf [K_{1,n}(\vecs) \Op(D_{r,h} a) \Op(D_{r,h} b)] & (0<\ell = n)\\
\Tr_\vecalf [\Op(D_{r,h} a) K_{1,n}(\vecs) \Op(D_{r,h} b)]  & (\ell=0<n) .
\end{cases}
\end{equation}
In view of equation \eqref{propeq2terms}, we have for $n=0,1,2$ 
\begin{align}\label{theexpansion}
Q_n(t, D_{r,h} a, D_{r,h} \overline b) &= (2\pi\i)^n \sum_{\ell=0}^n  (-1)^\ell \, \int_{ \substack{ 0<s_1<\cdots<s_\ell< t   \\ 0<s_n<\cdots<s_{\ell+1}< t}} \scrI_{\ell,n}(\vecs) \d \vecs.
\end{align}
(We work with $\overline b$ rather than $b$ to simplify the notation in the calculations that follow.)
In other words, the $\scrI_{\ell,n}$ are precisely the expressions that appear in the expansion of
\begin{equation}
\langle \Pi_\vecalf U_\lambda( t) U_0(- t) \Op(D_{r,h}a) U_0( t) U_\lambda(-  t),\Op(D_{r,h}\overline b)\rangle_{\HiS},
\end{equation}
cf.~\eqref{propeq2}.

Let us write down the $\scrI_{\ell,n}$ explicitly. For $1\leq \ell <n$, we show in the Appendix \ref{appendix1} that one has
\begin{equation} \label{Is}
\begin{split}
\scrI_{\ell,n}(\vecs) &=r^{nd} h^d \int_{\RR^d}  \sum_{\vecm_1,\dots,\vecm_{n}} \\
& \times \e(-\tfrac12 \, s_1\|\vecm_n+\vecalf\|^2) \hat W(r(\vecm_n-\vecm_1)) 
\scrT_{1,\ell-1}^-(\vecalf) \e(\tfrac12 \, s_\ell \|\vecm_\ell+\vecalf \|^2) \\
& \times \tilde a (-\veceta, h (\vecm_\ell+\vecalf+\tfrac12 r^{d-1}\veceta)) \,  \e(-\tfrac12 \, s_{\ell+1}\|\vecm_\ell+\vecalf + r^{d-1} \veceta\|^2) \\
& \times\hat W(r(\vecm_{\ell}-\vecm_{\ell+1})) 
\scrT_{\ell+1,n-1}^-(\vecalf + r^{d-1} \veceta) \e(\tfrac12 \, s_n \|\vecm_n+\vecalf + r^{d-1} \veceta \|^2) \\
& \times \tilde b( \veceta, h (\vecm_n+\vecalf + \tfrac12 r^{d-1} \veceta)) \, \d \veceta+ O(r^\infty)
\end{split}
\end{equation}
with the definition
\begin{equation}
\scrT^-_{\ell,n}(\vecy) 
=
\begin{cases}
\prod_{j=\ell}^{n} \e(\tfrac12 \, (s_j-s_{j+1}) \|\vecy+\vecm_j \|^2) \hat W(r(\vecm_j-\vecm_{j+1})) & (\ell\leq n)
\\
1 &  (\ell>n) .
\end{cases}
\end{equation}
The symbol $O(r^\infty)$ is a shorthand for ``$O_\beta(r^\beta)$ for any $\beta\geq 1$.''
It follows more immediately from the definition of $K_{\ell,n}$ that for $\ell=n$,
\begin{equation}\label{Isn}
\begin{split}
\scrI_{n,n}(\vecs) &= r^{nd} h^d \int_{\RR^d}  \sum_{\vecm_1,\ldots,\vecm_n\in\ZZ^d}  \e(-\tfrac12 \, s_1\|\vecm_n+\vecalf\|^2) \hat W(r(\vecm_n-\vecm_1)) \\
 & \times \scrT^-_{1,n-1}(\vecalf) \e(\tfrac12 \, s_n \|\vecm_n+\vecalf \|^2) \tilde a(-\veceta,h(\vecm_n+\vecalf+\tfrac12 r^{d-1}\veceta)) \\
 &\times \tilde b(\veceta,h(\vecm_n+\vecalf+\tfrac12 r^{d-1}\veceta)) \, \d \veceta+O(r^\infty),
 \end{split}
\end{equation}
and for $\ell = 0$,
\begin{equation} \label{Is0}
\begin{split}
\scrI_{0,n}(\vecs)& = r^{nd} h^d \int_{\RR^d} \sum_{\vecm_1,\ldots,\vecm_n\in\ZZ^d}   \tilde a(-\veceta ,h(\vecm_n+\vecalf+\tfrac12r^{d-1}\veceta) )  \\
&  \times \e(-\tfrac12 \, s_1\|\vecm_n+\vecalf+r^{d-1}\veceta \|^2) \,\hat W(r(\vecm_n-\vecm_1)) \scrT^-_{1,n-1}(\vecalf+r^{d-1}\veceta) \\
 & \times  \e(\tfrac12 \, s_n \|\vecm_n +\vecalf+r^{d-1}\veceta \|^2)  \tilde b(\veceta, h(\vecm_n+\vecalf+\tfrac12r^{d-1}\veceta)) \, \d \veceta +O(r^\infty).
\end{split}
\end{equation}

\section{Orders zero and one}\label{sec:0and1}

The asymptotics for zeroth and first order terms follows from the Poisson summation formula.

\begin{lem} \label{scrI0}
\begin{align}
\scrI_{0,0} = \int_{\RR^{d} \times \RR^d} a(\vecx, \vecy)  b( \vecx,  \vecy)\, \d \vecx \d \vecy+ O(h^\infty).  
\end{align}
\end{lem}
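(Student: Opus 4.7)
The strategy is to unfold the partial trace as a lattice sum and apply Poisson summation; the zero-frequency term of the Poisson dual yields the main term, and the non-zero frequencies produce $O(h^\infty)$ via Schwartz decay.

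First, I would unfold the definition: by Lemma \ref{lem:projectionHSnorm},
\begin{equation}
\scrI_{0,0} = \sum_{\vecm\in\ZZ^d} \widehat{\Op(D_{r,h}a)\Op(D_{r,h}b)}(\vecm+\vecalf,\vecm+\vecalf).
\end{equation}
The Fourier kernel of a composition is the convolution of the two Fourier kernels, and each factor is given by \eqref{hatOpa} applied to the rescaled symbol, namely $\widetilde{D_{r,h}a}(\veceta,\vecy)=r^{-d(d-1)/2}h^{d/2}\tilde a(r^{1-d}\veceta,h\vecy)$. Setting the inner momentum variable equal to $\vecm+\vecalf+\veceta$ and then substituting $\veceta\mapsto r^{d-1}\veceta$ to eliminate all factors of $r$ except inside the momentum arguments, I would obtain
\begin{equation}
\scrI_{0,0} = h^d\sum_{\vecm\in\ZZ^d}\int_{\RR^d} G_\veceta\!\left(h(\vecm+\vecalf+\tfrac12 r^{d-1}\veceta)\right)\d\veceta,
\end{equation}
where $G_\veceta(\vecu):=\tilde a(-\veceta,\vecu)\,\tilde b(\veceta,\vecu)$. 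Since $a,b\in\scrS(\RR^{2d})$, their partial Fourier transforms in the first argument are jointly Schwartz, and so is the function $H(\veceta,\vecu):=G_\veceta(\vecu)$ on $\RR^{2d}$.

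Next, for fixed $\veceta$ I would apply Poisson summation in its quasi-periodic form $\sum_{\vecm}f(\vecm+\vecalf)=\sum_{\veck}\e(\vecalf\cdot\veck)\hat f(\veck)$. A routine change of variables in the Fourier transform cancels the outer $h^d$ and produces the Fourier transform of $G_\veceta$ at $h^{-1}\veck$:
\begin{equation}
\scrI_{0,0} = \sum_{\veck\in\ZZ^d}\e(\vecalf\cdot\veck)\int_{\RR^d} \hat G_\veceta(h^{-1}\veck)\,\e(\tfrac12 r^{d-1}\veceta\cdot\veck)\,\d\veceta.
\end{equation}
For $\veck=\vecnull$, the integrand reduces to $\hat G_\veceta(\vecnull)=\int \tilde a(-\veceta,\vecu)\tilde b(\veceta,\vecu)\d\vecu$, and Parseval's identity in the variable $\veceta$ then yields $\int_{\RR^{2d}} a(\vecx,\vecy)b(\vecx,\vecy)\,\d\vecx\,\d\vecy$, the main term.

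For $\veck\neq\vecnull$, I would observe that the remaining expression is precisely the full $2d$-dimensional Fourier transform of $H$ evaluated at $(-\tfrac12 r^{d-1}\veck,\,h^{-1}\veck)$. Schwartz decay of $H$ makes this Fourier transform Schwartz, so for any $N$ it is bounded by $C_N(|\veck|/h)^{-N}$ (the $r^{d-1}|\veck|$ component being harmless since $r\leq 1$). Summing over $\veck\in\ZZ^d\setminus\{\vecnull\}$ with $N>d$ absorbs the polynomial growth in $\veck$ and leaves a remainder $O_N(h^N)$ for every $N$, i.e.~$O(h^\infty)$. The only technical point worth checking carefully is the uniformity of the Schwartz decay of $H$ and the justification of exchanging the sum and integral, both of which follow at once from $a,b\in\scrS(\RR^{2d})$.
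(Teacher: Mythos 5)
Your proof is correct and follows essentially the same route as the paper: unfold $\scrI_{0,0}$ via Lemma~\ref{lem:projectionHSnorm}, apply Poisson summation in $\vecm$, identify the $\veck=\vecnull$ term as the main term via Parseval, and use Schwartz decay of $\tilde a,\tilde b$ to bound the $\veck\neq\vecnull$ terms by $O(h^\infty)$. The paper's proof states the Poisson-summation step and the resulting error without spelling out the $\veck\neq\vecnull$ estimate, which you do explicitly and correctly.
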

\begin{proof}
We have (by Lemma \ref{lem:projectionHSnorm})
\begin{align}
\scrI_{0,0} &= h^{d} \sum_{\vecm \in \ZZ^d} \int_{\RR^d} \tilde a(- \veceta,  h(\vecm+\vecalf+ \tfrac12 r^{d-1}\veceta)) \tilde b( \veceta,  h(\vecm+\vecalf + \tfrac12r^{d-1}\veceta)) \, \d \veceta .
\end{align}
Since $\tilde a$ and $\tilde b$ are Schwartz class, applying Poisson summation in $\vecm$ gives
\begin{equation} \begin{split}
\scrI_{0,0} &=  \int_{\RR^d\times\RR^d} \tilde a(- \veceta,  \vecy) \tilde b( \veceta,  \vecy) \, \d \veceta \, \d \vecy+ O(h^\infty) \\
 &= \int_{\RR^{d} \times \RR^d}  a(\vecx, \vecy)  b( \vecx,  \vecy) \,\d \vecx\, \d \vecy + O(h^\infty) .
\end{split} \end{equation}
\end{proof}
Recall that the mean free flight time is of the order of $r^{1-d}$, and that according to \eqref{lambdaisrescaled} we should consider time in units of $h$. This suggests the rescaling $t \to hr^{1-d} t$, and thus, by the Egorov property \eqref{egorovrescaled}, we obtain for the propagated symbol
\begin{equation} \begin{split}
& \Tr_\vecalf [U_0(t h r^{1-d}) \Op (D_{r,h} a) U_0(-t h r^{1-d}) \Op (D_{r,h} b) ]  \\ 
& = \Tr_\vecalf [\Op (D_{r,h} L_0(t)a) \Op(D_{r,h} b) ] \\
&= \int_{\RR^d\times\RR^d} (L_0(t)a)( \vecx, \vecy ) b( \vecx, \vecy) \d \vecx\, \d \vecy +O(h^\infty)  \\
& = \int_{\RR^d\times\RR^d} a( \vecx-t\vecy, \vecy ) b( \vecx, \vecy) \d \vecx\, \d \vecy +O(h^\infty) ,
\end{split} \end{equation}
uniformly for all $t$ in a fixed compact interval. It is worth noting that this is precisely the answer one would expect: at order zero the potential does not appear, which means the solution simply displays free evolution. We see this is true by virtue of the fact that the initial density has simply been translated in position space for time $t$ with momentum $\vecy$. 

\begin{lem} \label{scrI1}
\begin{align}
\scrI_{0,1}(s_1) - \scrI_{1,1}(s_1) & =  O(r^ d h^\infty+r^{\infty}).
\end{align}
\end{lem}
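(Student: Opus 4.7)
The plan is to exploit the fact that for $n=1$ the expansion \eqref{propeq2terms} contains only a single vertex, so the relevant expressions \eqref{Isn} and \eqref{Is0} degenerate dramatically. I would substitute $n=1$ into both formulas and observe that the sum involves only one index $\vecm_1 = \vecm_n$, whence $\hat W(r(\vecm_n - \vecm_1)) = \hat W(\vecnull)$, and the auxiliary product $\scrT^-_{1,0}$ is empty and hence equals $1$ by convention. Also $s_\ell = s_n = s_1$ in both cases. What is left in each integrand is then just a pair of quadratic phase factors of the form $\e(-\tfrac12 s_1 \|\cdot\|^2) \cdot \e(\tfrac12 s_1 \|\cdot\|^2)$.

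The key observation is that these paired phases cancel identically in each case. In $\scrI_{1,1}(s_1)$ they both carry argument $\vecm_1 + \vecalf$, while in $\scrI_{0,1}(s_1)$ they both carry argument $\vecm_1 + \vecalf + r^{d-1}\veceta$; either way, the product is $1$. After this cancellation, both $\scrI_{1,1}(s_1)$ and $\scrI_{0,1}(s_1)$ reduce to the same expression
\begin{equation*}
r^d h^d \, \hat W(\vecnull) \sum_{\vecm_1 \in \ZZ^d} \int_{\RR^d} \tilde a\bigl(-\veceta,\, h(\vecm_1 + \vecalf + \tfrac12 r^{d-1}\veceta)\bigr) \, \tilde b\bigl(\veceta,\, h(\vecm_1 + \vecalf + \tfrac12 r^{d-1}\veceta)\bigr) \, \d\veceta,
\end{equation*}
plus the intrinsic $O(r^\infty)$ error terms already present in \eqref{Isn} and \eqref{Is0}. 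Subtracting yields the $r^\infty$ component of the claimed error.

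For the $O(r^d h^\infty)$ component, the plan is to apply Poisson summation in $\vecm_1$ to the common leading sum above, exactly as was done in the proof of Lemma \ref{scrI0}. Since $\tilde a$ and $\tilde b$ are Schwartz and the summand depends on $\vecm_1$ only through the Schwartz functions evaluated at $h\vecm_1 + \ldots$, all but the zero Fourier mode contribute $O_N(h^N)$ for any $N$; after factoring out the overall $r^d \hat W(\vecnull)$ prefactor one obtains the bound $O(r^d h^\infty)$ on each term individually, and hence on their difference. The only step that requires genuine care is checking the two phase cancellations carefully and verifying that the conventions $\scrT^-_{1,0} = 1$ and the empty product in \eqref{Kker} agree; there is no real analytic difficulty here, as the lemma is essentially a bookkeeping verification reflecting the physical fact that a spatially constant mean potential $\hat W(\vecnull)$ produces only an overall phase that drops out of the Heisenberg evolution, consistent with $L_1(t) a = 0$ in \eqref{L0}.
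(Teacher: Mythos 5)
Your proposal is correct and follows essentially the same approach as the paper: substitute $n=1$ into \eqref{Isn} and \eqref{Is0}, observe that the two quadratic phase factors cancel exactly and that $\hat W(r(\vecm_1-\vecm_1))=\hat W(\vecnull)$ with $\scrT^-_{1,0}=1$. In fact your observation is slightly sharper than what the paper records: once the phases cancel, the two leading sums are \emph{identical} term by term (both with argument $h(\vecm_1+\vecalf+\tfrac12 r^{d-1}\veceta)$), so the difference is already $O(r^\infty)$ with no Poisson summation needed; your final paragraph invoking Poisson summation to extract an additional $O(r^d h^\infty)$ contribution to the difference is therefore redundant (Poisson summation is needed only to evaluate each $\scrI_{j,1}$ individually, which the paper does, but not for the cancellation of the difference).
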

\begin{proof}  By \eqref{Isn},
\begin{equation} \begin{split}
\scrI_{1,1}(s_1) & = r^{d} h^{d} \hat W(\vecnull) \sum_{\vecm\in\ZZ^d} \int_{\RR^d} 
 \tilde a(-\veceta,h(\vecm+\vecalf +\tfrac12 r^{d-1}\veceta)) \\
&\times \tilde b(\veceta, h(\vecm+\vecalf +\tfrac12r^{d-1}\veceta)) \, d\veceta +O(r^\infty) \\
& = r^{d} \hat W(\vecnull) \int_{\RR^d \times \RR^d} \tilde a( -\veceta, \vecy) \tilde b( \veceta, \vecy)\, \d\veceta\, \d \vecy + O(r^ d h^\infty+r^{\infty}),
\end{split} \end{equation}
again by Poisson summation. 
%Integrating over $s_1$ from $0$ to $h r^{1-d} t$ yields
%\begin{align}
%\int_0^{h r^{1-d}t} \scrI_{1,1}(s_1) \d s_1 &
%= h r t \hat W(\vecnull) \int_{\RR^d\times\RR^d} \tilde a( -\veceta, \vecy) \tilde b( \veceta, \vecy) \d\veceta \d \vecy + O(r h^\infty+h r^{\infty})\\
%&
%= h r t \hat W(\vecnull) \int_{\RR^d\times\RR^d} a(\vecx, \vecy) b( \vecx, \vecy) \d\vecx \d \vecy + O(r h^\infty+h r^{\infty}) .
%\end{align}
Similarly, using \eqref{Is0},
\begin{equation} \begin{split}
\scrI_{0,1}(s_1)
& = r^{nd} h^d \hat W(\vecnull) \sum_{\vecm_1\in\ZZ^d}   \int_{\RR^d}  \tilde a(-\veceta ,h(\vecm_1+\vecalf+\tfrac12 r^{d-1}\veceta) )\\
& \times \tilde b(\veceta,h(\vecm_1+\vecalf+\tfrac12 r^{d-1}\veceta)) \, d\veceta +O(r^\infty) \\
& = r^{d} \hat W(\vecnull) \int_{\RR^d \times \RR^d} \tilde a( -\veceta, \vecy) \tilde b( \veceta, \vecy) \, \d\veceta\, \d \vecy + O(r^ d h^\infty+r^{\infty}) .
\end{split} \end{equation}
\end{proof}

 Indeed, in the expansion \eqref{theexpansion} the terms $\scrI_{1,1}(s_1)$ and $\scrI_{0,1}(s_1)$ appear with opposite sign and therefore cancel up to an error $O(r^ d h^\infty+r^{\infty})$. The total error term after integrating over $s_1$ is thus obtained by multiplying this by the integration range of size $h r^{1-d} t$.

\section{Equidistribution of horocycles}\label{sec:equidistribution}

At second order we will use the fact that the $\scrI_{\ell,n}$ can be written as functions on some non-compact, finite volume manifold. Specifically, consider the semi-direct product group
$G=\SL(2,\RR)\ltimes\RR^{2d}$ with multiplication law
\begin{equation}
(M,\vecxi)(M',\vecxi') =(MM',\vecxi+M\vecxi') ,
\end{equation}
where $M,M'\in\SL(2,\RR)$ and $\vecxi,\vecxi'\in\RR^d\times\RR^d$;
the action of $\SL(2,\RR)$ on $\RR^d\times\RR^d$ is defined canonically as
\begin{equation}
M\vecxi = \begin{pmatrix}
a\vecx +b\vecy \\
c\vecx +d\vecy \\
\end{pmatrix}, \quad 
M=\begin{pmatrix}
a&b\\
c&d
\end{pmatrix},\quad 
\vecxi= \begin{pmatrix}
\vecx\\
\vecy
\end{pmatrix},
\end{equation}
where $\vecx,\vecy\in\RR^d$.
A convenient parametrization of $\SL(2,\RR)$ can be obtained
by means of the Iwasawa decomposition
\begin{equation}
M = n_-(u) \, \Phi^{-\log v} \, \scrR(\phi) 
\end{equation}
with 
\begin{equation}
n_-(u) = \begin{pmatrix} 1 & u \\ 0 & 1 \end{pmatrix}, \quad \Phi^t = \begin{pmatrix} e^{-t/2} & 0 \\ 0 & e^{t/2} \end{pmatrix}, \quad \scrR(\phi) = \begin{pmatrix} \cos \phi & -\sin \phi \\ \sin \phi & \cos \phi \end{pmatrix}.
\end{equation}

This decomposition is unique for $\tau=u+\i v\in\H$, $\phi\in[0,2\pi)$, where $\H$ denotes the upper half plane $\H=\{ \tau \in\CC: \Im\tau>0\}$. We will use the notation $M=(\tau,\phi)$ and $(M,\vecxi)=(\tau,\phi,\vecxi)$ interchangeably. With this, we have for instance
$n_-(u) \Phi^{-2\log r} = (u+\i r^2,0 )$
and
\begin{equation}
\left(1,\begin{pmatrix} \vecnull \\ \yy \end{pmatrix} \right) n_-(u) \Phi^{-2\log r} = \left(u+\i r^2,0,\begin{pmatrix} \vecnull \\ \yy \end{pmatrix} \right).
\end{equation}

Throughout this section, let $\Gamma$ be a subgroup of $\SL(2,\ZZ)\ltimes(\tfrac12\ZZ)^{2d}$ of finite index. The Haar measure on $G$ induces a $G$-invariant measure on $\GamG$, which will be denoted by $\mu$. Since $\Gamma$ is a lattice in $G$, we have (by definition) $0<\mu(\GamG)<\infty$.

\begin{prop}\label{equim}
Fix $\yy \in \RR^d \backslash \QQ^d$ so that the components of $(1,\trans\yy)$ linearly independent over $\QQ$. Let $w:\RR\to\RR$ piecewise continuous with compact support. Let $F: \Gamma \backslash G \times \RR \to \RR$ be bounded continuous, and $F_r$ be a sequence of continuous, uniformly bounded functions $ \Gamma \backslash G \times\RR \to \RR$ such that $F_r\to F_0$ uniformly on compacta as $r\to 0$. Then, for $\sigma \geq 0$, we have
\begin{multline}
 \lim_{r \to 0}  r^\sigma \int_{\RR} F_r((u+\i r^2,0,(\begin{smallmatrix} \vecnull \\ \yy \end{smallmatrix}) ), r^\sigma u)\, w(r^\sigma u)\, \d u \\
 = \frac{1}{\mu(\GamG)}\int_{\Gamma \backslash G} \int_{\RR} F_0(g,u) \, w(u) \,\d u \,\d \mu(g).
\end{multline}
\end{prop}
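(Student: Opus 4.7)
\emph{Proof plan.} My plan is to reduce to equidistribution of long horocycle orbits in $\Gamma\backslash G$, apply Ratner's theorem under the Diophantine hypothesis on $\yy$, and then transfer from the limit test function $F_0$ to the varying $F_r$ by uniform convergence plus a non-escape-of-mass estimate in the cusp.

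First I would substitute $\tilde u = r^\sigma u$, which absorbs the factor $r^\sigma$ and turns the integral into
\begin{equation*}
\int_{\RR} F_r\bigl(\bigl(r^{-\sigma}\tilde u + \i r^2,\,0,\,(\vecnull,\yy)\bigr),\, \tilde u\bigr)\,w(\tilde u)\,\d\tilde u,
\end{equation*}
over the fixed compact set $\supp w$. A short manipulation with the product law in $G=\SL(2,\RR)\ltimes\RR^{2d}$ then gives the identity
\begin{equation*}
\bigl(r^{-\sigma}\tilde u + \i r^2,\, 0,\, (\vecnull,\yy)\bigr) = \bigl(1,(\vecnull,\yy)\bigr)\cdot\bigl(n_-(r^{-\sigma}\tilde u)\,\Phi^{-2\log r},\,0\bigr),
\end{equation*}
which displays the evaluation point as the fixed left translate by $\gamma_0 = (1,(\vecnull,\yy))$ of a point on the $\SL(2,\RR)$-horocycle at height $v=r^2$. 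Since the corresponding hyperbolic arc length of the portion we traverse is $\asymp r^{-\sigma-2}\to\infty$, we are squarely in the long-horocycle regime.

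The core input I would invoke is the translated-horocycle equidistribution theorem developed in \cite{Marklof02,Marklof03}: under linear independence of $1,y_1,\ldots,y_d$ over $\QQ$, for every bounded continuous $f:\Gamma\backslash G\to\RR$ and every bounded interval $[a,b]$,
\begin{equation*}
\lim_{v\to 0}\frac{1}{b-a}\int_a^b f\bigl(u+\i v,\,0,\,(\vecnull,\yy)\bigr)\,\d u \;=\; \frac{1}{\mu(\Gamma\backslash G)}\int_{\Gamma\backslash G} f\,\d\mu.
\end{equation*}
This is Ratner's measure-classification theorem applied to the unipotent flow $g\mapsto g\cdot(n_-(s),0)$: a non-trivial $\QQ$-linear relation among $1,y_1,\ldots,y_d$ would confine the orbit closure of $\Gamma\gamma_0$ to a proper homogeneous subspace and hence produce a non-Haar limit measure. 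Specialising to $f(g)=F_0(g,\tilde u)$, partitioning $\supp w$ into small sub-intervals on which both $F_0(\,\cdot\,,\tilde u)$ and $w$ are approximately constant in $\tilde u$, and then summing by Fubini, delivers the claim of the proposition with $F_r$ replaced by $F_0$.

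To pass from $F_0$ to $F_r$, I would split $F_r = F_0+(F_r-F_0)$. Given $\varepsilon>0$, choose a compact $K\subset\Gamma\backslash G$ such that the $\tilde u$-Lebesgue measure of the set on which the orbit leaves $K$ is $\leq\varepsilon$ uniformly for all sufficiently small $r$; on $K$ the uniform convergence $F_r\to F_0$ on compacta makes the contribution of $F_r-F_0$ negligible, while off $K$ the common sup-norm bound on the $F_r$ supplies an $O(\varepsilon)$ error. I expect the main technical obstacle to be precisely this uniform non-escape-of-mass estimate: in the pure $\SL(2,\RR)$ setting it is the classical Dani--Margulis non-divergence bound for expanding horocycles, but the product-like cusp structure of $G=\SL(2,\RR)\ltimes\RR^{2d}$ also demands control of excursions in the affine $\RR^{2d}$-fibre, and this is the step that invokes the Diophantine hypothesis on $\yy$ quantitatively rather than merely qualitatively.
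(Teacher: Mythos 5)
Your overall route matches the paper's: both reduce to the translated-horocycle equidistribution theorem of \cite{Marklof02} (Ratner-based, requiring only the linear-independence hypothesis on $(1,\trans\yy)$), and then extend from $g$-dependent test functions to $(g,u)$-dependent sequences $F_r$ by approximating on small $\tilde u$-subintervals of $\supp w$ and using uniform convergence on compacta; the paper cites this second step as identical to \cite[Thm.\ 5.3]{partI} rather than spelling it out. Your change of variables and the factorisation $(r^{-\sigma}\tilde u+\i r^2,0,(\vecnull,\yy))=(1,(\vecnull,\yy))\,(n_-(r^{-\sigma}\tilde u)\Phi^{-2\log r},0)$ are correct.

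However, your final paragraph contains a misconception worth correcting. Proposition~\ref{equim} carries no Diophantine hypothesis on $\yy$ — only irrationality and $\QQ$-linear independence — and your suggestion that the non-escape-of-mass step ``invokes the Diophantine hypothesis on $\yy$ quantitatively'' and requires a Dani--Margulis-type non-divergence bound is not right in this setting. Because the $F_r$ are \emph{uniformly bounded} and $w$ is compactly supported, the tightness you need is an immediate consequence of the very equidistribution statement \eqref{step1} applied to a bounded continuous function sandwiching $\mathbb{1}_{K^c}$ for a large compact $K\subset\GamG$: the time average converges to $\mu(K^c)/\mu(\GamG)$, which is small, so the fraction of time the horocycle arc spends outside $K$ is uniformly small for all small $r$. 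No separate quantitative non-divergence input is required. Diophantine conditions on $\yy$ (and the genuine escape-of-mass estimates they buy) enter only later, in Propositions~\ref{diophprop}, \ref{diophprop234} and \ref{propr}, where the test functions $\Psi_{R,f}$ and $\Theta_f$ are unbounded in the cusp and the naive argument above breaks down.
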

\begin{proof}
The proof of Theorem 5.1 in \cite{Marklof02} 
tells us that for $F : \Gamma \backslash G \to \RR$ bounded continuous, we have
\begin{equation} \label{step1}
\lim_{r\to 0} r^\sigma \int_\RR F((u+\i r^2,0,(\begin{smallmatrix} \vecnull \\ \yy \end{smallmatrix})  )) \, w(r^\sigma u)\, \d u = \frac{1}{\mu(\Gamma\backslash G)} \int_{\Gamma\backslash G} F \d \mu \int_\RR w(u) \d u.
\end{equation}
The claim now follows from the same argument as \cite[Theorem 5.3]{partI}.
\end{proof}

We define the subgroup $\Gamma_\infty$ by
\begin{align}
\Gamma_\infty = \left\{ \begin{pmatrix} 1 & m \\ 0 & 1 \end{pmatrix} : m \in \ZZ \right\} \subset \SLZ
\end{align}
and for $\gamma = \left(\begin{smallmatrix} a & b \\ c & d \end{smallmatrix}\right)$ use the notation
\begin{align}
v_\gamma := \Im(\gamma \tau) = \frac{v}{|c\tau+d|^2}, \quad \vecy_\gamma := c \vecx + d \vecy.
\end{align}
Then, with $\chi_R$ the characteristic function of $[R,\infty)$ we define the characteristic function $X_R : \fH \to \RR_{\geq 0}$ by
\begin{align}
X_R(\tau) = \sum_{\gamma \in (\Gamma_\infty \cup - \Gamma_\infty) \backslash \SLZ} \chi_R(v_\gamma).
\end{align}
Note that by construction $X_R$ is $\SLZ$-invariant.
For $f:\RR\to\RR_{\geq 0}$ of rapid decay at $\pm\infty$ and $\beta\in\RR$, the function $\Psi_{R,f}^\beta : G \to \RR_{\geq 0}$ is defined by
\begin{align}\label{def:PsiRf}
\Psi_{R,f}^\beta(\tau,\vecxi) = \sum_{\gamma \in \Gamma_\infty \backslash \SLZ} \sum_{\vecm \in \ZZ^d} f( (\yy_\gamma + \vecm) v_\gamma^{1/2}) \, v_\gamma^{\beta d/2} \, \chi_R(v_\gamma),
\end{align}
and for convenience when $\beta=1$ we  write $\Psi_{R,f} :=\Psi_{R,f}^1$.
The function $\Psi_{R,f}^\beta$ is left-invariant under $\SLZ\ltimes(\tfrac12\ZZ)^{2d}$. Both $X_R$ and $\Psi_{R,f}^\beta$ can thus be viewed as functions on $G$ and, since $\Gamma$ is a finite-index subgroup of $\SLZ\ltimes(\tfrac12\ZZ)^{2d}$, are also left $\Gamma$-invariant.

\begin{prop}{\cite[Proposition 6.4]{Marklof02}} \label{diophprop}
Let $\yy$ be Diophantine of type $\kappa$, $w:\RR\to\RR$ piecewise continuous with compact support, and $0<\epsilon <1$ and $0<\epsilon'<1/(\kappa-1)$. Then, for every $R\geq 1$,
\begin{multline}
\limsup_{r \to 0} r^{d-2} \int_{|u|>r^{2-\epsilon}} \Psi_{R,f}\left(u+\i r^2, \begin{pmatrix} \vecnull \\ \yy \end{pmatrix}\right) \, w(r^{d-2} u)\, \d u \\
\ll_{\epsilon,\epsilon'} R^{-(1/(\kappa-1)-d+2)/2} + R^{-\epsilon'/2}.
\end{multline}
\end{prop}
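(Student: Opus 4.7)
The plan is to sum the definition of $\Psi_{R,f}$ term-by-term over cosets $\gamma \in \Gamma_\infty\backslash\SLZ$, parametrised by coprime pairs $(c,d)$ with $c\geq 1$ (the $c=0$ coset drops out since $v_\gamma = r^2 < R$ for small $r$). With $\vecxi = (\vecnull, \yy)$ one has $\yy_\gamma = d\yy$, so
\begin{equation*}
\Psi_{R,f}(u+\i r^2,\vecxi) = \sum_{\substack{c\geq 1\\ \gcd(c,d)=1}} \sum_{\vecm\in\ZZ^d} f\bigl((d\yy+\vecm)v_\gamma^{1/2}\bigr)\, v_\gamma^{d/2}\, \chi_R(v_\gamma),
\end{equation*}
where $v_\gamma = r^2/((cu+d)^2 + c^2 r^4)$, and $\chi_R(v_\gamma)=1$ forces $|cu+d|\leq r/\sqrt{R}$ and $c \leq 1/(r\sqrt{R})$.

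First I would observe that the restriction $|u|>r^{2-\epsilon}$ is designed precisely to kill the ``dangerous'' coset $d=0,\,c=1$: there the inner sum contains the non-decaying term $f(\vecnull)$, but $\chi_R = 1$ requires $|u|\leq r/\sqrt{R} \ll r^{2-\epsilon}$, so this coset does not contribute. For $d\neq 0$, Schwartz decay of $f$ combined with the Diophantine bound $\dist(d\yy,\ZZ^d) \gg |d|^{-(\kappa-1)}$ yields, for any $N$,
\begin{equation*}
\sum_{\vecm\in\ZZ^d} f\bigl((d\yy+\vecm)v_\gamma^{1/2}\bigr) \ll_N \bigl(1 + v_\gamma^{1/2}|d|^{-(\kappa-1)}\bigr)^{-N} + v_\gamma^{-Nd/2}.
\end{equation*}

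Next I would split the sum over $d\neq 0$ at a threshold $D = R^{1/(2(\kappa-1))}$. For $1\leq |d| \leq D$, the first term is of order $v_\gamma^{-N/2}|d|^{N(\kappa-1)}$; combined with the factor $v_\gamma^{d/2}\chi_R(v_\gamma)$, the admissible $u$-window of length $\ll r/(c\sqrt{R})$ for each $(c,d)$, and summation over $c\leq 1/(r\sqrt{R})$ and $|d|\leq D$, followed by the prefactor $r^{d-2}$, the result reduces after elementary arithmetic to the first term $R^{-(1/(\kappa-1)-d+2)/2}$; the choice of $D$ is exactly what balances the two competing factors in the exponent. For $|d|>D$ the Diophantine bound is too weak and I would instead estimate the inner sum trivially and extract decay from the sparsity of the cusp windows: the constraint $|cu+d|\leq r/\sqrt{R}$ forces $u$ into a thin interval around $-d/c$, and for large $|d|$ these intervals become increasingly rare. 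Summing $v_\gamma^{d/2}$ over the admissible $(c,d)$ with $|d|>D$, optimising the truncation against $\epsilon' < 1/(\kappa-1)$, produces the second term $R^{-\epsilon'/2}$.

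I expect the main obstacle to be the residual regime $|d|>D$: without Diophantine help, the required decay must come purely from the measure-theoretic sparsity of the cusp windows together with the restriction $|u|>r^{2-\epsilon}$, while the factor $v_\gamma^{d/2}$ amplifies contributions deep in the cusp. Choosing the truncation $D$ as a function of $R$ and $\kappa$ so that the two regimes yield the stated exponents simultaneously, and correctly handling the coprimality constraint on $(c,d)$ throughout the sums, is where the technical work concentrates.
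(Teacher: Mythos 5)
The paper itself does not prove this proposition: it is imported verbatim as \cite[Proposition 6.4]{Marklof02}, so the comparison is with that source rather than with an in-paper argument. Your high-level plan (expand $\Psi_{R,f}$ over the cosets $(c,d)$ with $c\geq 1$, use the Schwartz decay of $f$ together with $\|d\yy\|\gg |d|^{-(\kappa-1)}$ for the inner $\vecm$-sum, and split the $d$-sum at a threshold $D$ to be optimised in $R$) matches the strategy of that proof. However, there is a concrete error in the first non-trivial step, and the part you yourself identify as the technical core is left too vague to check.

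The claim that the restriction $|u|>r^{2-\epsilon}$ eliminates the $(c,d)=(1,0)$ coset because $\chi_R=1$ would force $|u|\leq r/\sqrt R\ll r^{2-\epsilon}$ has the inequality backwards. Since $0<\epsilon<1$ one has $2-\epsilon>1$, hence $r^{2-\epsilon}=o(r/\sqrt R)$ as $r\to 0$; the window $r^{2-\epsilon}<|u|<r/\sqrt R$ is therefore \emph{nonempty} for all small $r$ and this coset does contribute. The correct reason it disappears in the $\limsup$ is more delicate: on that window $v_\gamma=r^2/(u^2+r^4)\asymp r^2/u^2$, so that (for the non-decaying term $\vecm=\vecnull$)
\begin{equation*}
r^{d-2}\int_{r^{2-\epsilon}<|u|<r/\sqrt R} f(\vecnull)\,v_\gamma^{d/2}\,w(r^{d-2}u)\,\d u
\ll r^{2d-2}\int_{r^{2-\epsilon}}^{r/\sqrt R}\frac{\d u}{u^{d}}
\ll r^{\epsilon(d-1)},
\end{equation*}
which tends to $0$ as $r\to 0$ for every fixed $R$ and $\epsilon>0$. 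So the restriction $|u|>r^{2-\epsilon}$ does matter, but by shrinking $v_\gamma$ and producing a vanishing power of $r$, not by making the cusp condition $\chi_R$ void; your explanation would fail if one examined it quantitatively.

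For the remainder, you correctly recognise that the $|d|>D$ regime and the passage from the raw cusp-window counting to the two explicit exponents $R^{-(1/(\kappa-1)-d+2)/2}$ and $R^{-\epsilon'/2}$ is where the work lies, but the proposal stops at naming the obstacle rather than resolving it. In particular it is not explained how the parameter $\epsilon'$ enters (the constraint $0<\epsilon'<1/(\kappa-1)$ is part of the statement, and the paper remarks that the $R^{-\epsilon'/2}$ term is only relevant for $d=2$), nor is there a computation showing that summing $v_\gamma^{d/2}$ over the admissible windows with the Diophantine gap inserted produces the first exponent with the stated choice of $D$. As written the proposal identifies the right ingredients but does not constitute a proof, and the one step that is stated with confidence is incorrect.
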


Note that the term $R^{-\epsilon'/2}$ is only relevant for $d=2$. The expression vanishes as $R \to \infty$ if $\kappa < (d-1)/(d-2)$. The following generalization to $\beta<1$ holds. Note the range of integration is now over all $u \in \RR$.

\begin{prop} \label{diophprop234}
Let $0\leq \beta<1$, $\yy$ be Diophantine of type $\kappa$, $w:\RR\to\RR$ piecewise continuous with compact support. Then, for every $R\geq 1$,
\begin{multline}
\limsup_{r \to 0} r^{d-2} \int_{\RR} \Psi_{R,f}^\beta \left(u+\i r^2, \begin{pmatrix} \vecnull \\ \yy \end{pmatrix} \right) \, w(r^{d-2} u)\, \d u \\
\ll R^{-(1/(\kappa-1)-\beta d+2)/2} + R^{(\beta-1)d/2}.
\end{multline}
\end{prop}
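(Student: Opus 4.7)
My plan is to reduce the statement to Proposition \ref{diophprop} by means of a simple pointwise bound, and then handle the small neighbourhood of $u=0$ (not covered by that proposition) by a direct lattice-point estimate.

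The key observation is that in the support of any summand of $\Psi_{R,f}^\beta$ one has $v_\gamma \geq R$, so since $\beta\leq 1$,
\[
v_\gamma^{\beta d/2} = v_\gamma^{d/2}\, v_\gamma^{(\beta-1)d/2} \leq R^{(\beta-1)d/2}\, v_\gamma^{d/2},
\]
which gives the pointwise bound $\Psi_{R,f}^\beta(\tau,\vecxi) \leq R^{(\beta-1)d/2}\, \Psi_{R,f}(\tau,\vecxi)$. I will fix a small $\epsilon>0$ and split
\[
r^{d-2}\int_{\RR} \Psi_{R,f}^\beta(u+\i r^2,(\vecnull,\yy))\, w(r^{d-2}u)\,\d u = I_{\mathrm{out}}(r) + I_{\mathrm{in}}(r),
\]
where $I_{\mathrm{out}}$ is the integral over $|u|>r^{2-\epsilon}$ and $I_{\mathrm{in}}$ over $|u|\leq r^{2-\epsilon}$.

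For $I_{\mathrm{out}}(r)$ the pointwise bound and Proposition \ref{diophprop} (applied with $|w|$ in place of $w$) immediately give
\[
\limsup_{r\to 0} |I_{\mathrm{out}}(r)| \ll R^{(\beta-1)d/2}\bigl(R^{-(1/(\kappa-1)-d+2)/2} + R^{-\epsilon'/2}\bigr),
\]
and the first summand is exactly $R^{-(1/(\kappa-1)-\beta d+2)/2}$, while the second is $\leq R^{(\beta-1)d/2}$; both match the required form.

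It remains to bound $I_{\mathrm{in}}(r)$. On $|u|\leq r^{2-\epsilon}$ the cosets $\gamma\in\Gamma_\infty\backslash\SLZ$ with $v_\gamma\geq R$ are characterised by their bottom row $(c,d)$ satisfying $(cu+d)^2+c^2r^4\leq r^2/R$, hence $|c|\leq (r\sqrt R)^{-1}$ and, for each such $c$, at most $O(1)$ integers $d$ qualify. Using $\vecxi=(\vecnull,\yy)$ one has $\yy_\gamma=d\yy$, so the inner theta-sum is $\sum_\vecm f((d\yy+\vecm)v_\gamma^{1/2})$. Swapping sum and integral and substituting $s=cu+d$, the contribution of each $(c,d)$ reduces to an explicit one-dimensional integral of $v_\gamma^{d/2}f(\vecw v_\gamma^{1/2})$ over $|s|\leq r/\sqrt R$, where $\vecw=d\yy+\vecm$. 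The cosets with $d=0$ (forcing $\vecm=\vecnull$ and $\vecw=\vecnull$) produce the dominant contribution, which evaluates to $O(r^{2-d}/|c|^d)$; summing over $|c|\leq (r\sqrt R)^{-1}$, pulling in the prefactors $r^{d-2}$ and $R^{(\beta-1)d/2}$, and using $\|w\|_\infty<\infty$, this gives $O(R^{(\beta-1)d/2})$. The cosets with $d\neq 0$ require the Diophantine hypothesis: since $\min_{\vecm\in\ZZ^d}\|d\yy+\vecm\| \gg |d|^{-(\kappa-1)}$, the factor $f(\vecw v_\gamma^{1/2})$ decays faster than any polynomial in $|d|v_\gamma^{1/2}$, so a crude summation over the (finitely many) admissible $(c,d)$ shows that these terms are dominated by the $d=0$ contribution. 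Combining the two bounds yields $\limsup_{r\to 0}(I_{\mathrm{out}}+I_{\mathrm{in}}) \ll R^{-(1/(\kappa-1)-\beta d+2)/2} + R^{(\beta-1)d/2}$, as required.

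The main obstacle is the control of $I_{\mathrm{in}}$: Proposition \ref{diophprop} cannot be invoked there, and one must replace it by the hands-on lattice-point plus Poisson/Diophantine argument sketched above. This is essentially a localised version of the $\beta=1$ estimate of Marklof, adapted to the smaller exponent $\beta d/2$ so that the $d=0$ stratum only contributes the desired size $R^{(\beta-1)d/2}$ rather than the $\beta=1$ divergence that made the restriction $|u|>r^{2-\epsilon}$ necessary in the original proposition.
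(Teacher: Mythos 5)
Your argument is correct but takes a genuinely different route from the paper. You split the $u$-integral into $|u|\leq r^{2-\epsilon}$ and $|u|>r^{2-\epsilon}$, handling the outer region by the pointwise bound $\Psi_{R,f}^\beta\leq R^{(\beta-1)d/2}\Psi_{R,f}$ (valid on $\supp\chi_R$) together with Proposition \ref{diophprop} applied as a black box, and the inner region by a direct lattice-point computation. The paper instead decomposes $\Psi_{R,f}^\beta$ by coset type --- the $(c,d)=(\pm 1,0)$ contribution versus $c>0,d\neq 0$ --- and handles each over the whole $u$-line: the $(1,0)$ stratum via the substitution $u=vt$ and the bound $v^{(1-\beta)d/2}(1+t^2)^{-\beta d/2}\chi_R\leq R^{(\beta-1)d/2}(1+t^2)^{-d/2}\chi_R$, the $d\neq 0$ strata by the same $R^{(\beta-1)d/2}$-factoring followed by a re-run of the calculation behind Proposition \ref{diophprop}. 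Both approaches rest on the same key pointwise bound; your version is more modular (it does not need to reopen the proof of Proposition \ref{diophprop}), while the paper avoids the $u$-region split entirely. Two small points about your inner-region analysis: the constraints $|u|\leq r^{2-\epsilon}$, $|cu+d|\leq r/\sqrt R$, $|c|\leq(r\sqrt R)^{-1}$ in fact rule out all cosets with $d\neq 0$ once $r$ is small (since then $|cu+d|\geq |d|-|c||u|\geq 1 - r^{1-\epsilon}/\sqrt R>r/\sqrt R$), so the Diophantine input there is vacuous; and ``summing over $|c|\leq(r\sqrt R)^{-1}$'' in the $d=0$ stratum is a slight overcount since coprimality of $(c,0)$ forces $c=\pm 1$, though your resulting bound $O(R^{(\beta-1)d/2})$ is unchanged because $\sum_{c\geq 1}r^{2-d}c^{-d}=O(r^{2-d})$ for $d\geq 2$.
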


The right hand side vanishes as $R \to \infty$ if and only if \begin{equation}\kappa < \begin{cases} \infty & \beta \leq 2/d  \\ (\beta d-1)/(\beta d-2) & \beta > 2/d \end{cases}.\end{equation} In practice, we  want both Propositions \ref{diophprop} and \ref{diophprop234} to hold simultaneously. We do this by taking $\kappa < (d-1)/(d-2)$ and use the fact that for $2/d \leq \beta < 1$ we have $(\beta d- 1)/(\beta d -2) > (d-1)/(d-2)$.

\begin{proof}
Writing $\tau = u+\i v$ and $v = r^{2}$ we have the explicit representation
\begin{multline}\label{expliPsi}
\Psi_{R,f}^\beta \left(\tau,\begin{pmatrix} \vecnull \\ \yy \end{pmatrix} \right)
= 2 \sum_{\vecm \in \ZZ^d} f\left( \vecm \frac{v^{1/2}}{|\tau|} \right) \frac{v^{\beta d/2}}{|\tau|^{\beta d}} \chi_R \left( \frac{v}{|\tau|^2} \right) \\
+ 2 \sum_{\substack{(c,d) \in \ZZ^2 \\ gcd (c,d) = 1 \\ c>0, d\neq 0}} \sum_{\vecm \in \ZZ^d} f\left( (d\yy+\vecm) \frac{v^{1/2}}{|c\tau+d|}\right) \frac{v^{\beta d/2} }{|c\tau+d|^{\beta d}} \chi_R \left( \frac{v}{|c\tau+d|^2} \right).
\end{multline}
For the first term we make the substitution $u = vt$ in the integral, which yields
\begin{multline}
2v^{d/2-1}\int_{\RR} \, w(v^{d/2-1}u) \sum_{\vecm \in \ZZ^d} f\left( \vecm \frac{v^{1/2}}{|\tau|} \right) \frac{v^{\beta d/2}}{|\tau|^{\beta d}} \chi_R \left( \frac{v}{|\tau|^2} \right) \d u \\
= 2 v^{(1-\beta)d/2} \int_{\RR} \frac{w(v^{d/2}t)}{(1+t^2)^{\beta d/2}} \sum_{\vecm \in \ZZ^d} f\left(  \frac{\vecm}{v^{1/2}(1+t^2)^{1/2}} \right) 
 \chi_R \left( \frac{1}{v(1+t^2)} \right) \d t .
\end{multline}
Under the assumption that $0< \beta < 1$ we have
\begin{equation}\label{eq:this2}
\frac{v^{(1-\beta)d/2}}{(1+t^2)^{\beta d/2}} \, \chi_R \left( \frac{1}{v(1+t^2)} \right)  \leq  \frac{R^{(\beta-1)d/2}}{(1+t^2)^{d/2}} \,   \chi_R \left( \frac{1}{v(1+t^2)} \right) 
\end{equation}
and thus obtain the bound
\begin{multline}
\limsup_{v\to 0} 2v^{d/2-1}\int_{\RR} w(v^{d/2-1}u) \sum_{\vecm \in \ZZ^d} f\left( \vecm \frac{v^{1/2}}{|\tau|} \right) \frac{v^{\beta d/2}}{|\tau|^{\beta d}} \chi_R \left( \frac{v}{|\tau|^2} \right) \d u  \\ 
 \leq 2 R^{(\beta-1)d/2} \, w(0) \, f(\vecnull) \, \int_{\RR} \frac{\d t}{(1+t^2)^{d/2}} + O(R^{-\infty}).
\end{multline}
For the second term, using 
\begin{align}
& \frac{v^{\beta d/2}}{|c\tau+d|^{\beta d}} \chi_R \left( \frac{v}{|c\tau+d|^2} \right)  \leq \frac{v^{ d/2}}{|c\tau+d|^{ d}} R^{(\beta-1)d/2} \chi_R \left( \frac{v}{|c\tau+d|^2} \right),
\end{align}
we see that
\begin{multline}
\sum_{\substack{(c,d) \in \ZZ^2 \\ gcd (c,d) = 1 \\ c>0, d\neq 0}} \sum_{\vecm \in \ZZ^d} v^{d/2-1} \int_\RR f\left( (d \yy + \vecm) \frac{v^{1/2}}{|c\tau+d|} \right) \times \\[-0.6cm] \times
\frac{v^{\beta d/2}}{|c\tau+d|^{\beta d}} \chi_R \left( \frac{v}{|c\tau+d|^2} \right) w(v^{d/2-1} u) \d u.
\end{multline}
is bounded above by
\begin{multline}\label{eq:that2}
R^{(\beta-1)d/2} \sum_{\substack{(c,d) \in \ZZ^2 \\ gcd (c,d) = 1 \\ c>0, d\neq 0}} \sum_{\vecm \in \ZZ^d} v^{d/2-1} \int_\RR f\left( (d \yy + \vecm) \frac{v^{1/2}}{|c\tau+d|} \right) \times \\[-0.6cm] \times
\frac{v^{ d/2}}{|c\tau+d|^{d}} \chi_R \left( \frac{v}{|c\tau+d|^2} \right) w(v^{d/2-1} u) \d u.
\end{multline}
This reduces the problem to the same calculation as in the proof of Proposition \ref{diophprop}, which yields that \eqref{eq:that2} is bounded above by
\begin{equation}
R^{(\beta-1)d/2}  (R^{-(1/(\kappa-1)-d+2)/2} + 1) = R^{-(1/(\kappa-1)-\beta d+2)/2} + R^{((\beta-1)d)/2} .
\end{equation}
\end{proof}

Fix a compact interval $A\subset\RR$. We say {\em $F:\GamG\times \RR\to\CC$ is dominated by $\Psi_{R,f}$ on $\GamG\times A$} if there are positive constants $L,R_0$ such that $|F((\tau,\phi,\vecxi),u') |X_R(\tau) \leq L(1 + \Psi_{R,f}(\tau,\phi,\vecxi))$ for all  $(\tau,\phi,\vecxi)\in G$, $u'\in A$ and $R\geq R_0$. A sequence of functions $F_r:\GamG\times \RR\to\CC$ is {\em uniformly dominated} if $L,R_0$ are independent of $r$.

\begin{prop}\label{propr}
Assume $\yy$ is Diophantine of type $\kappa < (d-1)/(d-2)$ with the components of $(1,\trans\yy)$ linearly independent over $\QQ$. Let $w:\RR\to\RR$ be piecewise continuous with compact support. Let $F_0: \Gamma \backslash G \times \RR \to \RR$ be continuous and dominated by $\Psi_{R,f}$ on $\GamG\times\supp w$. Let $F_r$ be a sequence of continuous functions $ \Gamma \backslash G \times\RR \to \RR$ uniformly dominated by $\Psi_{R,f}$ on $\GamG\times\supp w$,  such that $F_r\to F_0$ uniformly on compacta as $r\to 0$. Then for any $0 < \epsilon < 2$ we have
\begin{multline}
\lim_{r \to 0} r^{d-2} \int_{|u|>r^{2-\epsilon}} F_r((u+\i r^2,0, (\begin{smallmatrix} \vecnull \\ \yy \end{smallmatrix}) ), r^{d-2} u) \,w(r^{d-2} u)\, \d u \\
= \frac{1}{\mu(\Gamma \backslash G)} \int_\RR \int_{\Gamma \backslash G} F_0(g,u) \,w(u)\,\d \mu(g)\, \d u.
\end{multline}
\end{prop}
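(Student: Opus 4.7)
The plan is to truncate $F_r$ away from the cusp of $\GamG$ by a continuous cutoff, apply the equidistribution result (Proposition \ref{equim}) to the truncated bounded-continuous function, and control the residual contribution using Proposition \ref{diophprop}; the cutoff parameter $R$ is sent to infinity after $r \to 0$. Concretely, for each $R \geq R_0$ I would fix a continuous $\eta_R: \GamG \to [0,1]$ equal to $1$ on the compact region $\{X_R = 0\}$ and vanishing in a neighbourhood of the cusp, chosen so that $1-\eta_R \leq X_R$. On $\supp \eta_R$ the majorant $\Psi_{R,f}$ is bounded, so $F_r\, \eta_R$ is continuous and uniformly bounded in $r$ on $\GamG \times \supp w$ (after multiplying by a fixed bump function in the real variable equal to $1$ on $\supp w$).

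Applying Proposition \ref{equim} with $\sigma = d-2$ to $F_r \eta_R$ yields
\begin{equation*}
\lim_{r \to 0} r^{d-2} \int_{\RR} (F_r \eta_R)\Bigl(\bigl(u+\i r^2, 0, (\begin{smallmatrix}\vecnull\\ \yy\end{smallmatrix})\bigr), r^{d-2} u\Bigr) w(r^{d-2} u)\, du = \frac{1}{\mu(\GamG)} \int_{\RR}\int_{\GamG} F_0\, \eta_R\, w\, d\mu\, du.
\end{equation*}
The omitted range $|u| \leq r^{2-\epsilon}$ contributes $O(r^{d-\epsilon}) \to 0$ by $d \geq 2$ and $\epsilon < 2$, so the same limit holds with integration restricted to $|u|>r^{2-\epsilon}$. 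For the residual $F_r (1-\eta_R)$ I use $|F_r|(1-\eta_R) \leq L(1+\Psi_{R,f})$: the $\Psi_{R,f}$-piece is bounded by $R^{-(1/(\kappa-1)-d+2)/2} + R^{-\epsilon'/2}$ via Proposition \ref{diophprop}, vanishing as $R \to \infty$ precisely because $\kappa < (d-1)/(d-2)$, while the constant-$1$ piece $Lr^{d-2} \int_{|u|>r^{2-\epsilon}} (1-\eta_R)(u+\i r^2) |w(r^{d-2} u)|\,du$ converges, by another application of Proposition \ref{equim} to the bounded continuous $1-\eta_R$, to $L\mu(\GamG)^{-1}\|w\|_1 \int_{\GamG}(1-\eta_R)\,d\mu$, which also vanishes as $R \to \infty$ by monotone convergence.

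Finally, to interchange the two limits: by dominated convergence on $\GamG \times \supp w$ with integrable majorant $L(1+\Psi_{R_0,f})|w|$ (integrability of $\Psi_{R_0,f}$ on $\GamG$ being a standard Eisenstein-type estimate, as the $\vecm$-sum in its definition is $\sim v_\gamma^{-d/2}\int f$ in the cusp, cancelling the factor $v_\gamma^{d/2}$), we obtain $\int_{\RR}\int_{\GamG} F_0 \eta_R w\, d\mu\, du \to \int_{\RR}\int_{\GamG} F_0 w\, d\mu\, du$ as $R \to \infty$, proving the claim. The main obstacle I anticipate is the separate handling of the ``$+1$'' constant in the domination bound, which does not fold into the $\Psi_{R,f}$-piece and requires the additional equidistribution argument for $1-\eta_R$ above, together with the careful simultaneous management of the $r \to 0$ and $R \to \infty$ limits; the underlying equidistribution and Diophantine content is already packaged in Propositions \ref{equim} and \ref{diophprop}.
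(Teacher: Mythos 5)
Your proof follows the same strategy as the paper: truncate $F_r$ away from the cusp, apply the equidistribution statement (Proposition \ref{equim}) to the bounded piece, control the cuspidal contribution via Proposition \ref{diophprop}, and then send $R\to\infty$. The paper works directly with $J_{r,R}=F_r(1-X_R)$ and remarks that Proposition \ref{equim} extends to functions whose discontinuity set is $\mu$-null (or that one may smooth $\chi_R$); your continuous cutoff $\eta_R$ sidesteps that remark and is an equally good choice. There are, however, two points where your reasoning should be tightened.

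First, the inequality you state, $|F_r|(1-\eta_R)\le L(1+\Psi_{R,f})$, produces a constant term whose integral against $w$ is $L\,\|w\|_1$, which does \emph{not} vanish as $R\to\infty$; yet in the very next line you compute the constant piece as $L\,r^{d-2}\int(1-\eta_R)(u+\i r^2)|w(r^{d-2}u)|\,\d u$, which is what you actually need. The bound you should state (and are implicitly using) is the sharper $|F_r|(1-\eta_R)\le L(1-\eta_R)+L\Psi_{R,f}$. This follows because $\supp(1-\eta_R)\subseteq\{X_R\ge 1\}$ (by your construction $1-\eta_R\le X_R$ with $X_R$ integer-valued) and on $\{X_R\ge 1\}$ the domination hypothesis gives $|F_r|\le L(1+\Psi_{R,f})/X_R\le L(1+\Psi_{R,f})$, hence $|F_r|(1-\eta_R)\le L(1-\eta_R)(1+\Psi_{R,f})\le L(1-\eta_R)+L\Psi_{R,f}$.

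Second, your justification that $F_r\,\eta_R$ is uniformly bounded ``because $\Psi_{R,f}$ is bounded on $\supp\eta_R$'' does not go through on all of $\supp\eta_R$: on $\{X_R=0\}$ (where $\eta_R=1$) the domination inequality is vacuous, since both $X_R$ and $\Psi_{R,f}$ vanish there and yield no control on $F_r$. The uniform bound must instead come from the continuity of $F_r$, the compactness of $\supp\eta_R\times\supp w$ (guaranteed because $\eta_R$ vanishes near the cusp), and the hypothesis that $F_r\to F_0$ uniformly on compacta, which gives a bound independent of $r$ for all small $r$. The paper's assertion that ``$J_{r,R}$ is bounded'' relies on the same compactness/uniform-convergence argument. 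With these two fixes your argument is correct and essentially coincides with the paper's proof; the paper phrases the conclusion via a $\liminf$/$\limsup$ squeeze under the reduction $F_r,w\ge 0$, whereas your two-parameter $(r\to 0$, then $R\to\infty)$ argument is an equivalent presentation.
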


\begin{proof}
(This follows the proof of \cite[Theorem 6.8/Corollary 6.10]{Marklof02}.)
We may assume without loss of generality that $F_r$ and $w$ are real-valued and non-negative. Set 
\begin{align}
J_{r,R}((\tau,\phi,\vecxi) ,u') & = F_r((\tau,\phi,\vecxi) ,u')(1-X_R(\tau)).
\end{align}
Then $J_{r,R}$ is bounded and thus
\begin{multline}
 \int_{|u|>r^{2-\epsilon}} J_{r,R}((u+\i r^2,0, (\begin{smallmatrix} \vecnull \\ \yy \end{smallmatrix}) ), r^{d-2} u)\, w(r^{d-2}u) \, \d u \\
= \int_\RR J_{r,R}((u+\i r^2,0, (\begin{smallmatrix} \vecnull \\ \yy \end{smallmatrix}) ), r^{d-2} u)\, w(r^{d-2}u) \, \d u + O(r^{2-\epsilon}) .
\end{multline}
By Proposition \ref{equim}, which (by a standard probabilistic argument) extends to functions such as $J_{r,R}$ whose points of discontinuity are contained in a set of $\mu$-measure zero (alternatively simply smooth the characteristic function $\chi_R$ to make $J_{r,R}$ continuous),
\begin{multline} 
 \lim_{r \to 0} r^{d-2}\int_\RR J_{r,R}((u+\i r^2,0, (\begin{smallmatrix} \vecnull \\ \yy \end{smallmatrix}) ), r^{d-2} u)\, w(r^{d-2}u) \, \d u \\
= \frac{1}{\mu(\Gamma \backslash G)} \int_\RR  \int_{\Gamma \backslash G} J_{0,R}(g,u') w(u') \d \mu(g) \d u'.
\end{multline}

Furthermore, $F_0((\tau,\phi,\vecxi) ,u') X_R(\tau) \leq L X_R(\tau) +L \Psi_{R,f}(\tau,\vecxi)$ for large $R$, and hence
\begin{multline}
\int_\RR \int_{\Gamma\backslash G} F_0((\tau,\phi,\vecxi) ,u') X_R(\tau) w(u') \d \mu \d u'  \\ \leq \int_\RR  w(u') du'\int_{\Gamma\backslash G} (L X_R + L\Psi_{R,f}) \d \mu \ll R^{-1},
\end{multline}
cf.~\cite[\S 6.2]{Marklof02}.
Combining this with the result for $J_{0,R}$ yields
\begin{align}
\int_\RR  \int_{\Gamma\backslash G} J_{0,R}(g,u') w(u') \d \mu(g) \d u' = \int_\RR  \int_{\Gamma\backslash G} F_0(g,u') w(u') \d \mu(g) \d u' + O(R^{-1}).
\end{align}
In summary, we have shown thus far that
\begin{equation} \begin{split}
&\liminf_{r \to 0} r^{d-2}\int_{|u|>r^{2-\epsilon}} F_r((u+\i r^2,0,(\begin{smallmatrix} \vecnull \\ \yy \end{smallmatrix}) ), r^{d-2} u)\, w(r^{d-2}u) \, \d u \\
& \geq \lim_{r \to 0} r^{d-2}\int_{|u|>r^{2-\epsilon}} J_{r,R}((u+\i r^2,0, (\begin{smallmatrix} \vecnull \\ \yy \end{smallmatrix}) ), r^{d-2} u)\, w(r^{d-2}u) \, \d u \\
&=\frac{1}{\mu(\Gamma\backslash G)}\int_\RR  \int_{\Gamma\backslash G} F_0(g,u') \,w(u')\,\d \mu(g) \d u' + O(R^{-1}),
\end{split} \end{equation}
for every $R\geq R_0$.  For the upper bound we use that
\begin{align}
F_r((\tau,\phi,\vecxi),u') \leq F_r((\tau,\phi,\vecxi),u')(1-X_R(\tau)) + L X_R(\tau) + L \Psi_{R,f}(\tau,\vecxi).
\end{align}
We proceed as above for the first two terms, and apply Proposition \ref{diophprop} to the third to obtain 
\begin{equation} \begin{split}
&\limsup_{r \to 0} r^{d-2}\int_{|u|>r^{2-\epsilon}} F_r((u+\i r^2,0, (\begin{smallmatrix} \vecnull \\ \yy \end{smallmatrix}) ), r^{d-2} u)\, w(r^{d-2}u) \, \d u \\
& \leq \frac{1}{\mu(\Gamma \backslash G)} \int_\RR  \int_{\Gamma \backslash G} F_0(g,u') \d \mu(g) \d u' + O(R^{-(1/(\kappa-1) - d+2)/2} + R^{-\epsilon'/2}),
\end{split} \end{equation}
for every $R\geq R_0$.
\end{proof}

\section{Mean value theorems for theta functions}\label{sec:theta}

For $f\in\scrS(\RR^d\times\RR^d)$ and $\phi\in\RR$, define $f_\phi$ by 

\begin{align} \label{eightone}
f_\phi(\vecy_1,\vecy_2) = \begin{cases} f(\vecy_1,\vecy_2) & (\phi = 0\bmod 2\pi) \\ 
f(-\vecy_1,-\vecy_2) & (\phi = \pi\bmod 2\pi) \\ 
\int_{\RR^{2d}} G_\phi(\vecy_1,\vecy_2,\vecx_1,\vecx_2) \, f(\vecx_1,\vecx_2) \, \d \vecx_1 \d \vecx_2 & (\phi \neq 0\bmod \pi),
\end{cases}
\end{align}
where
\begin{multline}
G_\phi(\vecy_1,\vecy_2,\vecx_1,\vecx_2) \\
= |\sin \phi|^{-d} \, \e \left( \frac{\tfrac{1}{2}(\|\vecy_1\|^2+\|\vecx_1\|^2-\|\vecy_2\|^2-\|\vecx_2\|^2) \cos \phi - \vecy_1\cdot\vecx_1+\vecy_2\cdot\vecx_2}{\sin \phi}\right).
\end{multline}

\begin{lem} \label{lem:schwa}
If $f \in \scrS(\RR^{d}\times \RR^d)$ then $f_\phi \in \scrS(\RR^d\times\RR^d)$.
\end{lem}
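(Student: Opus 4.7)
The cases $\phi \equiv 0 \pmod{2\pi}$ and $\phi \equiv \pi \pmod{2\pi}$ are immediate from the first two lines of \eqref{eightone}, so I will focus on the case $\sin\phi \neq 0$. The overall strategy is to recognise the operator $f \mapsto f_\phi$ as a tensor product of two one-variable fractional-Fourier-type transforms, each of which decomposes as a chirp--Fourier--chirp composition in which every factor preserves Schwartz space.

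The first step is to observe the kernel factorisation
\begin{equation*}
G_\phi(\vecy_1, \vecy_2, \vecx_1, \vecx_2) = h_\phi(\vecy_1, \vecx_1)\,\overline{h_\phi(\vecy_2, \vecx_2)},
\end{equation*}
where
\begin{equation*}
h_\phi(\vecy,\vecx) = |\sin \phi|^{-d/2}\, \e\!\left(\tfrac{\cos \phi}{2\sin \phi}(\|\vecy\|^2 + \|\vecx\|^2) - \tfrac{\vecy \cdot \vecx}{\sin \phi}\right),
\end{equation*}
which is a direct algebraic check. Let $T_\phi$ denote the integral operator with Schwartz kernel $h_\phi$. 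Then $f_\phi = (T_\phi \otimes \overline{T_\phi})f$, in the sense that $T_\phi$ acts in the $\vecx_1$-slot of $f(\vecx_1,\vecx_2)$ and $\overline{T_\phi}$ acts in the $\vecx_2$-slot. It therefore suffices to show that applying $T_\phi$ in a single slot of $\scrS(\RR^d \times \RR^d)$ produces a Schwartz function.

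Next I would decompose $T_\phi$ as
\begin{equation*}
T_\phi = |\sin \phi|^{-d/2}\, M_\phi \circ D_\phi \circ \scrF \circ M_\phi,
\end{equation*}
where $M_\phi$ is multiplication by the chirp $\vecx \mapsto \e\!\left(\tfrac{\cos \phi}{2\sin \phi}\|\vecx\|^2\right)$, $D_\phi$ is the dilation $g(\vecy) \mapsto g(\vecy/\sin\phi)$, and $\scrF$ is the Fourier transform of \eqref{FT}. This factorisation follows by pulling the $\vecy$-dependent Gaussian phase out of the integral defining $T_\phi f(\vecy)$ and recognising the remaining integral as $\widehat{M_\phi f}(\vecy/\sin\phi)$. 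Each of the three factors preserves $\scrS(\RR^d)$: the derivatives of the chirp $\e(c\|\vecx\|^2)$ are polynomials times the chirp itself (a short induction on the order of differentiation), hence bounded by polynomials, so $M_\phi$ preserves $\scrS(\RR^d)$; the Fourier transform preserves $\scrS(\RR^d)$ classically; and $D_\phi$ is a linear change of variables. The same three types of operation applied in a single slot of $\scrS(\RR^d \times \RR^d)$ preserve the Schwartz class, by the identical argument with the other variable treated as a parameter, which handles the partial version and establishes the claim.

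The only mildly technical checks are the polynomial-growth bound on derivatives of the chirp multiplier and the bookkeeping of the $|\sin\phi|^{-d/2}$ prefactor; I do not anticipate any substantive obstacle. The whole content of the lemma is the observation that $G_\phi$ is a member of the metaplectic family, whose associated operator splits into standard Schwartz-preserving pieces.
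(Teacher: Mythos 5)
Your proof is correct and takes essentially the same approach as the paper: both reduce $f\mapsto f_\phi$ to a chirp--Fourier--chirp composition, each factor of which preserves the Schwartz class. The only difference is cosmetic: you first factorise $G_\phi$ as a tensor product of two one-variable fractional-Fourier kernels and then decompose each $T_\phi$, whereas the paper applies the chirp--Fourier--chirp decomposition directly to the two-variable kernel in one step.
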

\begin{proof} If $\phi = 0\bmod\pi$ then the result is immediate. For fixed $\phi \neq 0 \bmod\pi$, define
\begin{equation}\label{def:ggg}
g(\vecx_1,\vecx_2) =  \e \left( \frac{\tfrac{1}{2}(\|\vecx_1\|^2-\|\vecx_2\|^2) \cos \phi}{\sin \phi}\right) \, f(\vecx_1,\vecx_2).
\end{equation}
and its Fourier transform
\begin{equation}
I(\vecy_1,\vecy_2) = |\sin \phi|^{-d} \, \int_{\RR^{2d}} g(\vecx_1,\vecx_2)\, \e \left( \frac{- \vecy_1\cdot\vecx_1+\vecy_2\cdot\vecx_2}{\sin \phi}\right) \d \vecx_1 \d \vecx_2.
\end{equation}
Note that
\begin{equation}
f_\phi(\vecy_1,\vecy_2) = \e \left( \frac{\tfrac{1}{2}(\|\vecy_1\|^2-\|\vecy_2\|^2) \cos \phi}{\sin \phi}\right) I(\vecy_1,\vecy_2).
\end{equation}
Now $f \in \scrS(\RR^{d}\times \RR^d)$ implies $g \in \scrS(\RR^{d}\times \RR^d)$ (since all derivatives of the exponential factor in \eqref{def:ggg} grow at most polynomially), which implies $I \in \scrS(\RR^{d}\times \RR^d)$ (since the Fourier transform preserves Schwartz class; use integration by parts), which in turn implies $f_\phi \in \scrS(\RR^{d}\times \RR^d)$ (by the first argument).
\end{proof}

The following lemma provides rapid decay that is uniform in $\phi$.

\begin{lem}\label{decay101}
If $f\in\scrS(\RR^d\times\RR^d)$, then for all multi-indices $\vecbeta_1,\vecbeta_2 \in \ZZ_{\geq 0}^d$ and for every $T>1$ 
\begin{equation}
\sup_{\vecy_1,\vecy_2,\phi} (1+\|\vecy_1\|)^T (1+\|\vecy_2\|)^T | \partial_{\vecy_1}^{\vecbeta_1} \, \partial_{\vecy_2}^{\vecbeta_2}  \, f_\phi(\vecy_1,\vecy_2) | <\infty .
\end{equation}
\end{lem}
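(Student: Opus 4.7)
The cases $\phi\equiv 0\bmod\pi$ are immediate from \eqref{eightone}: $f_\phi(\vecy)=f(\pm\vecy_1,\pm\vecy_2)$ is itself Schwartz. The content of the lemma is therefore the complementary range of $\phi$, in which the kernel $G_\phi$ carries a singular factor $|\sin\phi|^{-d}$ and one must extract $\phi$-uniform cancellation from the oscillatory integral. My plan is to derive commutation relations between the transform $f\mapsto f_\phi$ and the Heisenberg generators $y_{i,j},\partial_{y_{i,j}}$, and then combine the $L^2$-isometry of $f\mapsto f_\phi$ with a Sobolev embedding to reach $L^\infty$.

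Writing $\Phi(\vecy,\vecx,\phi)$ for the phase appearing in $G_\phi$, one verifies the algebraic identity $y_{1,j}\,\e(\Phi)=x_{1,j}\cos\phi\,\e(\Phi)-\tfrac{\sin\phi}{2\pi\i}\partial_{x_{1,j}}\e(\Phi)$ by direct differentiation. Integrating against $f$ and applying integration by parts to the derivative term eliminates the $|\sin\phi|^{-d}$ singularity and gives
\begin{equation*}
y_{1,j}f_\phi=\cos\phi\,(x_{1,j}f)_\phi+\tfrac{\sin\phi}{2\pi\i}(\partial_{x_{1,j}}f)_\phi.
\end{equation*}
A parallel manipulation, starting from differentiation of $f_\phi$ under the integral sign and substituting the above, yields
\begin{equation*}
\partial_{y_{1,j}}f_\phi=\cos\phi\,(\partial_{x_{1,j}}f)_\phi-2\pi\i\sin\phi\,(x_{1,j}f)_\phi,
\end{equation*}
and the $\vecy_2,\vecx_2$ analogues differ only by signs inherited from the opposite sign of the second-variable phase in $\Phi$. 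All coefficients have modulus at most $1$ independently of $\phi$, so finite iteration of these formulas produces, for any multiindices $\vecbeta_1,\vecbeta_2$ and any $T\geq 0$, an expansion
\begin{equation*}
(1+\|\vecy_1\|)^T(1+\|\vecy_2\|)^T\,\partial_{\vecy_1}^{\vecbeta_1}\partial_{\vecy_2}^{\vecbeta_2}f_\phi(\vecy)=\sum_k c_k(\phi)(g_k)_\phi(\vecy),
\end{equation*}
with uniformly bounded $c_k(\phi)$ and $g_k\in\scrS(\RR^d\times\RR^d)$ built from $f$ by polynomial multiplications and differentiations in $\vecx$.

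To complete the proof it then suffices to show $\|(g)_\phi\|_{L^\infty(\RR^{2d})}\ll 1$ uniformly in $\phi$ for each fixed Schwartz $g$. For this, observe the factorization $G_\phi(\vecy_1,\vecy_2,\vecx_1,\vecx_2)=H_\phi(\vecy_1,\vecx_1)\overline{H_\phi(\vecy_2,\vecx_2)}$, where $H_\phi$ is a phase normalization of the Mehler kernel of the fractional Fourier transform $U_\phi$ on $\RR^d$. Hence $f\mapsto f_\phi$ equals $U_\phi\otimes\overline{U_\phi}$, which is unitary on $L^2(\RR^{2d})$ (as is visible by diagonalizing $U_\phi$ in the tensor-product Hermite basis, where it acts by eigenvalues of modulus one). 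Therefore $\|(g_k)_\phi\|_{L^2}=\|g_k\|_{L^2}$ is finite independently of $\phi$, and combining with the expansion from the previous paragraph furnishes a uniform $L^2$-bound on $(1+\|\vecy_1\|)^T(1+\|\vecy_2\|)^T\,\partial^{\vecalpha}f_\phi$ for every $T$ and every multiindex $\vecalpha$. Applying the Sobolev embedding $H^s(\RR^{2d})\hookrightarrow L^\infty(\RR^{2d})$ for $s>d$ to the weighted function, with sufficiently many extra derivatives included, then delivers the pointwise bound of the lemma.

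The main place that needs care is the commutation bookkeeping: one must check that iteration does not amplify the $\phi$-dependence of the $c_k(\phi)$. This is automatic because each step multiplies by a polynomial in $\cos\phi,\sin\phi$ of bounded degree, so the $c_k(\phi)$ remain bounded by a constant depending only on $|\vecbeta_1|+|\vecbeta_2|+T$.
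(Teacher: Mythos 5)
Your argument is correct in substance (one cosmetic slip: the coefficient $-2\pi\i\sin\phi$ in your $\partial_{y_{1,j}}$-relation has modulus up to $2\pi$, not ``at most $1$''; the argument needs only that the coefficients are uniformly bounded, which holds), and it takes a genuinely different route from the paper's. The paper reduces to the statement that the transform $f\mapsto f_\phi$ maps $\scrS$ to $\scrS$ with seminorm constants that are bounded uniformly for $\phi$ in any compact interval avoiding $0\bmod\pi$; it then uses the composition identity $f_{\phi+\pi/2}=(f_{\pi/2})_\phi$ (i.e.\ the group law underlying the Fourier/fractional-Fourier picture) to transport this to intervals avoiding $\pi/2\bmod\pi$, and patches the two overlapping families of intervals to cover $[0,2\pi]$. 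Your proof instead establishes the same group-theoretic content through the exact $L^2$-unitarity of $U_\phi\otimes\overline{U_\phi}$, derives explicit Heisenberg commutation relations that move the weight and derivatives inside the transform, and then upgrades $\phi$-uniform weighted $L^2$ control to $L^\infty$ by Sobolev embedding in $\RR^{2d}$. What you gain is a proof that treats all $\phi\in(0,\pi)\cup(\pi,2\pi)$ at once with no patching and makes the $\phi$-uniformity completely transparent (it is the equality $\|(g)_\phi\|_{L^2}=\|g\|_{L^2}$); what you pay is heavier machinery and the need to make the commutation bookkeeping fully explicit, including the minor point that for non-integer $T$ one should work with $(1+\|\vecy_1\|^2)^N(1+\|\vecy_2\|^2)^N$ for $N\geq\lceil T/2\rceil$ so the weight is a polynomial to which the commutation relations apply.
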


\begin{proof}
The proof of Lemma \ref{lem:schwa} shows that 
\begin{equation}\label{estim88}
\sup_{\vecy_1,\vecy_2,\phi\in I} (1+\|\vecy_1\|)^T (1+\|\vecy_2\|)^T | \partial_{\vecy_1}^{\vecbeta_1} \, \partial_{\vecy_2}^{\vecbeta_2}  \, f_\phi(\vecy_1,\vecy_2) | <\infty 
\end{equation}
for any closed interval $I$ not containing $\phi=0\bmod\pi$. As in the proof of \cite[Lemma 4.3]{Marklof03}, we represent $f_{\phi+\pi/2}=\int_{\RR^{2d}} G_\phi(\vecy_1,\vecy_2,\vecx_1,\vecx_2) \, f_{\pi/2}(\vecx_1,\vecx_2) \, \d \vecx_1 \d \vecx_2$ using the Fourier transform $f_{\pi/2}$ of $f$. Since $f_{\pi/2}\in\scrS(\RR^d\times\RR^d)$, we see that \eqref{estim88} holds for any closed interval not containing $\phi=\frac{\pi}{2}\bmod\pi$. Both cases taken together, this shows that \eqref{estim88} holds in fact for every closed interval $I$, and so in particular for $I=[0,2\pi]$. This proves the claim in view of the $2\pi$-periodicity of $f_\phi$.
\end{proof}

We define the theta function $\Theta_f:G\mapsto \CC$ by
\begin{multline} \label{def:theta}
\Theta_f\bigg(u+\i v, \phi, \begin{pmatrix}  \vecx \\ \vecy \end{pmatrix}  \bigg) 
= v^{d/2} \sum_{\vecm_1,\vecm_2\in\ZZ^d} f_\phi(v^{1/2} (\vecm_1-\vecy),v^{1/2} (\vecm_2-\vecy)) \\
\times \e(\tfrac12 \, u (\|\vecm_1-\vecy\|^2- \| \vecm_2-\vecy\|^2) + \vecx \cdot (\vecm_1-\vecm_2) ) .
\end{multline}
Since $f_\phi\in\scrS(\RR^d\times\RR^d)$ we have that $\Theta_f\in\C^\infty(G)$. Let 
\begin{equation}
\Gamma=\left\{ 
\left( \begin{pmatrix}
a & b \\
c & d
\end{pmatrix}
, 
\begin{pmatrix}
ab \vecs \\
cd \vecs
\end{pmatrix}
+\vecm \right)
: \; \begin{pmatrix}
a & b \\
c & d
\end{pmatrix} \in\SL(2,\ZZ),\; \vecm\in\ZZ^{2d}
\right\} \subset G,
\end{equation}
with
$\vecs=(\tfrac12,\tfrac12,\ldots,\tfrac12)\in\RR^d$. 
Then $\Gamma$ is of finite index in $\SL(2,\ZZ)\ltimes(\tfrac12\ZZ)^{2d}$, and $\Theta_f$ is left $\Gamma$ invariant; cf.~\cite[Prop.~4.9]{Marklof03}. That is, $\Theta_f\in\C^\infty(\GamG)$.

\begin{prop} \label{prop:thetaleadingorder} Let $f \in \scrS(\RR^d\times\RR^d)$. Then
\begin{align}
\Theta_f(u+\i v,\phi,\vecxi) = v^{d/2} \sum_{\vecm\in\ZZ^d} f_\phi( (\vecm-\vecy) \, v^{1/2}, (\vecm -\vecy) \, v^{1/2}) + O(v^{-\infty})
\end{align}
uniformly for all $(u+\i v,\phi,\vecxi) \in G$ with $v > 1/2$. 
\end{prop}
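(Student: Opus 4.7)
The strategy is to split the double sum defining $\Theta_f$ into its diagonal and off-diagonal parts, show that the diagonal contributes exactly the claimed main term, and then argue that the off-diagonal part is $O(v^{-\infty})$ using the uniform Schwartz bounds from Lemma \ref{decay101}.

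First, observe that when $\vecm_1 = \vecm_2 = \vecm$ in \eqref{def:theta}, the argument of $\e(\cdot)$ vanishes identically. Hence the diagonal contribution is precisely
\begin{equation*}
v^{d/2} \sum_{\vecm\in\ZZ^d} f_\phi\bigl((\vecm-\vecy) v^{1/2},(\vecm-\vecy) v^{1/2}\bigr),
\end{equation*}
which is the proposed main term. It therefore remains to bound the off-diagonal sum
\begin{equation*}
\Theta_f^{\text{off}} = v^{d/2} \sum_{\vecm_1\neq \vecm_2} f_\phi\bigl(v^{1/2}(\vecm_1-\vecy),v^{1/2}(\vecm_2-\vecy)\bigr)\, \e\bigl(\tfrac{u}{2}(\|\vecm_1-\vecy\|^2-\|\vecm_2-\vecy\|^2)+\vecx\cdot(\vecm_1-\vecm_2)\bigr).
\end{equation*}

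Next I would apply Lemma \ref{decay101} (with $\vecbeta_1=\vecbeta_2=\vecnull$): for every $T>0$, there is $C_T>0$ with
\begin{equation*}
|f_\phi(\vecy_1,\vecy_2)| \leq C_T\,(1+\|\vecy_1\|)^{-T}(1+\|\vecy_2\|)^{-T}
\end{equation*}
uniformly in $\phi\in\RR$. The crucial geometric input is that $\vecm_1\neq\vecm_2\in\ZZ^d$ forces $\|\vecm_1-\vecm_2\|\geq 1$, so by the triangle inequality
\begin{equation*}
\max\bigl(\|\vecm_1-\vecy\|,\|\vecm_2-\vecy\|\bigr) \geq \tfrac12.
\end{equation*}
By symmetry of the estimate in $(\vecm_1,\vecm_2)$ I may assume the maximum is attained by $\vecm_1$. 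For $v>1/2$ we then extract the factor $(1+v^{1/2}\|\vecm_1-\vecy\|)^{-T/2} \leq (1+v^{1/2}/2)^{-T/2} \ll_T v^{-T/4}$, giving
\begin{equation*}
|\Theta_f^{\text{off}}| \ll_T v^{d/2-T/4} \sum_{\vecm_1,\vecm_2\in\ZZ^d}(1+v^{1/2}\|\vecm_1-\vecy\|)^{-T/2}(1+v^{1/2}\|\vecm_2-\vecy\|)^{-T}.
\end{equation*}

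Finally, a standard elementary estimate shows that $\sum_{\vecm\in\ZZ^d}(1+v^{1/2}\|\vecm-\vecy\|)^{-s}$ is bounded independently of $v\geq 1/2$ and $\vecy\in\RR^d$ whenever $s>d$: at most $O(1)$ terms correspond to $\vecm$ within distance $1$ of $\vecy$, each contributing $O(1)$, while the tail is dominated by $\sum_{\vecm}\|\vecm\|^{-s}<\infty$. Choosing $T>2d$ makes the double sum $O(1)$, and sending $T\to\infty$ yields $|\Theta_f^{\text{off}}| \ll_N v^{-N}$ for any $N$, with constants uniform in $(u,\phi,\vecxi)$. The only nontrivial ingredient is the uniformity in $\phi$, which is precisely the content of Lemma \ref{decay101}; once that is in hand the remainder is a short triangle-inequality argument with no serious obstacle.
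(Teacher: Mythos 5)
Your proof is correct, and since the paper delegates the entire argument to \cite[Prop.~4.10]{Marklof03}, you have supplied a complete and self-contained version. The diagonal/off-diagonal decomposition, combined with the $\phi$-uniform rapid decay from Lemma~\ref{decay101}, is the standard and essentially forced route here, and it is almost certainly what the cited reference does. Two minor points of hygiene: the ``WLOG by symmetry'' step should really be phrased as splitting the off-diagonal sum into two overlapping subsums according to which of $\|\vecm_1-\vecy\|$, $\|\vecm_2-\vecy\|$ is $\geq 1/2$ and bounding each separately (exchanging labels in the second) rather than literally assuming one case; and the lattice-sum bound $\sum_{\vecm}(1+v^{1/2}\|\vecm-\vecy\|)^{-s}\ll 1$ for $s>d$, $v\geq 1/2$, should be justified via the shell count $\#\{\vecm:\|\vecm-\vecy\|\in[k,k+1)\}=O(k^{d-1})$ uniformly in $\vecy$, rather than comparison with $\sum\|\vecm\|^{-s}$, which loses the $\vecy$-uniformity you need. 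Neither issue affects the correctness of the outcome.
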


\begin{proof}
See \cite[Prop.~4.10]{Marklof03}.
\end{proof}

\begin{cor}
Let $f \in \scrS(\RR^d \times \RR^d)$, then for all $T>1$ we have that $\Theta_f$ is dominated by $\Psi_{R,\bar f}$ for \begin{equation}\label{def:fbar}
\bar f (\vecx) = (1+\| \vecx \|)^{-{2T}}.\end{equation} 
\end{cor}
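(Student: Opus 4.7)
The plan is to combine Proposition \ref{prop:thetaleadingorder}, which approximates $\Theta_f$ in the cusp region $v > 1/2$, with the $\Gamma$-invariance of both $\Theta_f$ and $\Psi_{R,\bar f}$ in order to reduce an arbitrary $g \in G$ to such a point. If $X_R(\tau)=0$ there is nothing to prove, so I would assume $X_R(\tau)\neq 0$ and fix $R_0 > 1$. A standard argument about the $\SLZ$-action on $\fH$ (disjointness of translates of the standard fundamental domain above height one) shows that for $R\geq R_0$ there is a unique $\gamma_0\in(\Gamma_\infty\cup-\Gamma_\infty)\backslash\SLZ$ with $v_{\gamma_0}\geq R$, so $X_R(\tau)=1$. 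I would then lift $\gamma_0$ to $\tilde\gamma_0\in\Gamma$, which is possible because $\Gamma$ surjects onto $\SLZ$ by construction.

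Next I would invoke the left $\Gamma$-invariance of $\Theta_f$ (stated just after its definition) and of $\Psi_{R,\bar f}$ (which follows from its asserted left $\SLZ\ltimes(\tfrac12\ZZ)^{2d}$-invariance) to replace $g$ by $g'=\tilde\gamma_0 g=(\tau',\phi',\vecxi')$. Since $\Im\tau'=v_{\gamma_0}\geq R > 1/2$, Proposition \ref{prop:thetaleadingorder} applies at $g'$, giving
\begin{equation*}
|\Theta_f(g')| \leq v_{\gamma_0}^{d/2}\sum_{\vecm\in\ZZ^d}\bigl|f_{\phi'}\bigl(v_{\gamma_0}^{1/2}(\vecm-\vecy'),v_{\gamma_0}^{1/2}(\vecm-\vecy')\bigr)\bigr| + O(v_{\gamma_0}^{-\infty}),
\end{equation*}
where $\vecy'$ is the second $\RR^d$-component of $\vecxi'$. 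Applying Lemma \ref{decay101} with the given $T$ in both slots yields $|f_{\phi'}(\vecz,\vecz)|\leq C_T(1+\|\vecz\|)^{-2T}=C_T\bar f(\vecz)$ uniformly in $\phi'$. The evenness of $\bar f$ and the substitution $\vecm\mapsto -\vecm$ then identify the resulting sum with the $\gamma=\mathrm{id}$ coset contribution to $\Psi_{R,\bar f}(g')$ (noting $\chi_R(v_{\gamma_0})=1$). Transporting back via the $\Gamma$-invariance of $\Psi_{R,\bar f}$ produces $|\Theta_f(g)|\,X_R(\tau)\leq C_T\Psi_{R,\bar f}(g)+O(R^{-\infty})$, and the $O(R^{-\infty})$ remainder is absorbed into the additive $1$ of the domination inequality.

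The main technical obstacle I expect is the bookkeeping under the $\Gamma$-action: the half-integer translations $(ab\vecs,cd\vecs)$ in the definition of $\Gamma$ shift $\vecy'$ away from $\vecy_{\gamma_0}$ by a vector in $\{0,\tfrac12\}^d$, and these shifts would be fatal if they introduced a mismatch at the level of dominant terms. However, since both $\Theta_f$ and $\Psi_{R,\bar f}$ transform under exactly the same $\Gamma$-action, the parameters $\vecy',\phi'$ produced by the action on the side of $\Theta_f$ must coincide with those appearing on the side of $\Psi_{R,\bar f}$, so the comparison of dominant terms against a single coset goes through without further adjustment once this matching is carefully verified.
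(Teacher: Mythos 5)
Your proof is correct and rests on exactly the same two ingredients the paper's one-line proof cites, namely Proposition \ref{prop:thetaleadingorder} (the leading-order asymptotics of $\Theta_f$ in the cusp) and Lemma \ref{decay101} with $\vecbeta_1=\vecbeta_2=\vecnull$ (to bound $|f_\phi(\vecz,\vecz)|$ by $C_T\bar f(\vecz)$ uniformly in $\phi$); the $\Gamma$-invariance reduction to $v>1/2$, the identification of the resulting sum with the $\gamma=\mathrm{id}$ term of $\Psi_{R,\bar f}$, and the absorption of the $O(v^{-\infty})$ error are the details the paper leaves implicit, and you have filled them in accurately. In short, this is the paper's argument, written out in full.
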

\begin{proof}
This follows from Proposition \ref{prop:thetaleadingorder} and Lemma \ref{decay101} (with $\vecbeta_1=\vecbeta_2=\vecnull)$.
\end{proof}

\begin{prop} \label{maincorollary}
Assume $\yy$ is Diophantine of type $\kappa < (d-1)/(d-2)$ with the components of $(1,\trans\yy)$ linearly independent over $\QQ$. Let $w:\RR\to\RR$ piecewise continuous, continuous at 0, with compact support.  Then
\begin{multline}
\lim_{r \to 0} r^{d-2} \int_{\RR} \Theta_f \left( u+\i r^2,0, \left( \begin{matrix} \vecnull \\ \yy \end{matrix} \right) \right) \, w(r^{d-2} u) \, \d u \\ 
=2\,w(0)\,\int_{\RR^d\times\RR^d} f(\vecy_1,\vecy_2) \, \delta(\|\vecy_1\|^2-\|\vecy_2\|^2)\, \d \vecy_1 \d \vecy_2\\
+ \int_{\RR^d} f(\vecy_1,\vecy_1)\, \d \vecy_1 \int_\RR w(u) \, \d u.
\end{multline}
\end{prop}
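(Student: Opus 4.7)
My plan is to decompose the $u$-integration as $\int_\RR = \int_{|u| > r^{2-\epsilon}} + \int_{|u| \le r^{2-\epsilon}}$ for some fixed $0<\epsilon<1$, with the bulk piece producing the first term on the right-hand side of the claim and the cusp piece producing the $w(0)$ term. For the bulk, $\Theta_f$ (regarded as a $\Gamma$-invariant function on $G$ independent of the rescaled time variable) is dominated by $\Psi_{R,\bar f}$ by the corollary to Proposition \ref{prop:thetaleadingorder}, so Proposition \ref{propr} applies with $F_r = F_0 = \Theta_f$ and gives
\begin{equation*}
\lim_{r\to 0} r^{d-2} \int_{|u|>r^{2-\epsilon}} \Theta_f\!\left(u+\i r^2, 0, \left(\begin{smallmatrix}\vecnull\\ \yy\end{smallmatrix}\right)\right) w(r^{d-2}u)\, \d u = \int_\RR w(u)\,\d u \cdot \frac{1}{\mu(\GamG)} \int_{\GamG} \Theta_f \,\d\mu .
\end{equation*}
I would compute the mean value by choosing the fundamental domain $\fF \times [0,1)^{2d}$ for $\Gamma$, where $\fF \subset \SL(2,\RR)$ is a fundamental domain for $\SL(2,\ZZ)$: integrating $\vecx$ kills every off-diagonal term ($\vecm_1 \ne \vecm_2$), and the remaining $\vecy$-integration combines with the sum over $\vecm$ via the standard translation-unfolding trick to yield $v^{-d/2}\int_{\RR^d} f_\phi(\vecz,\vecz)\,\d\vecz$. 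A direct computation from the definition of $f_\phi$ shows that $\int f_\phi(\vecz,\vecz)\,\d\vecz = \int f(\vecz,\vecz)\,\d\vecz$ is $\phi$-independent (for $\phi\ne 0\bmod \pi$ the $\vecy$-integration produces $|\sin\phi|^d\delta(\vecx_1-\vecx_2)$, cancelling the prefactor); the $v^{d/2}$ prefactor cancels as well, and the mean value is $\int f(\vecy_1,\vecy_1)\,\d\vecy_1$.

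For the cusp I would use the left-$\Gamma$-invariance of $\Theta_f$ under $(J,\vecnull)$ with $J=\left(\begin{smallmatrix}0&-1\\ 1&0\end{smallmatrix}\right)\in\SL(2,\ZZ)$. This element sends $(u+\i r^2, 0, (\vecnull, \yy))$ to $(\tau',\phi',(-\yy,\vecnull))$ where $\tau'=-1/\tau$, so $v' = r^2/(u^2+r^4)$ and $\phi'=\arg(u+\i r^2)$. For $|u|\le r^{2-\epsilon}$ with $\epsilon<1$ one has $v' \to \infty$ uniformly, so Proposition \ref{prop:thetaleadingorder} combined with the uniform-in-$\phi$ rapid decay of Lemma \ref{decay101} keeps only the $\vecm=\vecnull$ term:
\begin{equation*}
\Theta_f\!\left(u+\i r^2, 0,\left(\begin{smallmatrix}\vecnull\\ \yy\end{smallmatrix}\right)\right) = v'^{d/2} f_{\phi'}(\vecnull,\vecnull) + O(v'^{-N}) \quad\text{for every } N.
\end{equation*}
After substituting $u = r^2 s$ (so $|s|\le r^{-\epsilon}$) and applying the identities $v'^{d/2} = r^{-d}(s^2+1)^{-d/2}$, $|\sin\phi'|^{-d} = (s^2+1)^{d/2}$, $\cot\phi' = s$, all the $(s^2+1)^{\pm d/2}$ and $r^{\pm d}$ factors cancel; the continuity of $w$ at $0$ allows me to replace $w(r^{d-2}u)$ by $w(0)+o(1)$. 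The cusp integral then collapses to
\begin{equation*}
w(0) \int_{|s|\le r^{-\epsilon}} \left( \int_{\RR^{2d}} f(\vecx_1,\vecx_2) \,\e\!\big(\tfrac12 s (\|\vecx_1\|^2 - \|\vecx_2\|^2)\big)\, \d \vecx_1\, \d \vecx_2 \right) \d s + o(1),
\end{equation*}
which in the limit $r \to 0$ is expected to produce $2w(0) \int f(\vecx_1,\vecx_2)\,\delta(\|\vecx_1\|^2 - \|\vecx_2\|^2)\,\d\vecx_1\,\d\vecx_2$ via the Fourier identity $\int_\RR \e(\tfrac12 s D)\,\d s = 2\delta(D)$.

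The main technical obstacle is justifying this last passage to the limit: the inner integral is not absolutely integrable in $s$, so Fubini does not apply directly. I would handle this by rewriting the truncated $s$-integral as a Fourier-type pairing $\int f(\vecx_1,\vecx_2)\,\hat\phi_r\bigl(-\tfrac12(\|\vecx_1\|^2-\|\vecx_2\|^2)\bigr)\,\d\vecx_1\,\d\vecx_2$ with $\phi_r(s) = \chi_{[-r^{-\epsilon},r^{-\epsilon}]}(s)\, w(r^d s)$, and then appealing to $\hat\phi_r \to w(0)\delta$ as a tempered distribution together with the Schwartz decay of $f$ and the fact that $\|\vecx_1\|^2-\|\vecx_2\|^2$ is a submersion away from $\vecx_1=\vecx_2=\vecnull$ (so $\delta(\|\vecx_1\|^2-\|\vecx_2\|^2)$ is a well-defined measure on a smooth hypersurface when tested against Schwartz functions). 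Combined with the bulk contribution, this yields the full statement.
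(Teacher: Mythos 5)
Your proof is correct and follows the paper's strategy exactly: split the $u$-integral into the cusp region $|u|<r^{2-\epsilon}$ and the bulk $|u|>r^{2-\epsilon}$, apply the equidistribution result (Proposition \ref{propr}) to the bulk, and extract the $\delta$-term from the cusp by moving into the Siegel domain. The only difference is that the paper outsources the cusp asymptotics and the mean-value evaluation of $\Theta_f$ to \cite[Lemmas~7.3 and~7.2]{Marklof02}, whereas you re-derive them directly (correctly) via the Weyl element $J$ together with Proposition \ref{prop:thetaleadingorder}/Lemma \ref{decay101}, and via the unfolding computation showing $\mu(\GamG)^{-1}\int_{\GamG}\Theta_f\,\d\mu=\int f(\vecz,\vecz)\,\d\vecz$.
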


\begin{proof}
Fix $0 < \epsilon < 1$, and split the integration over $u$ into the regions $|u| < r^{2-\epsilon}$ and $|u| > r^{2-\epsilon}$. In the first region, the proof of \cite[Lemma 7.3]{Marklof02} shows that
\begin{equation} 
\begin{split} 
&r^{d-2} \int_{|u|<r^{2-\epsilon}} \Theta_f \left( u+\i r^2,0, \left( \begin{matrix} \vecnull \\ \yy \end{matrix} \right) \right) \, w(r^{d-2} \, u) \, \d u \\[0.3cm]
& = r^{-2}  \int_{|u|<r^{2-\epsilon}}   \bigg( \int_{\RR^d\times\RR^d} f(\vecy_1,\vecy_2) \\
& \qquad \times \e(\tfrac{1}{2}(\|\vecy_1\|^2-\|\vecy_2\|^2) r^{-2} u ) \,  \d \vecy_1  \d \vecy_2 \bigg) w(r^{d-2} u) \, \d u +o(1)\\[0.3cm]
& = 2 \, w(0) \, \int_{\RR^d\times\RR^d} f(\vecy_1,\vecy_2) \, \delta(\|\vecy_1\|^2-\|\vecy_2\|^2 ) \, \d \vecy_1  \d \vecy_2 +o(1).
\end{split} 
\end{equation}
Since $\Theta_f$ is dominated by $\Psi_{R,f}$, for the region $|u|>r^{2-\epsilon}$ we can apply Proposition \ref{propr} and note that the limit can be written as
\begin{align}
& \frac{1}{ \mu(\GamG)} \int_{\GamG} \Theta_f \, \d \mu  \, \int_\RR w(u) \, \d u 
= \int_{\RR^d}  f(\vecy_1, \vecy_1) \, \d\vecy_1  \, \int_\RR w(u) \, \d u,
\end{align}
cf.~\cite[Lemma 7.2]{Marklof02}. 
\end{proof}

We will now deal with $f$ that depend continuously on additional parameters $u\in\RR$, $\veceta\in\RR^d$. We denote by $\tilde \scrS$ the class of functions $f\in\C(\RR^d\times\RR^d\times\RR\times\RR^d)$ with the property that for every multi-index $\vecbeta_1,\vecbeta_2 \in \ZZ_{\geq 0 }^d$ the derivative $\partial_{\vecy_1}^{\vecbeta_1} \, \partial_{\vecy_2}^{\vecbeta_2} \, f(\vecy_1,\vecy_2,u,\veceta)$ (a) exists, (b) is continuous (in all variables), and (c) is rapidly decaying, i.e., 
\begin{equation}
\sup_{\vecy_1,\vecy_2,u,\veceta} (1+\|\vecy_1\|)^T (1+\|\vecy_2\|)^T (1+|u|)^T (1+\|\veceta\|)^T \big| \partial_{\vecy_1}^{\vecbeta_1} \, \partial_{\vecy_2}^{\vecbeta_2}  \, f(\vecy_1,\vecy_2,u,\veceta) \big| <\infty
\end{equation}
for every $T> 1$. For $f \in \tilde\scrS$ we define $f_\phi\in\tilde\scrS$ in analogy with \eqref{eightone} by
\begin{align} \label{eightone2367}
f_\phi(\vecy_1,\vecy_2,u,\veceta) = \begin{cases} f(\vecy_1,\vecy_2,u,\veceta) & (\phi = 0\bmod 2\pi) \\ 
f(-\vecy_1,-\vecy_2,u,\veceta) & (\phi = \pi\bmod 2\pi) \\ 
\int_{\RR^{2d}} G_\phi(\vecy_1,\vecy_2,\vecx_1,\vecx_2) \, f(\vecx_1,\vecx_2,u,\veceta) \, \d \vecx_1 \d \vecx_2 & (\phi \neq 0\bmod \pi).
\end{cases}
\end{align}
The fact that $f_\phi\in\tilde\scrS$ follows from the same argument as in Lemma \ref{lem:schwa}.
We also have the following.

\begin{lem}\label{decay101b}
If $f\in\tilde\scrS$, then for all multi-indices $\vecbeta_1,\vecbeta_2 \in \ZZ_{\geq 0}^d$ and every $T>1$ 
\begin{multline} \label{decay101b:eq}
\sup_{\vecy_1,\vecy_2,u,\veceta,\phi}  (1+\|\vecy_1\|)^T (1+\|\vecy_2\|)^T (1+|u|)^T \\ \times
(1+\|\veceta\|)^T \big| \partial_{\vecy_1}^{\vecbeta_1} \partial_{\vecy_2}^{\vecbeta_2}f_\phi(\vecy_1,\vecy_2,u,\veceta) \big| <\infty .
\end{multline}
\end{lem}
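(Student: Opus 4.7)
The plan is to mirror the proof of Lemma \ref{decay101} almost verbatim, treating $u$ and $\veceta$ as inert parameters that do not enter the integral kernel $G_\phi$. The class $\tilde\scrS$ is designed precisely so that rapid decay in $(u,\veceta)$ is preserved under any operation acting only in the $(\vecx_1,\vecx_2)$-variables.

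First I would fix a closed interval $I \subset [0,2\pi]$ with positive distance from $\{0,\pi\}$, so that $|\sin\phi|$ is bounded away from zero and $\cot\phi$ is bounded on $I$. Setting
\begin{equation*}
g_\phi(\vecx_1,\vecx_2,u,\veceta) = \e\!\bigl(\tfrac12(\|\vecx_1\|^2-\|\vecx_2\|^2)\cot\phi\bigr)\, f(\vecx_1,\vecx_2,u,\veceta),
\end{equation*}
one checks that $g_\phi \in \tilde\scrS$ for every $\phi \in I$, with derivative bounds that are \emph{uniform} in $\phi \in I$ (the exponential factor and its $(\vecx_1,\vecx_2)$-derivatives are polynomially bounded on $I$, and the $(u,\veceta)$-derivatives do not see it at all). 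The function $f_\phi$ then equals an oscillatory $(\vecy_1,\vecy_2)$-prefactor times the Fourier transform of $g_\phi$ in the $(\vecx_1,\vecx_2)$-variables, evaluated at frequency $(\vecy_1/\sin\phi, -\vecy_2/\sin\phi)$, and multiplied by $|\sin\phi|^{-d}$. Rapid decay of $\partial^{\vecbeta_1}_{\vecy_1}\partial^{\vecbeta_2}_{\vecy_2} f_\phi$ in $(\vecy_1,\vecy_2)$ follows from integration by parts in $(\vecx_1,\vecx_2)$, with all constants uniform in $\phi \in I$. Rapid decay in $(u,\veceta)$ is inherited from $f \in \tilde\scrS$ by bringing derivatives and polynomial weights in $u, \veceta$ under the integral sign; they pass through $G_\phi$ untouched and land on $f$. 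This establishes \eqref{decay101b:eq} with the supremum restricted to $\phi \in I$.

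To cover a neighbourhood of the singular locus $\phi = 0 \bmod \pi$, I would apply the shift-by-$\pi/2$ trick used in the proof of Lemma \ref{decay101} (compare \cite[Lemma 4.3]{Marklof03}), writing
\begin{equation*}
f_{\phi+\pi/2}(\vecy_1,\vecy_2,u,\veceta) = \int_{\RR^{2d}} G_\phi(\vecy_1,\vecy_2,\vecx_1,\vecx_2)\, f_{\pi/2}(\vecx_1,\vecx_2,u,\veceta)\, \d\vecx_1\, \d\vecx_2,
\end{equation*}
where $f_{\pi/2}$ is the Fourier transform of $f$ in the $(\vecx_1,\vecx_2)$-variables. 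Since the Fourier transform in those two variables preserves $\tilde\scrS$ (the parameters $u,\veceta$ pass through), the argument of the first paragraph applied to $f_{\pi/2}$ in place of $f$ yields \eqref{decay101b:eq} uniformly on any closed interval $I' \subset [0,2\pi]$ disjoint from $\{\pi/2, 3\pi/2\}$. The two families of intervals together cover $[0,2\pi]$, and the $2\pi$-periodicity $f_{\phi+2\pi} = f_\phi$ then extends the bound to all $\phi \in \RR$.

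The main obstacle is really only bookkeeping: one has to verify that the parameter dependence on $(u,\veceta)$ survives each manipulation (differentiation under the integral, Fourier transform in $(\vecx_1,\vecx_2)$, the shift trick) without any loss of decay, and that the uniformity in $\phi$ achieved on a closed subinterval can be glued via the shift. Both points follow immediately from the definition of $\tilde\scrS$, which imposes joint rapid decay in all variables; no new analytic input beyond Lemma \ref{decay101} and Lemma \ref{lem:schwa} is required.
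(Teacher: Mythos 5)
Your proposal is correct and takes essentially the same route the paper intends: the paper's proof of Lemma \ref{decay101b} simply says ``analogous to the proof of Lemma \ref{decay101},'' and you have carried out exactly that analogy — splitting by the modified-Schwartz-factor argument (as in Lemma \ref{lem:schwa}) on closed intervals away from $\phi=0\bmod\pi$, applying the shift-by-$\pi/2$ trick to handle the complementary region, and noting throughout that the parameters $u,\veceta$ pass through the $(\vecx_1,\vecx_2)$-integral as inert spectators so that the joint rapid decay built into $\tilde\scrS$ is preserved. No gap.
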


\begin{proof} 
This is analogous to the proof of Lemma \ref{decay101}.
%The proof is analogous to \cite[Lemma 4.3]{Marklof03}. First, if $\phi$ is bounded away from $0 \mod \pi$ then the estimate follows from the definition of $f_\phi$ and proceeding as in the proof of Lemma \ref{lem:schwa}. Also, if $f \in \tilde \scrS$, then so is the Fourier transform $$\hat f(\vecz_1,\vecz_2,u,\veceta) := \int_{\RR^{2d}} f(\vecy_1,\vecy_2,u,\veceta) \, \e(-\vecy_1\cdot\vecz_1+\vecy_2\cdot\vecz_2) \, \d \vecy_1 \d \vecy_2,$$ and by an identical argument one obtains the analogous bound with $\hat f$ in place of $f$, again for $\phi$ bounded away from zero. The result follows by noticing that $\hat f_{\phi} = f_{\phi+\pi/2}$, and hence one of these two bounds is always valid.
\end{proof}

Given $f\in\tilde \scrS$, we define the theta function
\begin{equation}\label{eq:Theta}
\Theta_f(g,u,\veceta)=\Theta_{f(\,\cdot\,,u,\veceta)}(g),
\end{equation}
with $\Theta_{f(\,\cdot\,,u,\veceta)}$ as defined in \eqref{def:theta} (with $u,\veceta$ fixed).
In view of Lemma \ref{decay101b}, we have $\Theta_f\in\C(\GamG\times\RR\times\RR^d)$. We further define
\begin{equation}\label{def:Fr}
F_r(g,u) = \int_{\RR^d} \Theta_f\bigg(g \bigg(1,\begin{pmatrix} \vecnull \\ \frac12 r^d \veceta \end{pmatrix}\bigg), u, \veceta \bigg) \d \veceta.
\end{equation}
\begin{prop} \label{prop:Frleadingorder} Let $f \in \tilde\scrS$. Then
\begin{multline} \label{eq:Frleadingorder}
F_r(u+\i v,\phi,\vecxi ,u') 
=v^{d/2} \sum_{\vecm \in \ZZ^d} \int_{\RR^d} f_\phi(v^{1/2}(\vecm-\vecy),v^{1/2}(\vecm-\vecy),u',\veceta) \, \d \veceta \\
+ O(r^d) + O(v^{-\infty}).
\end{multline}
uniformly for all $(u+\i v,\phi,\vecxi) \in G$, $u' \in \RR$, with $v > 1/2$ and $r <1$. 
\end{prop}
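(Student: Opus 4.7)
The plan is a two-step reduction: first, apply Proposition \ref{prop:thetaleadingorder} to replace the full theta function by its diagonal part at the shifted group element; second, Taylor-expand in the small shift, using the rapid decay estimates of Lemma \ref{decay101b} to control the $\veceta$-integral uniformly in $v$.

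I would begin by computing how right-multiplication by $(1,(\vecnull,\tfrac12 r^d\veceta)^T)$ acts on $g=(\tau,\phi,\vecxi)$. The group law together with the Iwasawa decomposition $M=n_-(u)\Phi^{-\log v}\scrR(\phi)$ produces a shift whose $\vecy$-component is $\vecy_0(\veceta):=\tfrac12 r^d v^{-1/2}\cos\phi\,\veceta$ and whose $\vecx$-component has size $O(r^d v^{1/2}\|\veceta\|)$. Crucially, the $\vecx$-shift is irrelevant for the leading-order asymptotic, because in \eqref{def:theta} the factor $\e(\vecx\cdot(\vecm_1-\vecm_2))$ collapses to $1$ on the diagonal $\vecm_1=\vecm_2$, which is all that survives after applying Proposition \ref{prop:thetaleadingorder}.

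Applying Proposition \ref{prop:thetaleadingorder} with $\vecy$ replaced by $\vecy':=\vecy+\vecy_0(\veceta)$ then yields
\begin{equation*}
\Theta_f\bigl(g(1,(\vecnull,\tfrac12 r^d\veceta)^T),u',\veceta\bigr) = v^{d/2}\sum_{\vecm\in\ZZ^d} f_\phi\bigl((\vecm-\vecy')v^{1/2},(\vecm-\vecy')v^{1/2},u',\veceta\bigr) + O(v^{-\infty}).
\end{equation*}
The error in Proposition \ref{prop:thetaleadingorder} is controlled by Schwartz-type seminorms of $f_\phi$ in its first two arguments, so Lemma \ref{decay101b} makes the $O(v^{-\infty})$ term rapidly decaying in $\veceta$ (and uniform in $u',\phi$); consequently integration over $\veceta$ preserves this bound.

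To reduce $\vecy'$ to $\vecy$ I would Taylor-expand in the chain-rule shift $v^{1/2}(\vecy'-\vecy)=\tfrac12 r^d\cos\phi\,\veceta$. The one-term Taylor remainder is bounded by $r^d\|\veceta\|\cdot\|\nabla_{(\vecy_1,\vecy_2)}f_\phi\|$ evaluated at an intermediate point, and after multiplication by $v^{d/2}$ and summation over $\vecm$ a Riemann-sum/lattice comparison (valid thanks to the uniform rapid decay of derivatives of $f_\phi$ in the first two arguments from Lemma \ref{decay101b}) keeps the sum bounded uniformly in $v>1/2$. The surviving $\veceta$-integral converges against the joint $\veceta$-decay and produces the advertised $O(r^d)$. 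The main point of vigilance is simultaneous uniformity in $\vecm$, $\veceta$, $v$, $\phi$, and $u'$, but this is exactly what Lemma \ref{decay101b} delivers: joint rapid decay of derivatives of $f_\phi$ in all of these variables.
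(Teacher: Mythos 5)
Your proof is correct and follows essentially the same strategy as the paper: compute the effect of the right translation on the $\vecxi$-component, reduce to the near-diagonal lattice sum, then Taylor-expand in the small shift $v^{1/2}\vecy_{\tau,\phi,\veceta}=\tfrac12 r^d\cos\phi\,\veceta$ using Lemma~\ref{decay101b} to keep everything uniform. The only (stylistic) difference is that you invoke Proposition~\ref{prop:thetaleadingorder} as a black box for the near-diagonal reduction, whereas the paper redoes that lattice-point estimate inline so that the $O(v^{-\infty})$ constant visibly carries rapid decay in $\veceta$ (needed before integrating over $\veceta$); you correctly anticipate this by noting the error depends on $f$ only through Schwartz seminorms of $f_\phi$, which Lemma~\ref{decay101b} makes rapidly decaying in $\veceta$. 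Your observation that the $\vecx$-shift drops out because only $\vecm_1=\vecm_2$ survives is exactly the mechanism the paper uses. (Minor note: the $\vecx$-shift actually contains a term $\tfrac12 u v^{-1/2} r^d\veceta\cos\phi$, so it is not simply $O(r^d v^{1/2}\|\veceta\|)$, but since it is irrelevant on the diagonal this imprecision is harmless.)
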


\begin{proof}
Note that 
\begin{align}
\bigg( u+i v,\phi, \vecxi \bigg) \bigg(1,\begin{pmatrix} \vecnull \\ \frac12 r^d \veceta \end{pmatrix}\bigg) = \Bigg(u+iv,\phi,  
\begin{pmatrix}\vecx + \vecx_{\tau,\phi,\veceta}\\ \vecy + \vecy_{\tau,\phi,\veceta} 
\end{pmatrix} \bigg)
\end{align}
where
\begin{equation}
\begin{split}
\vecx_{\tau,\phi,\veceta}&= -\tfrac12 v^{1/2} r^d \veceta \sin \phi + \tfrac12 u v^{-1/2} r^d \veceta \cos \phi \\ 
\vecy_{\tau,\phi,\veceta} &= \tfrac12 v^{-1/2} r^d \veceta \cos \phi .
\end{split}
\end{equation}
We thus have
\begin{multline}
F_r(u+\i v,\phi,\vecxi,u') \\
= \int_{\RR^d} v^{d/2} \sum_{\vecm_1,\vecm_2 \in \ZZ^d} f_\phi(v^{1/2}(\vecm_1 -\vecy -\vecy_{\tau,\phi,\veceta}) , v^{1/2}(\vecm_2 -\vecy-\vecy_{\tau,\phi,\veceta} ),u',\veceta)  \\
\times \e(\tfrac12 u ( \|\vecm_1 - \vecy - \vecy_{\tau,\phi,\veceta} \|^2-\|\vecm_2-\vecy - \vecy_{\tau,\phi,\veceta}\|^2) ) \\ \times \e((\vecx+\vecx_{\tau,\phi,\veceta})\cdot(\vecm_1-\vecm_2)) \, \d \veceta.
\end{multline}
Choose $\vecm\in\ZZ^d$ such that $\vecm \in [-\tfrac12,\tfrac12)^d+\vecy+\vecy_{\tau,\phi,\veceta}$. Then, for any $T\geq 1$ and for all $\vecm_1 \neq \vecm$,
\begin{multline}
f_\phi\big(v^{1/2}(\vecm_1 - \vecy - \vecy_{\tau,\phi,\veceta}) , v^{1/2}(\vecm_2 - \vecy-\vecy_{\tau,\phi,\veceta} ),u',\veceta\big) \\ = O_T\big(v^{-T} (1+\|\vecm_1\|^{-2T}) (1+\|\vecm_2\|^{-2T}) (1+\|\veceta\|^{-2T})\big).
\end{multline}
The same is true for $\vecm_2 \neq \vecm$. Therefore
\begin{multline}
F_r(u+\i v,\phi,\vecxi,u') \\
= v^{d/2} \sum_{\vecm\in\ZZ^d} \int_{\RR^d}  f_\phi(v^{1/2}(\vecm-\vecy-\vecy_{\tau,\phi,\veceta}),v^{1/2}(\vecm-\vecy-\vecy_{\tau,\phi,\veceta}),u',\veceta) \, \d \veceta \\ + O(v^{-\infty}).
\end{multline}
The result follows from applying Taylor's theorem and using Lemma \ref{decay101b} to conclude that the error term is small uniformly in $u'$ and $\phi$.
\end{proof} 

\begin{lem} \label{lem:Fruniformbound}
Fix $T>d$, then
\begin{enumerate}
\item 
The sequence $(F_r)_r$ of continuous functions $\GamG\times\RR\to\CC$ is uniformly dominated by $\Psi_{R,\overline f}$ where $\overline f(\vecy) = (1+\|\vecy\|)^{-2T}$.
\item $F_r\to F_0$ uniformly on compacta. 
\end{enumerate}
\end{lem}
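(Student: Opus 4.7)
The plan is to exploit the left $\Gamma$-invariance of $F_r$—inherited from $\Theta_f$ since the right-multiplication factor appearing in \eqref{def:Fr}, which I denote $h_r(\veceta)$, commutes with any left action of $\Gamma$—together with Proposition \ref{prop:Frleadingorder} and the uniform rapid decay of Lemma \ref{decay101b}. Part (1) will follow by pushing $g$ into the cusp and identifying the leading term with a single coset summand of $\Psi_{R,\bar f}$; part (2) will follow from a continuity-plus-tail estimate on the defining integral.

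For part (1), I fix $u'\in A$ (a compact interval such as $\supp w$) and $g=(\tau,\phi,\vecxi)\in G$. If $X_R(\tau)=0$ the bound is trivial, so assume otherwise: then some $\gamma_0\in\SL(2,\ZZ)$ satisfies $v_{\gamma_0}>R$. Since $\Gamma$ surjects onto $\SL(2,\ZZ)$, I lift $\gamma_0$ to some $\gamma\in\Gamma$, and $\Gamma$-invariance gives $F_r(g,u')=F_r(\gamma g,u')$, whose imaginary part is $v_{\gamma_0}>R\geq 1/2$; thus Proposition \ref{prop:Frleadingorder} applies. Lemma \ref{decay101b} with the prescribed $T>d$ and an auxiliary $T'>d$ in the $\veceta$-variable bounds the resulting integrand uniformly in the transformed angle and in $u'\in A$ by
\begin{equation}
C(1+\|\vecy_1\|)^{-2T}(1+\|\veceta\|)^{-T'};
\end{equation}
the $\veceta$-integral produces a constant, and since $\chi_R(v_{\gamma_0})=1$ the remaining series is precisely the $\gamma_0$-summand of $\Psi_{R,\bar f}(g)$, hence $\leq\Psi_{R,\bar f}(g)$. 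The errors $O(r^d)$ (using $r\leq 1$) and $O(v_{\gamma_0}^{-\infty})$ (using $v_{\gamma_0}\geq 1$) absorb into an additive constant, yielding $|F_r(g,u')|X_R(\tau)\leq L(1+\Psi_{R,\bar f}(g))$ with $L$ independent of $r$.

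For part (2), I take a compact $K\subset\GamG\times\RR$ and lift to a compact $K'\subset G\times\RR$ inside a fundamental domain, where $v\in[\sqrt{3}/2,V]$ and the other coordinates are bounded. Writing
\begin{equation}
F_r(g,u')-F_0(g,u')=\int_{\RR^d}\bigl[\Theta_f(g\,h_r(\veceta),u',\veceta)-\Theta_f(g,u',\veceta)\bigr]\,\d\veceta,
\end{equation}
I split at $\|\veceta\|=R_0$. For $\|\veceta\|\leq R_0$, the argument $g\,h_r(\veceta)$ ranges over a compact subset of $G$ for $r\in[0,1]$ and tends to $g$ uniformly as $r\to 0$, so joint continuity of $\Theta_f$ forces the integrand to zero uniformly. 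For $\|\veceta\|>R_0$, the definition \eqref{def:theta} of $\Theta_f$ together with Lemma \ref{decay101b}—and the $\ZZ^{2d}$-translation invariance of the defining double sum, which absorbs the $\tfrac12 r^d\veceta$-perturbation of the $\vecy$-coordinate—gives $|\Theta_f(g',u',\veceta)|\leq C(1+\|\veceta\|)^{-T'}$ uniformly for $g'$ in our compact region. The tail contribution is thus $\ll R_0^{d-T'}$, vanishing as $R_0\to\infty$ for $T'>d$.

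The essential analytic inputs are Proposition \ref{prop:Frleadingorder} and Lemma \ref{decay101b}, both already supplied. The only delicate bookkeeping is the uniform-in-$g$ decay bound on $\Theta_f$ in part (2), for which the $\ZZ^{2d}$-invariance of the lattice sum is the key observation.
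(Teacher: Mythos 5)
Your proof is correct and follows essentially the same path as the paper's: reduce to the region $v>1/2$ via $\Gamma$-invariance of $F_r$ (the paper phrases this as ``a fundamental domain of $\Gamma$ lies in $v>1/2$,'' while you choose the explicit cusp representative $\gamma_0$, but these are the same reduction), apply Proposition \ref{prop:Frleadingorder}, bound the leading sum by $\Psi_{R,\bar f}$ using the uniform decay from Lemma \ref{decay101b}, and for part (2) combine joint continuity of $\Theta_f$ with the rapid decay in $\veceta$ to get uniform convergence on compacta. The paper states part (2) in a single sentence; your compact/tail split at $\|\veceta\|=R_0$ and the observation that $\ZZ^{2d}$-invariance controls the unbounded $\vecy$-shift coming from $h_r(\veceta)$ are exactly the details the terse statement suppresses.
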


\begin{proof}
 The set of $(u+\i v,\phi,\vecxi) \in G$ with $v > 1/2$ contains a fundamental domain of $\Gamma$ in $G$. Therefore, by Proposition \ref{prop:Frleadingorder} we have for $r<1$ that,
\begin{equation} \begin{split}
F_r(u+\i v,\phi,\vecxi ,u')  &\ll 1 + v^{d/2} \sum_{\vecm\in\ZZ^d} \int_{\RR^d} f_\phi( (\vecm-\vecy) \, v^{1/2}, (\vecm -\vecy) \, v^{1/2},u',\veceta) \, \d \veceta \\
& \ll 1 + v^{d/2} \sum_{\vecm\in\ZZ^d} \bar f( (\vecm-\vecy) v^{1/2} ) \int_{\RR^d}  (1+\|\veceta\|)^{-T} \, \d \veceta \\
& \ll 1+ \Psi_{R,\bar f}(\tau,\vecxi).
\end{split} \end{equation}
The first result is thus proved. The second result follows from the continuity of $\Theta_f$ and Lemma \ref{decay101b}.
\end{proof}

\begin{prop} \label{maincorollary2}
Let $f \in \tilde \scrS$, and assume $\yy$ is Diophantine of type $\kappa < (d-1)/(d-2)$ with the components of $(1,\trans\yy)$ linearly independent over $\QQ$. Let $w:\RR\to\RR$ piecewise continuous, continuous at 0, with compact support.  Then
\begin{multline}
\lim_{r \to 0} r^{d-2} \int_{\RR} F_r \left( \bigg( u+\i r^2,0, \begin{pmatrix} \vecnull \\ \yy \end{pmatrix} \bigg), r^{d-2} u \right) \, w(r^{d-2} u) \, \d u \\ 
\hspace{1cm} =2\,w(0)\,\int_{(\RR^d)^3} f(\vecy_1,\vecy_2,0,\veceta) \, \delta(\|\vecy_1\|^2-\|\vecy_2\|^2)\, \d \vecy_1\, \d \vecy_2\, \d \veceta \\
\hspace{5cm} + \int_{\RR^d\times\RR\times\RR^d} f(\vecy_1,\vecy_1,u,\veceta)\, w(u) \, \d \vecy_1\, \d u\, \d\veceta.
\end{multline}
\end{prop}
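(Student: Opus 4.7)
The plan is to follow the two-region template of Proposition~\ref{maincorollary}: fix $0<\epsilon<1$ and split the $u$-integration at $|u|=r^{2-\epsilon}$. The region $|u|>r^{2-\epsilon}$ is handled by horocycle equidistribution, and the near-cusp region $|u|<r^{2-\epsilon}$ by a direct Fourier/Poisson-summation calculation. The new element relative to Proposition~\ref{maincorollary} is the presence of the additional parameters $u'$ and $\veceta$ in $f\in\tilde\scrS$, both of which must be carried through the argument.

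For $|u|>r^{2-\epsilon}$ I would invoke Proposition~\ref{propr} applied to the sequence $(F_r)$, whose hypotheses are furnished by Lemma~\ref{lem:Fruniformbound}: uniform domination by $\Psi_{R,\bar f}$ together with uniform convergence to $F_0$ on compacta. This produces a limit equal to $\mu(\GamG)^{-1} \int_\RR \int_{\GamG} F_0(g,u')\,w(u')\,\d\mu(g)\,\d u'$. To evaluate the inner double integral I would interchange $\mu$-integration with the $\veceta$-integration built into the definition~\eqref{def:Fr} of $F_0$, and then apply to each $\veceta$-slice the mean-value identity $\int_{\GamG}\Theta_h\,\d\mu=\mu(\GamG)\int_{\RR^d}h(\vecy_1,\vecy_1)\,\d\vecy_1$ used in the proof of Proposition~\ref{maincorollary} (cf.\ \cite[Lemma~7.2]{Marklof02}). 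This yields the second summand in the claim.

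For $|u|<r^{2-\epsilon}$ I would substitute the definitions of $F_r$ and $\Theta_f$ directly. At the horocycle base point $(u+\i r^2,0,(\vecnull,\yy))$, the right-translation by $(1,(\vecnull,\tfrac12 r^d\veceta))$ translates into perturbations of size $O(r^{d-1})$ in the $\vecy$-argument and $O(r^{d-\epsilon})$ in the new lattice phase $\e(\vecx\cdot(\vecm_1-\vecm_2))$; both vanish in the limit since $d\geq 2$. Following the Poisson-summation/Riemann-sum argument of \cite[Lemma~7.3]{Marklof02}, the theta sum collapses to a Fourier integral, and after the substitution $u=r^2 t$ the contribution becomes
\begin{equation*}
\int_{|t|<r^{-\epsilon}}\int_{(\RR^d)^3} f(\vecy_1,\vecy_2,r^d t,\veceta)\,\e\bigl(\tfrac12 t(\|\vecy_1\|^2-\|\vecy_2\|^2)\bigr)\,w(r^d t)\,\d\vecy_1\,\d\vecy_2\,\d\veceta\,\d t + o(1).
\end{equation*}
Passing to the limit using continuity of $w$ at $0$, continuity of $f$ in its third slot, and the Fourier representation $\int_\RR \e(\tfrac12 t s)\,\d t = 2\delta(s)$ (valid against the Schwartz class in $\vecy_1,\vecy_2$) produces the first summand.

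The main technical obstacle is justifying the interchange of the $\veceta$-integration with the $r\to 0$ limit, particularly in the near-cusp region where the lattice sum defining $\Theta_f$ is infinite and $f$ has only polynomial (not compact) support in $\veceta$. The uniform rapid-decay estimate of Lemma~\ref{decay101b}, which holds jointly in all four variable blocks, supplies the integrable majorant required for dominated convergence to commute with both the lattice sum and the $\veceta$-integration.
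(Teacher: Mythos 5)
Your proposal is correct and takes essentially the same route as the paper: the paper's own proof is the one-line remark that it is ``analogous to the proof of Proposition~\ref{maincorollary}'', and what you have written out is precisely that analogy made explicit—the two-region split at $|u|=r^{2-\epsilon}$, Proposition~\ref{propr} fed by Lemma~\ref{lem:Fruniformbound} for the equidistribution region, and the Poisson/Riemann-sum argument from \cite[Lemma~7.3]{Marklof02} (with the extra $(u',\veceta)$ slots controlled via Lemma~\ref{decay101b}) for the near-cusp region.
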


\begin{proof}
This is analogous to the proof of Proposition \ref{maincorollary}.
\end{proof}

\section{Order two}\label{sec:two}

In this section we show how the terms at order $\lambda^2$ can be written as averages over theta functions of the form \eqref{def:Fr}. We assume throughout this section that $\vecalf$ is Diophantine of type $\kappa < (d-1)/(d-2)$ with the components of $(1,\trans\vecalf)$ linearly independent over $\QQ$. 

\subsection{The cases \texorpdfstring{$\ell=2$}{l=2} and \texorpdfstring{$\ell=0$}{l=0}}

The cases $\ell =0$ and $2$ are similar and we treat them together. First, from \eqref{Isn} we have that $\scrI_{2,2}$ can be written
\begin{equation}
\begin{split}
&\scrI_{2,2}(s_1,s_2)
 = r^{2d} h^{d} \sum_{\vecm_1,\vecm_2\in\ZZ^d} \int_{\RR^d} | \hat W(r(\vecm_2-\vecm_1))|^2 
  \\
& \times  \e(\tfrac12 \, (s_2-s_1) (\|\vecm_2+\vecalf \|^2- \| \vecm_1+\vecalf \|^2) ) 
 \\
&\times \tilde a(-\veceta,h(\vecm_2+\vecalf +\tfrac12 r^{d-1}\veceta)) \tilde b(\veceta, h(\vecm_2+\vecalf +\tfrac12r^{d-1}\veceta)) \d\veceta + O(r^\infty),
\end{split}
\end{equation}
which we express as
\begin{equation}
\begin{split}
& \scrI_{2,2}(s_1,s_2) = r^{2d} h^{d} \sum_{\vecm_1,\vecm_2\in\ZZ^d} \int_{\RR^d} | \hat W(r(\vecm_2-\vecm_1))|^2 \\
&\times \e(-\tfrac12 (s_2-s_1) r^{d-1} (\vecm_2-\vecm_1)\cdot \veceta) ) 
 \\
& \times  \e(\tfrac12 \, (s_2-s_1) (\|\vecm_2+\vecalf + \tfrac12 r^{d-1}\veceta \|^2- \| \vecm_1+\vecalf + \tfrac12r^{d-1}\veceta\|^2) ) 
 \\
&\times \tilde a(-\veceta,h(\vecm_2+\vecalf +\tfrac12 r^{d-1}\veceta)) \tilde b(\veceta, h(\vecm_2+\vecalf +\tfrac12r^{d-1}\veceta)) \d\veceta + O(r^\infty) .
\end{split}
\end{equation}
In the same way we can see from \eqref{Is0} that $\scrI_{0,2}$ can be written 
\begin{equation}
\begin{split}
& \scrI_{0,2}(s_1,s_2) = r^{2d} h^{d} \sum_{\vecm_1,\vecm_2\in\ZZ^d} \int_{\RR^d}  \tilde a(-\veceta,h(\vecm_2+\vecalf +\tfrac12 r^{d-1}\veceta)) \\
&\times \e(-\tfrac12 \, s_{1} \|\vecm_2+\vecalf+r^{d-1}\veceta\|^2) \hat W(r(\vecm_2-\vecm_{1})) \\
& \times \e(-\tfrac12 \, (s_{2}-s_1) \|\vecm_1+\vecalf+r^{d-1}\veceta \|^2) \hat W(r(\vecm_1-\vecm_2))  \\
 &\times \e(\tfrac12 \, s_2 \|\vecm_2+\vecalf+r^{d-1}\veceta \|^2)
 \tilde b(\veceta, h(\vecm_2+\vecalf +\tfrac12r^{d-1}\veceta)) \d\veceta + O(r^\infty),
 \end{split}
\end{equation}
which we express as
\begin{equation}
\begin{split}
& \scrI_{0,2}(s_1,s_2) = r^{2d} h^{d} \sum_{\vecm_1,\vecm_2\in\ZZ^d} \int_{\RR^d} | \hat W(r(\vecm_2-\vecm_1))|^2 \\
& \times \e(\tfrac12 \, (s_2-s_1) r^{d-1} (\vecm_2-\vecm_1)\cdot \veceta ) 
 \\
&\times  \e(\tfrac12 \, (s_2-s_1) (\|\vecm_2+\vecalf + \tfrac12r^{d-1}\veceta\|^2- \| \vecm_1+\vecalf + \tfrac12r^{d-1}\veceta\|^2) ) 
 \\
& \times \tilde a(-\veceta,h(\vecm_2+\vecalf +\tfrac12 r^{d-1}\veceta)) \tilde b(\veceta, h(\vecm_2+\vecalf +\tfrac12r^{d-1}\veceta)) \d\veceta + O(r^\infty).
\end{split}
\end{equation}
We can then combine these two terms in the following way: First define $\scrI_{+,2}$ as
\begin{equation}
\begin{split}
& \scrI_{+,2}(s_1,s_2) = r^{2d} h^{d} \sum_{\vecm_1,\vecm_2\in\ZZ^d} \int_{\RR^d} | \hat W(r(\vecm_2-\vecm_1))|^2  \\
& \times \e(-\tfrac12 \, |s_2-s_1| r^{d-1} (\vecm_2-\vecm_1)\cdot \veceta ) 
 \\
& \times  \e(\tfrac12 \, (s_2-s_1) (\|\vecm_2+\vecalf + \tfrac12r^{d-1}\veceta\|^2- \| \vecm_1+\vecalf + \tfrac12r^{d-1}\veceta\|^2) ) 
 \\
& \times \tilde a(-\veceta,h(\vecm_2+\vecalf +\tfrac12 r^{d-1}\veceta)) \tilde b(\veceta, h(\vecm_2+\vecalf +\tfrac12r^{d-1}\veceta)) \d\veceta 
\end{split}
\end{equation}
and note that
\begin{equation}
\scrI_{+,2}(s_1,s_2)  =
\begin{cases}
\scrI_{2,2}(s_1,s_2) +O(r^\infty) & \text{if $s_1\leq s_2$}\\
\scrI_{0,2}(s_1,s_2) + O(r^\infty) & \text{if $s_1\geq s_2$.}
\end{cases}
\end{equation}
Therefore, after inserting the integration over $s_1$ and $s_2$ we obtain
\begin{multline}
\int_{0<s_1<s_2<h r^{1-d} t} \scrI_{2,2}(s_1,s_2) \d s_1\, \d s_2 +
\int_{0<s_2<s_1<h r^{1-d} t} \scrI_{0,2}(s_1,s_2) \d s_1\, \d s_2 \\
=
\int_0^{h r^{1-d} t} \int_0^{h r^{1-d} t} \scrI_{+,2}(s_1,s_2) \d s_1\, \d s_2 + O(r^\infty) .
\end{multline}
Note that we measure time in units of $h r^{1-d}$ as in the treatment of the zeroth order term.

\begin{lem}\label{lem:light}
Let $\scrI_{+,2}$ be defined as above and set $h=r$. Then,
\begin{multline}\label{light}
\int_0^{h r^{1-d} t} \int_0^{h r^{1-d} t} \scrI_{+,2}(s_1,s_2)\d s_1\, \d s_2 \\
= r^{d+2}  \int_{- r^{2-d} t}^{ r^{2-d} t}  F_r \left( \bigg( u+\i r^2,0, \begin{pmatrix} \vecnull \\ - \vecalf \end{pmatrix} \bigg), r^{d-2} u \right)  \, \d u ,
\end{multline}
with $F_r$ as defined in \eqref{def:Fr}, with the choice
\begin{multline} \label{littlef}
f(\vecy_1,\vecy_2,u,\veceta) =  \e(\tfrac12 (u+|u|) \, (\vecy_2-\vecy_1)\cdot \veceta) 
 \; (t-|u|) \chi_{[- t,t]}(u) \\ \times  |\hat W(\vecy_2-\vecy_1)|^2 \, \tilde a( \veceta, \vecy_2) \tilde b(-\veceta, \vecy_2) .
\end{multline} 
\end{lem}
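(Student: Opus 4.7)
The plan is to change variables to isolate the $(s_2-s_1)$ dependence of $\scrI_{+,2}$, then match the resulting expression term-by-term with the unfolded right-hand side.

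First I would substitute $u=s_1-s_2$ and integrate out $s_1$ from the double integral. Since $\scrI_{+,2}(s_1,s_2)$ depends only on $s_2-s_1$ and the region $[0,T]^2$ with $T=hr^{1-d}t=r^{2-d}t$ is symmetric, this produces a factor $T-|u|=r^{2-d}t-|u|$ and leaves the outer integral $\int_{-T}^{T}(r^{2-d}t-|u|)\widetilde\scrI(u)\,\d u$, where $\widetilde\scrI(u)$ denotes $\scrI_{+,2}$ with $s_2-s_1$ replaced by $-u$. I would then perform the measure-preserving substitution $\veceta\mapsto-\veceta$ inside the $\RR^d$-integral, bringing the integrand into a form with phase
\[\tfrac12 r^{d-1}|u|(\vecm_2-\vecm_1)\cdot\veceta+\tfrac12 u\bigl(\|\vecm_1+\vecalf-\tfrac12 r^{d-1}\veceta\|^2-\|\vecm_2+\vecalf-\tfrac12 r^{d-1}\veceta\|^2\bigr)\]
and amplitude $|\hat W(r(\vecm_2-\vecm_1))|^2\,\tilde a(\veceta, r(\vecm_2+\vecalf-\tfrac12 r^{d-1}\veceta))\,\tilde b(-\veceta, r(\vecm_2+\vecalf-\tfrac12 r^{d-1}\veceta))$.

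Second, I would unfold the right-hand side using \eqref{def:Fr}, \eqref{def:theta}, and \eqref{littlef} with $g=(u+\i r^2,0,(\vecnull,-\vecalf))$. The Iwasawa matrix is $M=n_-(u)\Phi^{-2\log r}=\left(\begin{smallmatrix} r & ur^{-1}\\0 & r^{-1}\end{smallmatrix}\right)$, so a direct computation of the group product gives $g\cdot(1,(\vecnull,\tfrac12 r^d\veceta))=(u+\i r^2,0,(\tfrac12 ur^{d-1}\veceta,-\vecalf+\tfrac12 r^{d-1}\veceta))$. Reading off $\vecx=\tfrac12 ur^{d-1}\veceta$ and $\vecy=-\vecalf+\tfrac12 r^{d-1}\veceta$ in $\Theta_f$, one has $v^{1/2}(\vecm_j-\vecy)=r(\vecm_j+\vecalf-\tfrac12 r^{d-1}\veceta)$ and $\vecy_2-\vecy_1=r(\vecm_2-\vecm_1)$. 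Substituting into \eqref{littlef} with $u'=r^{d-2}u$ produces the same $|\hat W|^2$, $\tilde a(\veceta,\cdot)$, $\tilde b(-\veceta,\cdot)$ and characteristic $\chi_{[-r^{2-d}t,r^{2-d}t]}(u)$ as on the left, together with the $f$-phase $\tfrac12 r^{d-1}(u+|u|)(\vecm_2-\vecm_1)\cdot\veceta$.

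The $f$-phase combines with the theta contribution $\vecx\cdot(\vecm_1-\vecm_2)=-\tfrac12 ur^{d-1}(\vecm_2-\vecm_1)\cdot\veceta$ to give exactly $\tfrac12 r^{d-1}|u|(\vecm_2-\vecm_1)\cdot\veceta$, matching the first term of the left phase, and the quadratic theta phase $\tfrac12 u(\|\vecm_1-\vecy\|^2-\|\vecm_2-\vecy\|^2)$ matches the second term verbatim. For the prefactors, the outer $r^{d+2}$, the theta $v^{d/2}=r^d$, and the factor $(t-r^{d-2}|u|)=r^{d-2}(r^{2-d}t-|u|)$ coming from $f$ multiply to $r^{3d}(r^{2-d}t-|u|)$, matching $r^{2d}h^d(T-|u|)=r^{3d}(r^{2-d}t-|u|)$ on the left. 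The main obstacle is bookkeeping: one must carefully track the two sign-flipping substitutions ($u=s_1-s_2$ and $\veceta\mapsto-\veceta$) together with the subtle cancellation between the $f$-phase and the $\vecx\cdot(\vecm_1-\vecm_2)$ term that generates $|u|$ (rather than $u$) in the final phase.
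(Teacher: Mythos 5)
Your proposal is correct and follows essentially the same route as the paper's proof: the same two substitutions ($\veceta\mapsto-\veceta$ and $u=s_1-s_2$, applied in the opposite order, which is immaterial), the same group-product identity for $g\cdot(1,(\vecnull,\tfrac12 r^d\veceta))$ (this is precisely the paper's equation~\eqref{calc001}), and the same term-by-term matching of phases and prefactors. The cancellation you flag between the $f$-phase and the $\vecx\cdot(\vecm_1-\vecm_2)$ theta term to produce the $|u|$-dependence is exactly the bookkeeping the paper also relies on.
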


\begin{proof}
In the case $h=r$ the left hand side of \eqref{light} reads  (after the variable substitution $\veceta\mapsto -\veceta$)
\begin{equation}
\begin{split}
&   r^{3d} \int_0^{ r^{2-d} t} \int_0^{ r^{2-d} t} \int_{\RR^d} \\
&\times \sum_{\vecm_1,\vecm_2\in\ZZ^d} | \hat W(r(\vecm_2-\vecm_1))|^2 
 \e(\tfrac12 \, |s_2-s_1| r^{d-1} (\vecm_2-\vecm_1)\cdot \veceta) 
 \\
& \times  \e(\tfrac12 \, (s_2-s_1)  (\|\vecm_2+\vecalf - \tfrac12r^{d-1}\veceta\|^2- \| \vecm_1+\vecalf - \tfrac12r^{d-1}\veceta\|^2) ) 
 \\
&\times \tilde a(\veceta,r(\vecm_2+\vecalf -\tfrac12 r^{d-1}\veceta)) \tilde b(-\veceta, r(\vecm_2+\vecalf -\tfrac12r^{d-1}\veceta)) \d\veceta  \, \d s_1 \d s_2 .
\end{split}
\end{equation}
We then use the relation
\begin{align}
\int_0^t  \int_0^t  f(s_2-s_1) \d s_1 \d s_2 =  \int_{-t}^t  (t-|u|) f(u) \, \d u
\end{align}
to re-write the above as 
\begin{equation}
\begin{split}
&  r^{2d+2}  \int_{- r^{2-d} t}^{ r^{2-d} t}  \int_{\RR^d} \sum_{\vecm_1,\vecm_2\in\ZZ^d} | \hat W(r(\vecm_2-\vecm_1))|^2 
 \e(\tfrac12 \, |u| r^{d-1} (\vecm_2-\vecm_1)\cdot \veceta ) 
 \\
& \times (t- r^{d-2}|u|) \; \e(\tfrac12 \, u (\|\vecm_1+\vecalf - \tfrac12r^{d-1}\veceta\|^2- \| \vecm_2+\vecalf - \tfrac12r^{d-1}\veceta\|^2) ) 
 \\
&\times \tilde a(\veceta, r(\vecm_2+\vecalf -\tfrac12 r^{d-1}\veceta )) \tilde b(-\veceta,  r(\vecm_2+\vecalf -\tfrac12r^{d-1}\veceta)) \d\veceta \,  \d u \\
& = r^{d+2}  \int_{- r^{2-d} t}^{ r^{2-d} t}\int_{\RR^d} \Theta_f\bigg( \bigg(u+\i r^2, 0, \begin{pmatrix}  \tfrac12 u r^{d-1}\veceta \\ -\vecalf + \tfrac12 r^{d-1}\veceta \end{pmatrix} \bigg), r^{d-2} u,\veceta \bigg) \, \d \veceta \, \d u,
\end{split}
\end{equation}
with $f$ as in \eqref{littlef}.
The result then follows from the fact that
\begin{equation}\label{calc001}
\left(u+\i r^2, 0, \begin{pmatrix}  \tfrac12 u r^{d-1}\veceta \\ -\vecalf + \tfrac12 r^{d-1}\veceta \end{pmatrix} \right)
= \left(u+\i r^2, 0, \begin{pmatrix}  \vecnull \\ -\vecalf  \end{pmatrix} \right)
 \left(\i, 0, \begin{pmatrix} \vecnull \\ \tfrac12 r^d\veceta \end{pmatrix} \right) .
\end{equation}
\end{proof}

Note that in view of \eqref{lambdaisrescaled} we should consider the rescaling of the coupling constant $\lambda\to \lambda h^{-2}$, or equivalently of the potential itself $W \to h^{-2} W$. At second order the potential appears as $|\hat W|^2$, and so we must rescale our terms by a factor of $h^{-4}$.
\begin{prop} Let $\scrI_{+,2}$ be defined as above. Then 
\begin{equation}
\begin{split}
\label{light1}
\lim_{ h=r\to 0} & \,  h^{-4}  \int_0^{h r^{1-d} t} \int_0^{h r^{1-d} t} \scrI_{+,2}(s_1,s_2)\d s_1\, \d s_2 \\
&= 2t  \int_{(\RR^d)^3}  | \hat{W}(\vecy_2-\vecy_1)|^2 \, a(\vecx,\vecy_2) b(\vecx,\vecy_2) \, \delta(\|\vecy_1\|^2-\|\vecy_2\|^2) \, \d \vecx \, \d \vecy_1 \, \d \vecy_2 \,  \\
& \hspace{4cm} + t^2  \, |\hat{W}(\vecnull)|^2 \, \int_{\RR^d\times\RR^d}  a(\vecx,\vecy) \, b(\vecx,\vecy) \, \d \vecx \, \d \vecy .
\end{split}
\end{equation}
\end{prop}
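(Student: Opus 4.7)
The plan is to apply Proposition \ref{maincorollary2} to the integral representation provided by Lemma \ref{lem:light}. Since $h=r$, the prefactor $h^{-4} r^{d+2} = r^{d-2}$, so the left-hand side of \eqref{light1} equals
\begin{equation*}
r^{d-2} \int_{-r^{2-d}t}^{r^{2-d}t} F_r\!\left(\!\left(u+\i r^2,0,\begin{pmatrix}\vecnull\\-\vecalf\end{pmatrix}\right),r^{d-2}u\right) du,
\end{equation*}
with $f$ as in \eqref{littlef}. The factor $(t-|u'|)\chi_{[-t,t]}(u')$ inside $f$ (evaluated at $u'=r^{d-2}u$) forces the integrand to vanish for $|u|>r^{2-d}t$, so one may freely extend the integration to all of $\RR$. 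The Diophantine and linear-independence conditions on $\vecalf$ are symmetric under $\vecalf\mapsto-\vecalf$, so Proposition \ref{maincorollary2} applies with $\yy=-\vecalf$ and any $w\in C_c(\RR)$ that is identically $1$ on $[-t,t]$; for such a $w$, $w(0)=1$ and $w(r^{d-2}u)F_r(\cdot,r^{d-2}u)=F_r(\cdot,r^{d-2}u)$ throughout the support of the integrand.

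Once the proposition is invoked, the limit \eqref{light1} follows by inserting \eqref{littlef} into the two resulting terms and simplifying. For the $\delta$-term, evaluating $f$ at $u=0$ yields $t\,|\hat W(\vecy_2-\vecy_1)|^2\,\tilde a(\veceta,\vecy_2)\tilde b(-\veceta,\vecy_2)$, and the Fourier-inversion identity
\begin{equation*}
\int_{\RR^d} \tilde a(\veceta,\vecy_2)\,\tilde b(-\veceta,\vecy_2)\,d\veceta = \int_{\RR^d} a(\vecx,\vecy_2)\,b(\vecx,\vecy_2)\,d\vecx
\end{equation*}
produces the first term on the right of \eqref{light1}. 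For the diagonal term, setting $\vecy_1=\vecy_2$ in \eqref{littlef} trivialises the exponential phase and replaces $|\hat W|^2$ by $|\hat W(\vecnull)|^2$, while $\int_{-t}^{t}(t-|u|)\,du = t^2$; the same Fourier inversion then gives $t^2|\hat W(\vecnull)|^2\int a(\vecx,\vecy)b(\vecx,\vecy)\,d\vecx\,d\vecy$.

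The main obstacle is that the $f$ of \eqref{littlef} does not literally belong to $\tilde\scrS$, because $\chi_{[-t,t]}(u)$ has jump discontinuities at $u=\pm t$; all other factors are continuous and provide the required decay in $\vecy_1,\vecy_2,\veceta$ via $\hat W$, $\tilde a$, $\tilde b$, and the exponential $\e(\tfrac12(u+|u|)(\vecy_2-\vecy_1)\cdot\veceta)$ is continuous since $u+|u|$ is. I would circumvent this by regularisation: replace $\chi_{[-t,t]}$ with a smooth bump $\psi_\ve\in C_c^\infty(\RR)$ satisfying $\psi_\ve=1$ on $[-t+\ve,t-\ve]$ and $\supp\psi_\ve\subset[-t,t]$. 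The resulting $f_\ve$ lies in $\tilde\scrS$, so Proposition \ref{maincorollary2} applies directly, and both limit terms depend continuously on $\ve$ as $\ve\to 0$ (the first is independent of $\ve$ for $\ve<t$ since $\psi_\ve(0)=1$, the second by dominated convergence in $u$). To pass $\ve\to 0$ uniformly in $r$, I would bound the contribution of the narrow boundary strips $r^{d-2}u\in[\pm t-\ve,\pm t+\ve]$ to $r^{d-2}\int F_r\,du$ by $O(\ve)$, using the uniform domination $F_r\ll 1+\Psi_{R,\bar f}$ from Lemma \ref{lem:Fruniformbound} together with Propositions \ref{diophprop} and \ref{diophprop234}.
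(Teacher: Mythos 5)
Your approach is correct and essentially identical to the paper's: apply Proposition \ref{maincorollary2} (with $\yy=-\vecalf$) to the representation furnished by Lemma \ref{lem:light}, observe that $h^{-4}r^{d+2}=r^{d-2}$ when $h=r$, extend the $u$-integral from $[-r^{2-d}t,r^{2-d}t]$ to $\RR$ using the compact support of $f$ in $u'$, choose $w\in C_c(\RR)$ with $w\equiv 1$ on $[-t,t]$, and then evaluate the two limit terms by the Fourier-inversion identity $\int\tilde a(\veceta,\vecy)\tilde b(-\veceta,\vecy)\,\d\veceta=\int a(\vecx,\vecy)b(\vecx,\vecy)\,\d\vecx$ and the elementary integral $\int_{-t}^t(t-|u|)\,\d u=t^2$.

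The only point where you deviate is the regularisation of $\chi_{[-t,t]}(u)$, and this is unnecessary: the relevant factor in \eqref{littlef} is the product $(t-|u|)\chi_{[-t,t]}(u)$, which is the continuous tent function (vanishing at $u=\pm t$), not the bare indicator. Membership in $\tilde\scrS$ only requires continuity of $f$ and of its $\vecy_1,\vecy_2$-derivatives in all variables together with rapid decay; no differentiability in $u$ is required. Since the exponential $\e(\tfrac12(u+|u|)(\vecy_2-\vecy_1)\cdot\veceta)$ is continuous in $u$, bounded, and smooth in $\vecy_1,\vecy_2$ with at most polynomially growing derivatives, and the remaining factors $|\hat W|^2$, $\tilde a$, $\tilde b$ supply Schwartz-class decay in $\vecy_1,\vecy_2,\veceta$, the $f$ of \eqref{littlef} already lies in $\tilde\scrS$ as written. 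Your $\psi_\ve$-regularisation and the boundary-strip estimate are valid but add nothing; the paper invokes Proposition \ref{maincorollary2} directly.
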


\begin{proof}
By Proposition \ref{maincorollary2} and Lemma \ref{lem:light} we have that the limit in \eqref{light1} is given by
\begin{multline}
 2 \, \, \int_{(\RR^d)^3} f(\vecy_1,\vecy_2,0,\veceta) \, \delta(\|\vecy_1\|^2-\|\vecy_2\|^2) \, \d \vecy_1\d \vecy_2 \, \d \veceta \, \\
+ \int_{- t}^{ t} \,\int_{\RR^d\times\RR^d} f(\vecy,\vecy,u,\veceta)  \d \vecy \, \d \veceta \, \d u .
\end{multline}
We have for the first term
\begin{multline}
2\int_{(\RR^d)^3} f(\vecy_1,\vecy_2,0,\veceta) \, \delta(\|\vecy_1\|^2-\|\vecy_2\|^2) \,  \d \vecy_1  \d \vecy_2 \, \d \veceta \,\\
= 2t\int_{(\RR^d)^3} | \hat{W}(\vecy_2-\vecy_1)|^2 \, \tilde a(\veceta,\vecy_2) \tilde b(-\veceta, \vecy_2) \, \delta(\|\vecy_1\|^2-\|\vecy_2\|^2) \, \d \vecy_1 \d \vecy_2 \, \d \veceta \,  \\
= 2t  \int_{(\RR^d)^3}  | \hat{W}(\vecy_2-\vecy_1)|^2 \, a(\vecx,\vecy_2) b(\vecx,\vecy_2) \, \delta(\|\vecy_1\|^2-\|\vecy_2\|^2) \, \d \vecx \, \d \vecy_1  \d \vecy_2
\end{multline}
Similarly for the second term we obtain
\begin{equation}
\begin{split}
&\int_{- t}^{ t} \int_{\RR^d\times\RR^d}  f(\vecy,\vecy,u,\veceta) \, \d \vecy \, \d \veceta \,  \,\d u\\
&= \int_{- t}^{ t} (t-|u|) \int_{\RR^d\times\RR^d}  |\hat{W}(\vecnull)|^2 \, \tilde a(\veceta,\vecy) \tilde b(-\veceta,\vecy)  \,  \d \vecy \, \d \veceta \,  \,\d u\\
&= t^2 \, |\hat{W}(\vecnull)|^2 \, \int_{\RR^d\times\RR^d}  a(\vecx,\vecy) \, b(\vecx,\vecy) \, \d \vecx \, \d \vecy.
\end{split}
\end{equation}
\end{proof}

\subsection{The case \texorpdfstring{$\ell=1$}{l=1}} 

\begin{lem}\label{lem:light2}
For $h= r$,
\begin{multline}\label{light22}
\int_0^{h r^{1-d} t} \int_0^{h r^{1-d} t} \scrI_{1,2}(s_1,s_2) \, \d s_1 \d s_2 \\
= r^{d+2}   \int_{- r^{2-d} t}^{ r^{2-d} t}  F_r \left( \bigg( u+\i r^2,0, \begin{pmatrix} \vecnull \\ -\vecalf \end{pmatrix} \bigg), r^{d-2} u \right) \,  \d u +O(r^\infty),
\end{multline}
with $F_r$ as defined in \eqref{def:Fr}, where
\begin{multline}\label{littlef2}
f(\vecy_1,\vecy_2,u,\veceta)  =  \frac12 \left(\int_{|u|}^{2t - |u|} \e( \tfrac12 (u-u') \veceta \cdot (\vecy_2-\vecy_1)) \d u' \right)   \chi_{[- t,t]}(u) \\
\times | \hat W( \vecy_1-\vecy_2)|^2 \, \tilde a(\veceta, \vecy_1) \, \tilde b(-\veceta,\vecy_2) .
\end{multline}
\end{lem}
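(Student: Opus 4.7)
The plan is to reduce the statement to the same theta-function framework used in Lemma \ref{lem:light}. First I would specialize equation \eqref{Is} to $\ell=1$, $n=2$, noting that $\scrT_{1,0}^-(\vecalf) = \scrT_{2,1}^-(\vecalf+r^{d-1}\veceta) = 1$, so $\scrI_{1,2}(s_1,s_2)$ collapses to a sum over $\vecm_1, \vecm_2$ of $|\hat W(r(\vecm_1-\vecm_2))|^2$ times the Schwartz factors $\tilde a(-\veceta,h(\vecm_1+\vecalf+\tfrac12 r^{d-1}\veceta))\,\tilde b(\veceta,h(\vecm_2+\vecalf+\tfrac12 r^{d-1}\veceta))$, weighted by the combined phase
\begin{equation*}
\tfrac12 s_1(\|\vecm_1+\vecalf\|^2 - \|\vecm_2+\vecalf\|^2) + \tfrac12 s_2(\|\vecm_2+\vecalf+r^{d-1}\veceta\|^2 - \|\vecm_1+\vecalf+r^{d-1}\veceta\|^2).
\end{equation*}
Expanding the squares in the second group, this simplifies to $\tfrac12(s_1-s_2)(P_1-P_2) + r^{d-1} s_2\,\veceta\cdot(\vecm_2-\vecm_1)$ with $P_i := \|\vecm_i+\vecalf\|^2$, which cleanly separates the $(s_1-s_2)$-dependent quadratic phase from a phase linear in the slow variable $s_1+s_2$.

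Next I would perform the change of variables $u = s_1 - s_2$, $u' = s_1 + s_2$ (Jacobian $\tfrac12$) on the domain $[0, hr^{1-d}t]^2$, which transforms into $u \in [-T, T]$, $u' \in [|u|, 2T - |u|]$ with $T = r^{2-d}t$, combined with the substitution $\veceta \to -\veceta$ in the outer integral. Setting $h = r$, this yields $\tfrac12 r^{3d}$ times a sum-integral whose summand is $|\hat W(r(\vecm_1-\vecm_2))|^2\,\tilde a(\veceta,\vecy_1)\,\tilde b(-\veceta,\vecy_2)$ with $\vecy_i = r(\vecm_i+\vecalf - \tfrac12 r^{d-1}\veceta)$, together with the phase $\e(\tfrac12 u (P_1 - P_2))\,\e(\tfrac12 r^{d-1}(u'-u)(\vecm_1-\vecm_2)\cdot\veceta)$.

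The crucial step is to identify this with an integrated theta function. Using the semidirect-product identity of the form \eqref{calc001}, I would write
$(u+\i r^2, 0, (\vecnull, -\vecalf))\,(1, (\vecnull, \tfrac12 r^d \veceta)) = (u+\i r^2, 0, (\tfrac12 u r^{d-1}\veceta, -\vecalf + \tfrac12 r^{d-1}\veceta))$
and expand the corresponding $\Theta_f$ via \eqref{def:theta}. Expanding $\|\vecm_i - \vecy\|^2$ with $\vecy = -\vecalf + \tfrac12 r^{d-1}\veceta$ produces a cross term $-\tfrac12 u r^{d-1}\veceta\cdot(\vecm_1-\vecm_2)$ that exactly cancels the shift contribution $+\tfrac12 u r^{d-1}\veceta\cdot(\vecm_1-\vecm_2)$ coming from the $\vecx$-entry of the theta-function definition, leaving the clean phase $\e(\tfrac12 u (P_1 - P_2))$; the remaining phase $\e(\tfrac12 r^{d-1}(u'-u)(\vecm_1-\vecm_2)\cdot\veceta)$ together with the amplitude $|\hat W|^2\,\tilde a\,\tilde b$ are absorbed into the function $f$ of \eqref{littlef2}. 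To account for the inner $u'$-integration and the macroscopic-time argument of $F_r$, I would rescale $u' \to r^{d-2} u'$ inside the definition of $f$, which converts the range $[|r^{d-2}u|, 2t - |r^{d-2}u|]$ back into $[|u|, 2T - |u|]$ and produces an extra $r^{d-2}$ that combines with the prefactor $r^d$ of $\Theta_f$ and with $r^{d+2}$ on the right-hand side of \eqref{light22} to match $\tfrac12 r^{3d}$. The $\chi_{[-t,t]}$ factor in \eqref{littlef2} enforces the restriction $|u| \leq T$ at the macroscopic scale $r^{d-2}u$, while the prefactor $\tfrac12$ in \eqref{littlef2} absorbs the Jacobian of the $(s_1,s_2)\mapsto(u,u')$ substitution.

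The main difficulty here is purely computational: the careful bookkeeping of signs and prefactors through the chain of substitutions ($\veceta \to -\veceta$, the change of variables on $(s_1, s_2)$, the shift identity \eqref{calc001}, and the inner rescaling $u' \to r^{d-2} u'$), together with verifying that the quadratic $r^{d-1}\veceta$ cross term in $\|\vecm_i - \vecy\|^2$ cancels as claimed against the $\vecx$-phase of $\Theta_f$. No new analytic input is needed beyond what is already in place for Lemma \ref{lem:light}; the $O(r^\infty)$ error in \eqref{light22} is inherited directly from the $O(r^\infty)$ error term in \eqref{Is}.
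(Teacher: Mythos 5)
Your proposal is correct and follows essentially the same route as the paper's proof: specialize \eqref{Is} to $(\ell,n)=(1,2)$, change variables to $u=s_1-s_2$, $u'=s_1+s_2$, substitute $\veceta\mapsto-\veceta$, and use the semidirect-product identity \eqref{calc001} to identify the result with the $F_r$ of \eqref{def:Fr}. The only cosmetic difference is the intermediate bookkeeping of the phase: you simplify to the ``undressed'' $P_i=\|\vecm_i+\vecalf\|^2$ and observe that the quadratic cross term in $\|\vecm_i-\vecy\|^2$ cancels the $\vecx\cdot(\vecm_1-\vecm_2)$ contribution, whereas the paper carries the dressed form $\|\vecm_i+\vecalf\pm\tfrac12 r^{d-1}\veceta\|^2$ so that the theta's quadratic phase is matched directly; both organizations lead to the same result and both correctly place the remaining $r^{d-1}u_1\veceta\cdot(\vecm_1-\vecm_2)$ and $u'$-dependent pieces inside $f$.
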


\begin{proof}
As before, we start from Eq.~\eqref{Is}. For $\scrI_{1,2}$ this yields the explicit formula
\begin{equation}
\begin{split}
\scrI_{1,2}(s_1,s_2) & = r^{2d} h^{d} \sum_{\vecm_1,\vecm_2\in\ZZ^d} \int_{\RR^d} \e(-\tfrac12 \, s_1\|\vecm_2+\vecalf\|^2) \hat W(r(\vecm_2-\vecm_1)) \\
 & \times \e(\tfrac12 \, s_1 \|\vecm_1+\vecalf \|^2) \tilde a(-\veceta,h(\vecm_1+\vecalf +\tfrac12 r^{d-1}\veceta)) \\
& \times \e(-\tfrac12 \, s_2 \|\vecm_1+\vecalf+r^{d-1}\veceta\|^2) \hat W(r(\vecm_1-\vecm_2)) \\
 & \times \e(\tfrac12 \, s_2 \|\vecm_2+\vecalf+r^{d-1}\veceta \|^2)
 \tilde b(\veceta, h(\vecm_2+\vecalf +\tfrac12r^{d-1}\veceta)) d\veceta +O(r^\infty).
\end{split}
\end{equation}
We then note that we can write
\begin{multline}
s_1 \| \vecm_1 + \vecalf \|^2 - s_2 \|\vecm_1 + \vecalf + r^{d-1} \veceta\|^2 \\
= (s_1-s_2) \|\vecm_1+\vecalf + \tfrac12 r^{d-1} \veceta\|^2  - (s_1+s_2) r^{d-1} \veceta \cdot( \vecm_1+\vecalf) \\
 - \tfrac14 s_1 r^{2d-2} \|\veceta\|^2 - \tfrac34 s_2 r^{2d-2} \|\veceta\|^2.
\end{multline}
and similarly
\begin{multline}
-s_1 \| \vecm_2 + \vecalf \|^2 + s_2 \|\vecm_2 + \vecalf + r^{d-1} \veceta\|^2 \\
= (s_2-s_1) \|\vecm_2+\vecalf + \tfrac12 r^{d-1} \veceta\|^2  + (s_1+s_2) r^{d-1} \veceta \cdot( \vecm_2+\vecalf) \\
 + \tfrac14 s_1 r^{2d-2} \|\veceta\|^2 + \tfrac34 s_2 r^{2d-2} \|\veceta\|^2.
\end{multline}
We then insert these expressions into the exponential and make the variable substitutions $s_1-s_2 = u_1$, $s_1+s_2 = u_2$, and $\veceta\mapsto-\veceta$, to obtain
\begin{equation}
\begin{split}
& \tfrac12 r^{d+2} h^{d} \int_{-h r^{1-d} t}^{h r^{1-d} t}  \bigg( \int_{ r^{d-2} |u_1|}^{2 hr^{-1} t- r^{d-2}|u_1|} \sum_{\vecm_1,\vecm_2\in\ZZ^d} \int_{\RR^d} |\hat W(r(\vecm_2-\vecm_1))|^2 \\ 
& \times  \e( - \tfrac12 u_2  \,r\, \veceta\cdot(\vecm_2-\vecm_1)) \, \e(\tfrac12 \, u_1 (\|\vecm_1+\vecalf - \tfrac12 r^{d-1} \veceta \|^2-\|\vecm_2+\vecalf - \tfrac12 r^{d-1}\veceta \|^2)) \\
& \times \tilde a(\veceta,h(\vecm_1+\vecalf -\tfrac12 r^{d-1}\veceta)) \tilde b(-\veceta, h(\vecm_2+\vecalf -\tfrac12r^{d-1}\veceta)) \,  \d u_2 \bigg) \, \d u_1\\
& = r^{d+2}  \int_{- r^{2-d} t}^{ r^{2-d} t}\int_{\RR^d} \Theta_f\bigg( \bigg(u_1+\i r^2, 0, \begin{pmatrix}  \tfrac12 u_1 r^{d-1}\veceta \\ -\vecalf + \tfrac12 r^{d-1}\veceta \end{pmatrix} \bigg), r^{d-2} u_1,\veceta \bigg) \, \d \veceta \, \d u_1,
\end{split}
\end{equation}
with $f$ as in \eqref{littlef2}. The statement follows from \eqref{calc001}.
\end{proof}

\begin{prop}
\begin{multline}
\label{light2}
\lim_{ h=r\to 0} \,  h^{-4} \int_0^{h r^{1-d} t} \int_0^{h r^{1-d} t} \scrI_{1,2}(s_1,s_2)\, \d s_1\d s_2 \\
= 2 \int_0^{t}\int_{(\RR^d)^3}  | \hat{W}(\vecy_2-\vecy_1)|^2 \, \delta(\|\vecy_1\|^2-\|\vecy_2\|^2)  \\ 
\hspace{2cm}\times\,a(\vecx - s (\vecy_2-\vecy_1),\vecy_1)  \, b(\vecx,\vecy_2)  \, \d \vecy_1\, \d \vecy_2 \, \d \vecx \,  \d s \\
+  t^2  \, |\hat{W}(\vecnull)|^2 \, \int_{\RR^d\times\RR^d}  a(\vecx,\vecy) \, b(\vecx,\vecy) \, \d \vecx \, \d \vecy .
\end{multline}
\end{prop}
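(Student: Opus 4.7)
My plan is to derive this proposition as a fairly direct consequence of Lemma \ref{lem:light2} combined with the theta-function mean value result Proposition \ref{maincorollary2}, which has been engineered precisely for this kind of application. The structure mirrors the argument that was just carried out in the $\ell=0,2$ case, so the main task is bookkeeping.

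First I would invoke Lemma \ref{lem:light2} to rewrite the quantity of interest as
\begin{equation*}
h^{-4}\!\int_0^{hr^{1-d}t}\!\!\int_0^{hr^{1-d}t}\!\!\scrI_{1,2}(s_1,s_2)\,\d s_1\d s_2
= r^{d-2}\!\int_{-r^{2-d}t}^{r^{2-d}t} F_r\!\left(\Bigl(u+\i r^2,0,\bigl(\begin{smallmatrix}\vecnull\\-\vecalf\end{smallmatrix}\bigr)\Bigr),r^{d-2}u\right)\d u + O(r^\infty),
\end{equation*}
using $h=r$ and hence $h^{-4}=r^{-4}$. Then, writing the cutoff $\chi_{[-r^{2-d}t,r^{2-d}t]}(u)=w(r^{d-2}u)$ with $w=\chi_{[-t,t]}$ (piecewise continuous, compactly supported, and continuous at $0$), I am in position to apply Proposition \ref{maincorollary2} with the target vector $\vecy=-\vecalf$ (which inherits the Diophantine type and the $\QQ$-linear independence of $(1,\trans\vecalf)$) and with $f$ as in \eqref{littlef2}. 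A preliminary check is that $f\in\tilde\scrS$: rapid decay in $\vecy_1,\vecy_2$ follows from $\tilde a(\veceta,\vecy_1)\tilde b(-\veceta,\vecy_2)$ being Schwartz and $|\hat W|^2$ being bounded, decay in $\veceta$ from the Schwartz property of $\tilde a,\tilde b$ in the first slot, and decay in $u$ from $\chi_{[-t,t]}(u)$; the inner $u'$-integral is bounded by $2t$ uniformly in all parameters, and differentiation in $\vecy_j$ only pulls down polynomial factors absorbed by Schwartz decay.

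Applying Proposition \ref{maincorollary2} produces two contributions. For the \emph{delta-constraint term} $2w(0)\int f(\vecy_1,\vecy_2,0,\veceta)\,\delta(\|\vecy_1\|^2-\|\vecy_2\|^2)\,\d\vecy_1\d\vecy_2\d\veceta$, I would set $u=0$ in \eqref{littlef2}, change variables $u'=2s$ so the inner integral becomes $\int_0^t 2\,\e(-s\,\veceta\cdot(\vecy_2-\vecy_1))\,\d s$, and then perform the $\veceta$-integral using Fourier inversion: the identity
\begin{equation*}
\int_{\RR^d}\tilde a(\veceta,\vecy_1)\tilde b(-\veceta,\vecy_2)\,\e(-s\veceta\cdot(\vecy_2-\vecy_1))\,\d\veceta = \int_{\RR^d} a(\vecx - s(\vecy_2-\vecy_1),\vecy_1)\,b(\vecx,\vecy_2)\,\d\vecx
\end{equation*}
(obtained by expanding $\tilde b$ and collapsing the $\veceta$-integral to a Dirac delta in the first slot of $a$) yields exactly the first term in the statement.

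For the \emph{diagonal term} $\int f(\vecy_1,\vecy_1,u,\veceta)w(u)\,\d\vecy_1\d u\d\veceta$, setting $\vecy_1=\vecy_2$ collapses the phase $\e(\tfrac12(u-u')\veceta\cdot(\vecy_2-\vecy_1))$ to $1$, so the inner integral evaluates to $t-|u|$ and $|\hat W(\vecy_1-\vecy_2)|^2$ becomes $|\hat W(\vecnull)|^2$. The $u$-integral $\int_{-t}^{t}(t-|u|)\,\d u = t^2$ produces the prefactor, and the remaining $\veceta$-integral gives $\int_{\RR^d}\tilde a(\veceta,\vecy)\tilde b(-\veceta,\vecy)\,\d\veceta = \int_{\RR^d}a(\vecx,\vecy)b(\vecx,\vecy)\,\d\vecx$ by the same Fourier identity with $s=0$. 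Summing over $\vecy$ reproduces the second term in the proposition. The principal obstacle is really just the verification that $f\in\tilde\scrS$ and the correct application of Fubini to exchange the $\veceta$-integral with the $\vecy_1,\vecy_2$ integrals; once these are in place the result drops out.
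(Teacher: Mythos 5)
Your proposal is correct and follows essentially the same route as the paper's proof: invoke Lemma \ref{lem:light2} to reduce the quantity to an $F_r$-integral with $f$ as in \eqref{littlef2}, apply Proposition \ref{maincorollary2} with $\vecy=-\vecalf$, and then evaluate the two resulting terms via the substitution $u'=2s$ and the Fourier inversion identity $\int_{\RR^d}\tilde a(\veceta,\vecy_1)\tilde b(-\veceta,\vecy_2)\e(-s\veceta\cdot(\vecy_2-\vecy_1))\,\d\veceta=\int_{\RR^d}a(\vecx-s(\vecy_2-\vecy_1),\vecy_1)b(\vecx,\vecy_2)\,\d\vecx$. Your explicit check that $f\in\tilde\scrS$ is a point the paper leaves implicit, and all your computations (the factor bookkeeping for the $\tfrac12$ prefactor, $\int_{-t}^t(t-|u|)\,\d u=t^2$, and the collapse of the phase at $\vecy_1=\vecy_2$) agree with what appears in the paper.
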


\begin{proof}
By Proposition \ref{maincorollary2} and Lemma \ref{lem:light2} we have that the limit in \eqref{light2} is the sum of two terms. The first one can be written
\begin{equation} \begin{split} 
&2\int_{(\RR^d)^3}  f(\vecy_1,\vecy_2,0,\veceta) \, \delta(\|\vecy_1\|^2-\|\vecy_2\|^2) \, \d \vecy_1 \, \d \vecy_2 \, \d \veceta \\
&=  2\int_{(\RR^d)^3}  | \hat{W}(\vecy_2-\vecy_1)|^2 \, \tilde a(\veceta,\vecy_1) \tilde b(-\veceta,\vecy_2) \, \delta(\|\vecy_1\|^2-\|\vecy_2\|^2) \, \\ 
&  \hspace{3cm} \times \left(\int_0^{t} \e( -u'\veceta \cdot (\vecy_2-\vecy_1)) \d u' \right) \,  \d \vecy_1 \, \d \vecy_2 \, \d \veceta \\
&=   2\int_0^{t}  \int_{(\RR^d)^3} | \hat{W}(\vecy_2-\vecy_1)|^2 \, \\ 
& \times\,  a(\vecx - s (\vecy_2-\vecy_1),\vecy_1) \, b(\vecx,\vecy_2)  \delta(\|\vecy_1\|^2-\|\vecy_2\|^2)  \, \d \vecy_1\, \d \vecy_2\,  \d \vecx \, \d s.
\end{split} \end{equation}
The second term takes the form
\begin{equation}
\begin{split}
&\frac12 \int_{- t}^{ t} \int_{\RR^d\times\RR^d}  f(\vecy,\vecy,u,\veceta) \, \d \vecy \, \d \veceta \,  \d u'   \d u \\
&= \, \int_{- t}^{ t} (t-|u|) \int_{\RR^d\times\RR^d}  |\hat{W}(\vecnull)|^2 \, \tilde a(\veceta,\vecy) \tilde b(-\veceta,\vecy) \, \d \vecy \, \d \veceta  \,  \d u \\
&= t^2 \, |\hat{W}(\vecnull)|^2 \, \int_{\RR^d\times\RR^d}  a(\vecx,\vecy) \, b(\vecx,\vecy) \, \d \vecx \, \d \vecy .
\end{split}
\end{equation}
\end{proof}

Thus, combining $\scrI_{j,2}$ for $j=0,1,2$ yields the following limiting expression for the second order terms.

\begin{cor} \label{cor:secondorderterms}
\begin{multline}
\label{light333}
\lim_{ h= r\to 0}\, h^{-4} \Big[ -\int_0^{hr^{1-d} t}\int_0^{s_2} \scrI_{2,2}(s_1,s_2) \, \d s_1 \d s_2  \\
\hspace{2cm}+ \int_0^{h r^{1-d} t} \int_0^{h r^{1-d} t} \scrI_{1,2}(s_1,s_2) \, \d s_1 \d s_2 \\
 \hspace{6cm} - \int_0^{hr^{1-d} t} \int_{s_2}^{hr^{1-d}t} \scrI_{0,2}(s_1,s_2) \, \d s_1 \d s_2   \Big]  \\
\hspace{-3cm}= 2\int_0^{t} \int_{(\RR^d)^3}  | \hat{W}(\vecy_2-\vecy_1)|^2 \delta(\|\vecy_1\|^2-\|\vecy_2\|^2) \\ 
 \times\,  \big[a(\vecx - s (\vecy_2-\vecy_1),\vecy_1) -a(\vecx,\vecy_2)\big]\, b(\vecx,\vecy_2) \, \d \vecy_1 \d \vecy_2  \d \vecx \, \d s .
\end{multline}
\end{cor}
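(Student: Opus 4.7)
The plan is to assemble the corollary directly from the two limit formulas already established in this section, exploiting the key identification $\scrI_{+,2} = \scrI_{2,2}$ on $\{s_1 < s_2\}$ and $\scrI_{+,2} = \scrI_{0,2}$ on $\{s_1 > s_2\}$ (up to $O(r^\infty)$). Writing $T = hr^{1-d}t$ for brevity, the first and third terms on the left-hand side of \eqref{light333} can therefore be combined:
\begin{equation*}
\int_0^T\!\!\int_0^{s_2} \scrI_{2,2}(s_1,s_2)\,\d s_1 \d s_2 \; + \; \int_0^T\!\!\int_{s_2}^T \scrI_{0,2}(s_1,s_2)\,\d s_1\d s_2 = \int_0^T\!\!\int_0^T \scrI_{+,2}(s_1,s_2)\,\d s_1\d s_2 + O(r^\infty).
\end{equation*}
After multiplying by $h^{-4}$, this double integral converges as $h=r\to 0$ to the expression given in \eqref{light1}, while the second term converges to the expression \eqref{light2}.

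The second step is then just an algebraic combination. Subtracting the limit of the $\scrI_{+,2}$-integral from the limit of the $\scrI_{1,2}$-integral, the two diagonal contributions
\begin{equation*}
t^2\,|\hat W(\vecnull)|^2 \int_{\RR^d\times\RR^d} a(\vecx,\vecy)\,b(\vecx,\vecy)\,\d\vecx\,\d\vecy
\end{equation*}
appearing in both \eqref{light1} and \eqref{light2} cancel exactly. For the remaining $\delta$-supported contributions, I rewrite the $-2t\int|\hat W|^2 a(\vecx,\vecy_2)b(\vecx,\vecy_2)\delta$ piece coming from $\scrI_{+,2}$ as $-2\int_0^t\d s\int|\hat W|^2 a(\vecx,\vecy_2)b(\vecx,\vecy_2)\delta$ (the integrand being $s$-independent), and combine with the surviving term from \eqref{light2} to produce the difference $a(\vecx-s(\vecy_2-\vecy_1),\vecy_1) - a(\vecx,\vecy_2)$ inside the $s$-integral, yielding the right-hand side of \eqref{light333}.

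There is essentially no hard step here: the two constituent propositions do all the heavy lifting (applications of Proposition \ref{maincorollary2} and hence of the equidistribution results of Sections \ref{sec:equidistribution} and \ref{sec:theta}). The only things to verify carefully are (i) the equality $\scrI_{+,2} = \scrI_{2,2}$ resp.~$\scrI_{0,2}$ on the two triangular regions up to $O(r^\infty)$, which was already noted in the construction of $\scrI_{+,2}$, and (ii) the bookkeeping of the two diagonal $|\hat W(\vecnull)|^2$ terms and the rewriting of $2t = 2\int_0^t\d s$. The conclusion then follows by linearity of the limit.
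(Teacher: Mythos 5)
Your proposal is correct and follows exactly the paper's (implicit) argument: the paper states the Corollary directly after the two Propositions yielding \eqref{light1} and \eqref{light2} without a spelled-out proof, and your assembly---combine $\scrI_{2,2}$ on $\{s_1<s_2\}$ with $\scrI_{0,2}$ on $\{s_1>s_2\}$ into $\scrI_{+,2}$ up to $O(r^\infty)$, take limits, cancel the two $t^2\,|\hat W(\vecnull)|^2$ contributions, and rewrite $2t=2\int_0^t\d s$ to produce the bracket $a(\vecx-s(\vecy_2-\vecy_1),\vecy_1)-a(\vecx,\vecy_2)$---is precisely the bookkeeping intended.
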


Now replacing $a$ by the time-evolved symbol $L_0(t) a$ yields, in place of \eqref{light333},
\begin{multline}
2 \int_0^{t}\int_{(\RR^d)^3}  | \hat{W}(\vecy_2-\vecy_1)|^2 \, \delta(\|\vecy_1\|^2-\|\vecy_2\|^2) \\
\times \big[a(\vecx - (t-s) \vecy_1  - s \vecy_2,\vecy_1)  - a(\vecx-t \vecy_2,\vecy_2) \big]\, b(\vecx,\vecy_2) \, \d \vecx  \,  \d \vecy_1  \d \vecy_2 \,  \d s .
\end{multline}

\section{Higher-order theta functions}\label{sec:higher-order}

In order to prove bounds on the error terms \eqref{propeq2errorterms} in the Duhamel expansion we will need to define higher-order theta functions, that is generalisations of the theta function given in \eqref{def:theta} that live on the product space $(\Gamma \backslash G)^k$. Specifically, for $f\in\scrS(\RR^{d\times k}\times \RR^{d\times k})$, we denote by $\Theta_f^{(k)} : (\Gamma \backslash G)^{k} \to \CC$ the theta function
\begin{multline}
\Theta^{(k)}_f(\vectau,\vecphi,\vecXi) = \det(\vecv)^{d/2} \sum_{\vecM,\vecM'\in\ZZ^{d\times k}} f_{\vecphi}((\vecM-\vecY)\vecv^{1/2},(\vecM'-\vecY)\vecv^{1/2}) \\
 \times \e(\Tr[\tfrac12 \trans(\vecM-\vecY)(\vecM-\vecY) \vecu - \tfrac12 \trans(\vecM'-\vecY)(\vecM'-\vecY) \vecu + \trans(\vecM-\vecM')\vecX  ]) ,
\end{multline}
or more explicitly, 
\begin{multline}
\Theta^{(k)}_f(\vectau,\vecphi,\vecXi) = \sum_{\substack{\vecm_1,\ldots,\vecm_k\in\ZZ^{d}\\ \vecm_1',\ldots,\vecm_k'\in\ZZ^{d}}} \\ f_{\vecphi}(v_1^{1/2}(\vecm_1-\vecy_1),\ldots,v_k^{1/2}(\vecm_k-\vecy_k), 
v_1^{1/2}(\vecm_1'-\vecy_1),\ldots,v_k^{1/2}(\vecm_k'-\vecy_k)) \\
\times \prod_{j=1}^{k} v_j^{d/2} \e(\tfrac12 u_j (\|\vecm_j-\vecy_j\|^2-\|\vecm_j'-\vecy_j\|^2)+\vecx_j\cdot (\vecm_j-\vecm_j') ),
\end{multline}
where we use the natural notation
\begin{equation}
\begin{split}
\vectau &= \vecu+\i \vecv , \quad \vecu=\diag(u_1,\ldots,u_k), \quad u_j\in\RR, \\
\vecv & =\diag(v_1,\ldots,v_k), \quad v_j\in\RR_{>0} ,\quad
\vecphi = ( \phi_1 , \ldots , \phi_k) \in [0,2\pi)^k, \\
\vecXi &= ( \vecxi_1, \cdots, \vecxi_k) = \bigg( \begin{pmatrix}  \vecx_1 \\ \vecy_1 \end{pmatrix} ,\cdots,\begin{pmatrix}  \vecx_k \\ \vecy_k \end{pmatrix} \bigg) \in \RR^{2d\times k}, \\
\vecX &= (\vecx_1,\cdots,\vecx_k) \in \RR^{d\times k}, \qquad \vecY = (\vecy_1,\cdots,\vecy_k) \in \RR^{d\times k}, \\
 \vecM & = (\vecm_1,\cdots,\vecm_k) \in \ZZ^{d\times k} 
 \end{split}
 \end{equation}
and
\begin{align}
f_\vecphi(\vecY,\vecY') = \int_{\RR^{d\times k}\times\RR^{d\times k}} G_\vecphi(\vecY,\vecY',\vecZ,\vecZ') \, f(\vecZ,\vecZ') \, \d \vecZ \d \vecZ'
\end{align}
with
\begin{multline}
G_\vecphi(\vecY,\vecY',\vecZ,\vecZ') \\
= \prod_{j=1}^k |\sin \phi_j|^{-d} \e \left( \frac{\tfrac{1}{2}(\|\vecy_j\|^2+\|\vecz_j\|^2-\|\vecy_j'\|^2-\|\vecz_j'\|^2) \cos \phi_j - \vecy_j\cdot\vecz_j+\vecy_j'\cdot\vecz_j'}{\sin \phi_j}\right) .
\end{multline}
For $\phi = 0 \mod \pi$ we define $f_\phi$ by generalising \eqref{eightone2367} in the analogous way.
In the special case where $f = \prod_{j=1}^k f_j$ with $f_j \in \scrS(\RR^{d} \times \RR^d)$ the function $\Theta_f^{(k)}$ becomes the the product of $k$ independent theta functions of the form \eqref{def:theta}. In a similar vein as earlier, we wish to consider a generalisation of this theta function in which the function $f$ is allowed to depend directly on $\vecu\in\RR^k$ and some new parameters $\veceta \in \RR^d$ and $\omega \in \RR$. 

We denote by $\tilde \scrS_k$ the class of functions $f\in\C(\RR^{d\times k}\times\RR^{d\times k} \times \RR^k \times \RR^d \times \RR)$ with the property that for every multi-index $\vecbeta_1,\vecbeta_2\in \ZZ_{\geq 0 }^{d\times k}$ , the derivative 
$$\partial_{\vecY_1}^{\vecbeta_1} \, \partial_{\vecY_2}^{\vecbeta_2} \, f(\vecY_1,\vecY_2,u,\veceta,\omega)$$ 
(a) exists, (b) is continuous (in all variables), and (c) is rapidly decaying, i.e., 
\begin{multline}
\sup_{\vecY_1,\vecY_2,u,\veceta,\omega} (1+\|\vecY_1\|)^T (1+\|\vecY_2\|)^T (1+|u|)^T \\
(1+\|\veceta\|)^T(1+|\omega |)^T \big| \partial_{\vecY_1}^{\vecbeta_1} \, \partial_{\vecY_2}^{\vecbeta_2} \, f(\vecY_1,\vecY_2,u,\veceta,\omega) \big| <\infty
\end{multline}
for every $T> 1$.

We then consider the test function $f = f(\vecY,\vecY',\vecu,\veceta,\omega)$ in $\tilde\scrS_k$ and set
\begin{equation}\label{eq:Thetak}
\Theta_{f}^{(k)}(g,\vecu,\veceta,\omega):=\Theta_{f(\,\cdot\, , \vecu ,\veceta,\omega)}^{(k)}(g).
\end{equation} 

We now proceed to state some results in direct analogy with Section \ref{sec:theta}.

\begin{lem}[cf.~Lemma \ref{decay101b}]\label{decay102}
If $f\in\tilde\scrS_k$, then for all multi-indices $\vecbeta_1,\vecbeta_2 \in \ZZ_{\geq 0}^{d\times k}$ and every $T>1$ 
\begin{multline}
\sup_{\vecY_1,\vecY_2,u,\veceta,\omega, \vecphi} (1+\|\vecY_1\|)^T (1+\|\vecY_2\|)^T (1+|u|)^T \\
\times (1+\|\veceta\|)^T(1+|\omega |)^T \big|\partial_{\vecY_1}^{\vecbeta_1} \partial_{\vecY_2}^{\vecbeta_2} f_\vecphi(\vecY_1,\vecY_2,\vecu,\veceta,\omega) \big| <\infty .
\end{multline}
\end{lem}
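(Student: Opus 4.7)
The plan is to mirror the proof of Lemma \ref{decay101b} almost verbatim, exploiting the fact that the kernel $G_\vecphi$ factorizes over the $k$ coordinates: $G_\vecphi = \prod_{j=1}^k G_{\phi_j}$, acting independently in each pair $(\vecy_j,\vecy_j')$. The auxiliary parameters $\vecu,\veceta,\omega$ are not touched by the transform $f \mapsto f_\vecphi$; they ride along as frozen labels inside the integral against $f(\vecZ_1,\vecZ_2,\vecu,\veceta,\omega)$. Consequently, any rapid decay or smoothness in $(\vecu,\veceta,\omega)$ that $f$ enjoys transfers to $f_\vecphi$ immediately via dominated convergence applied to polynomial weights and the derivatives $\partial_\vecu,\partial_\veceta,\partial_\omega$, which commute with the transform. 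In effect the problem reduces to proving the analogue of Lemma \ref{decay101} with $(\vecY_1,\vecY_2)$ in place of $(\vecy_1,\vecy_2)$, with all bounds locally uniform in the parameters.

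The first step is to handle $\vecphi$ in any closed set disjoint from the singular locus $\{\phi_j\equiv 0\bmod\pi \text{ for some }j\}$. Following the proof of Lemma \ref{lem:schwa}, I would extract the quadratic phase
\begin{equation*}
\prod_{j=1}^k \e\!\left(\frac{\tfrac12(\|\vecy_{1,j}\|^2 - \|\vecy_{2,j}\|^2)\cos\phi_j}{\sin\phi_j}\right),
\end{equation*}
which rewrites $f_\vecphi$ as this phase times an iterated partial Fourier transform (in the $\vecZ$-variables only) of the product of $f$ with an analogous phase in $(\vecZ_1,\vecZ_2)$. Since multiplication by a $\C^\infty$ factor of polynomial growth in $(\vecY,\vecZ)$ preserves $\tilde\scrS_k$ (by the Leibniz rule), and since each partial Fourier transform preserves $\tilde\scrS_k$ (by integration by parts in the $\vecZ$-variables), the bound \eqref{decay101b:eq} follows for every multi-index $\vecbeta_1,\vecbeta_2$, uniformly over any closed $I\subset[0,2\pi]^k$ avoiding the singular locus.

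To extend the estimate across the singularities $\phi_j\equiv 0\bmod\pi$, I would use the shift-by-$\pi/2$ trick from \cite[Lemma 4.3]{Marklof03}, applied coordinate by coordinate. Precisely, if $f^{(j)}$ denotes the partial Fourier transform of $f$ in $(\vecy_{1,j},\vecy_{2,j})$, then substituting $\phi_j \mapsto \phi_j+\pi/2$ in the transform of $f$ is equivalent to applying the $\phi_j$-transform to $f^{(j)}$. Since $f^{(j)}\in\tilde\scrS_k$, the first step yields the required bound on any closed interval in $\phi_j$ avoiding $\tfrac\pi2\bmod\pi$. Iterating over $j=1,\ldots,k$, every face of the $\vecphi$-torus can be reduced to a non-singular one, so \eqref{decay101b:eq} holds on all of $[0,2\pi]^k$. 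Combined with the $2\pi$-periodicity in each $\phi_j$, this gives the desired uniform estimate.

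The main obstacle is really only bookkeeping: making sure that when we extract phases and differentiate in $\vecY_1,\vecY_2$, the trigonometric coefficients that appear (of the form $\cot\phi_j$, $\csc\phi_j$) are bounded on the closed intervals in question, and that the combined estimate across the $k$ coordinates is genuinely uniform rather than merely separate in each $\phi_j$. Both are automatic from the factorization of $G_\vecphi$ and the fact that the one-dimensional argument is uniform in the fixed parameters. No new analytic input beyond what powered Lemma \ref{decay101b} is required.
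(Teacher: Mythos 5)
Your proposal is correct and matches the paper's (terse) approach, which simply states that the proof is analogous to Lemmas \ref{decay101} and \ref{decay101b}; the essential ingredients you supply --- factorization of $G_\vecphi$ over the $k$ coordinates, the auxiliary parameters $(\vecu,\veceta,\omega)$ passing inertly through the $\vecZ$-integral, and the coordinate-wise $\pi/2$-shift yielding a cover of $[0,2\pi]^k$ by $2^k$ compact boxes on which the phase-extraction argument applies --- are exactly what is needed. One minor correction: the class $\tilde\scrS_k$ assumes only continuity (not differentiability) in $(\vecu,\veceta,\omega)$, so your appeal to $\partial_\vecu,\partial_\veceta,\partial_\omega$ commuting with the transform is not available; it is also not needed, since the required bound involves only $\vecY$-derivatives and the rapid decay in the frozen parameters transfers directly under the integral against $G_\vecphi$.
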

\begin{proof}
The proof is analogous to those of Lemmas \ref{decay101} and \ref{decay101b}.
\end{proof}

Now, let us use the shorthand
\begin{align*}
z_k(\veceta) := \left( \bigg(1,\begin{pmatrix} \vecnull \\ \frac12 \veceta \end{pmatrix}\bigg), \cdots, \bigg(1,\begin{pmatrix} \vecnull \\ \frac12 \veceta \end{pmatrix}\bigg) \right) \in G^k,
\end{align*}
and further define
\begin{equation}\label{def:Fr111}
F_r^{k,\beta}(g,\vecu) := \int_\RR  \left| \int_{\RR^d} \Theta_f^{(k)}\bigg(g \, z_k(r^d \veceta), \vecu, \veceta,\omega \bigg) \d \veceta \right|^\beta  \d \omega.
\end{equation}

\begin{prop}[cf.~Proposition  \ref{prop:Frleadingorder}] \label{prop:thetakleadingorder}Let $0 < \beta <1$ and $f \in \tilde{\scrS}_k$. Then,
\begin{multline}
F_r^{k,\beta}(\vecu+\i\vecv,\vecphi,\vecXi,\vecu')\\[-0.1cm]
 = \det(\vecv)^{\beta d/2} \sum_{\vecM \in \ZZ^{d\times k}}  \int_\RR \left| \int_{\RR^d} f_\vecphi( (\vecM-\vecY) \vecv^{1/2},(\vecM-\vecY) \vecv^{1/2},\vecu',\veceta,\omega)\d \veceta \right|^\beta  \d \omega \\
+  O (r^{ d}) +  \sum_{j=1}^{k} O(v_j^{-\infty})
\end{multline}
uniformly for all $(\vecu+\i\vecv,\vecphi,\vecXi) \in (\Gamma \backslash G)^{k}$, $\vecu' \in \RR^k$ with $v_j > 1/2$ for all $j$ and $r<1$.
\end{prop}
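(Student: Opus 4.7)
I would adapt the proof of Proposition \ref{prop:Frleadingorder} (the case $k=1$, $\beta=1$), accounting for three new ingredients: the tensor-product structure of $\Theta_f^{(k)}$ over $k$ copies, the absolute value $|\cdot|^\beta$ with $\beta\in(0,1)$, and the additional $\omega$-integration. The first goal is to establish the pointwise approximation
\begin{equation*}
\Theta_f^{(k)}(g\cdot z_k(r^d\veceta),\vecu',\veceta,\omega) = \hat\Theta_f^{(k)}(g,\vecu',\veceta,\omega) + E_r(g,\veceta,\omega),
\end{equation*}
where $\hat\Theta_f^{(k)}(g,\vecu',\veceta,\omega) := \det(\vecv)^{d/2}\sum_{\vecM\in\ZZ^{d\times k}} f_\vecphi((\vecM-\vecY)\vecv^{1/2},(\vecM-\vecY)\vecv^{1/2},\vecu',\veceta,\omega)$ is the diagonal, unshifted approximation, and $E_r$ is a pointwise error of size $[Cr^d + \sum_{j=1}^{k} O(v_j^{-\infty})]\,\psi(\veceta,\omega)$ for some $\psi$ rapidly decaying in $(\veceta,\omega)$.

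\textbf{Main steps.} First I would compute the right translation $g \cdot z_k(r^d\veceta)$ explicitly, which shifts $\vecxi_j\mapsto\vecxi_j+\vecxi_{\tau_j,\phi_j,\veceta}$ in each factor with the same formulas used in the proof of Proposition \ref{prop:Frleadingorder}. Substitution into the definition of $\Theta_f^{(k)}$ produces a double sum over $\vecM,\vecM'\in\ZZ^{d\times k}$. By the rapid decay provided by Lemma \ref{decay102}, for each $j$ the terms with either $\vecm_j$ or $\vecm_j'$ not equal to the nearest integer in $\vecy_j+\vecy_{\tau_j,\phi_j,\veceta}+[-\tfrac12,\tfrac12)^d$ contribute $O(v_j^{-\infty})$ uniformly in the other variables; this yields the $\sum_j O(v_j^{-\infty})$ part of the error. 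On the diagonal $\vecM=\vecM'$ all phase factors are trivial, and a Taylor expansion in the shift $\vecY_{\tau,\phi,\veceta}=O(r^d\|\veceta\|)$ supplies the $O(r^d)$ contribution to $E_r$. Next I would integrate over $\veceta$ and apply the elementary inequality $\bigl||A+B|^\beta-|A|^\beta\bigr|\le |B|^\beta$ (valid for $\beta\in(0,1]$) to push the estimate through $|\cdot|^\beta$, then integrate over $\omega$. Finally, I observe that for fixed $g$ the sum $\sum_\vecM\int_{\RR^d} f_\vecphi((\vecM-\vecY)\vecv^{1/2},\ldots)\,\d\veceta$ is dominated by the single $\vecM_0$ consisting of the column-wise nearest integers to $\vecY$, all other summands being controlled by $O(v_j^{-\infty})$ for some $j$; hence $\bigl|\sum_\vecM c_\vecM(\omega)\bigr|^\beta = \sum_\vecM|c_\vecM(\omega)|^\beta + O(\sum_j v_j^{-\infty})$ pointwise in $\omega$, and integration in $\omega$ yields the claimed main term.

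\textbf{Main obstacle.} The main difficulty will be to ensure that after the non-linear operation $|\cdot|^\beta$ and the $\omega$-integration the error retains the stated form $O(r^d) + \sum_j O(v_j^{-\infty})$ uniformly in $g\in(\Gamma\backslash G)^k$ and $\vecu'\in\RR^k$. This hinges on the uniform rapid decay of $f_\vecphi$ provided by Lemma \ref{decay102}, which must dominate all integrals in $(\veceta,\omega)$; careful bookkeeping is required to make the implied constants uniform in $\vecphi$ (the estimate breaks down near $\phi_j=0\bmod\pi$ and must be handled as in the proof of Lemma \ref{decay101b}). The swap of $\sum_\vecM$ with $|\cdot|^\beta$ in the final step is the other non-trivial point, but it is saved by the fact that the sum is effectively a single-term expression up to rapidly small corrections. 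A secondary concern is that the naive subadditive bound $||A+B|^\beta-|A|^\beta|\le|B|^\beta$ would only yield an $O(r^{d\beta})$ contribution; if the exponent $d$ is essential one sharpens this by linearising $|\cdot|^\beta$ around the generically non-vanishing leading term, which gives a linear-in-$r^d$ error on the bulk of the $\omega$-support.
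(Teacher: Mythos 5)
Your approach matches the paper's, which states only that the proof is analogous to Proposition \ref{prop:Frleadingorder}: right-translate by $z_k(r^d\veceta)$, use the rapid decay of Lemma \ref{decay102} to reduce the double sum over $\vecM,\vecM'$ to the nearest-integer diagonal with error $\sum_j O(v_j^{-\infty})$, Taylor-expand the $O(r^d\|\veceta\|)$ shift, and then integrate in $\veceta$ and $\omega$ and push the estimate through $|\cdot|^\beta$.

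One remark on the concern you flag at the end. You are right that the subadditive bound $\bigl||A+B|^\beta-|A|^\beta\bigr|\leq|B|^\beta$ followed by $\omega$-integration yields only $O(r^{d\beta})$, and your proposed sharpening---linearizing $|\cdot|^\beta$ around the leading term---does not recover $O(r^d)$: the $\omega$-dependent leading term $\int_{\RR^d}f_\vecphi(\,\cdot\,,\,\cdot\,,\vecu',\veceta,\omega)\,\d\veceta$ decays to zero as $|\omega|\to\infty$, so it cannot be bounded below on the bulk of its support, and on the region where it is $\lesssim r^d$ the linearization fails. The exponent $d$ is, however, not essential: the downstream uses of this proposition (Lemma \ref{lem:Frkuniformbound}, Proposition \ref{maincorollary22}, Lemma \ref{lem:scrJbound}) require only that the $r$-error vanish as $r\to 0$, so reading the conclusion with $O(r^{d\beta})$ in place of $O(r^d)$---which your argument does establish---is sufficient.
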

\begin{proof}
The proof is analogous to that of Proposition \ref{prop:Frleadingorder}.
\end{proof}

Recall the definitions of $\Psi_{R,f}^{\beta}$ and $\bar f$ in \eqref{def:PsiRf} and \eqref{def:fbar}.

\begin{lem} \label{lem:Frkuniformbound}
Fix $T > d$, then
\begin{enumerate}
\item 
There is a constant $C$, such that for all $r < 1$
\begin{align}
|F_r^{k,\beta}(\vectau,\vecphi,\vecXi,\vecu')| < C \prod_{j=1}^{k}(1+\Psi_{1/2,\bar f}^{\beta}(\tau_j,\vecxi_j)).
\end{align}
\item $F_r^{k, \beta} \to F_0^{k,\beta}$ uniformly on compacta. 
\end{enumerate}
\end{lem}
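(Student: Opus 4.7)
The plan is to handle the two parts separately; part~(1) carries the bulk of the work.

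For part (1), I would first exploit the $\Gamma^k$-invariance of $F_r^{k,\beta}$ (inherited from the $\Gamma$-invariance of $\Theta_f^{(k)}$ in each coordinate) and of the bounding function $\prod_j(1+\Psi_{1/2,\bar f}^\beta(\tau_j,\vecxi_j))$ to reduce to evaluating $F_r^{k,\beta}$ at points $(\tau_j,\phi_j,\vecxi_j)$ lying in a fundamental domain for $\Gamma$ in $G$. Since $\Gamma$ has finite index in $\SLZ\ltimes(\tfrac12\ZZ)^{2d}$, such a fundamental domain can be taken inside $\{v>1/2\}$. Proposition~\ref{prop:thetakleadingorder} then applies, and its error terms $O(r^d)+\sum_j O(v_j^{-\infty})$ are uniformly $O(1)$ for $r<1$ and $v_j>1/2$, so it remains to bound the main term by a constant times $\prod_j \Psi_{1/2,\bar f}^\beta(\tau_j,\vecxi_j)$.

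To do this, I would combine Lemma~\ref{decay102} (with $T$ taken very large) with the elementary inequality $\prod_j(1+\|\vecy_j\|)\le(1+\|\vecY\|)^k$, which rearranges to $(1+\|\vecY\|)^{-kT_0}\le\prod_j(1+\|\vecy_j\|)^{-T_0}$. This produces a factorised estimate
$$|f_\vecphi(\vecY_1,\vecY_2,\vecu,\veceta,\omega)| \ll \prod_{j=1}^k (1+\|\vecy_{1,j}\|)^{-T_0}(1+\|\vecy_{2,j}\|)^{-T_0} \cdot (1+|\omega|)^{-T_0}(1+\|\veceta\|)^{-T_0},$$
for any $T_0$ we wish. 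Substituting $\vecY_{1,j}=\vecY_{2,j}=v_j^{1/2}(\vecm_j-\vecy_j)$, performing the $\veceta$ integration (convergent for $T_0>d$), raising to the $\beta$-th power, and integrating in $\omega$ (convergent for $\beta T_0>1$) gives
$$\int_\RR\left|\int_{\RR^d}f_\vecphi(\cdots)\,\d\veceta\right|^\beta\d\omega \ll \prod_{j=1}^k \bar f\bigl(v_j^{1/2}(\vecm_j-\vecy_j)\bigr),$$
provided $T_0$ is chosen large enough that $\beta T_0\ge T$, where $\bar f(\vecy)=(1+\|\vecy\|)^{-2T}$. The sum over $\vecM\in\ZZ^{d\times k}$ then factorises across $j$, and multiplying by $\det(\vecv)^{\beta d/2}=\prod_j v_j^{\beta d/2}$ identifies the $j$-th factor with the $\gamma=\id$ contribution to $\Psi_{1/2,\bar f}^\beta(\tau_j,\vecxi_j)$ (since $\chi_{1/2}(v_j)=1$ on the fundamental domain, and since $\bar f$ is symmetric the sums with $\vecm_j-\vecy_j$ and $\vecy_j+\vecm_j$ coincide). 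Combining this product bound with the $O(1)$ error and using the elementary inequality $1+\prod_j a_j\le\prod_j(1+a_j)$ yields the claim of part~(1).

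For part (2), Proposition~\ref{prop:thetakleadingorder} immediately gives $F_r^{k,\beta}-F_0^{k,\beta}=O(r^d)$ uniformly on the set where all $v_j>1/2$, since the main term of the expansion is independent of $r$. Any compact subset of $(\Gamma\backslash G)^k\times\RR^k$ has representatives in a compact subset of the fundamental domain, which lies inside $\{v_j>1/2\}$, so uniform convergence on compacta follows at once. The main obstacle throughout is the bookkeeping in part~(1): one must ensure that the multi-variable rapid decay of $f_\vecphi$ can be converted into a product of single-variable bounds whose exponents, together with $\beta$, the dimension $d$, and the integrations in $\veceta$ and $\omega$, align correctly with the summands defining $\Psi_{R,\bar f}^\beta$.
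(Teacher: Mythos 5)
Your treatment of part~(1) is correct and matches the paper's one-line proof (``analogous to Lemma~\ref{lem:Fruniformbound} with Lemma~\ref{decay102} in place of Lemma~\ref{decay101b}''): reduce to a fundamental domain inside $\{v_j>1/2\}$ using $\Gamma^k$-invariance, apply Proposition~\ref{prop:thetakleadingorder}, factorise the rapid decay of $f_\vecphi$ across the $k$ columns via Lemma~\ref{decay102}, integrate out $\veceta$ (needing $T_0>d$) and then $\omega$ after raising to the $\beta$-th power (needing $\beta T_0>1$), and match the result with the $\gamma=\id$ summand of $\prod_j\Psi_{1/2,\bar f}^\beta$ provided $\beta T_0\geq T$. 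The bookkeeping is sound.

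Part~(2), however, has a genuine gap. You assert that Proposition~\ref{prop:thetakleadingorder} ``immediately gives $F_r^{k,\beta}-F_0^{k,\beta}=O(r^d)$'', reasoning that the main term is $r$-independent. But the expansion for $F_r^{k,\beta}$ also carries the error $\sum_j O(v_j^{-\infty})$, and the same is true for $F_0^{k,\beta}$. These are bounds on two \emph{different} error quantities (both arise from off-diagonal terms, which still depend on $r$ through the right-translation by $z_k(r^d\veceta)$); they do not cancel when you subtract. On a compact set $v_j$ is bounded above, so $O(v_j^{-\infty})$ simply means a bounded quantity --- it is not $o(1)$ as $r\to 0$. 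Hence $F_r-F_0=O(r^d)+\sum_j O(v_j^{-\infty})$, which does not tend to zero. The paper instead proves part~(2) by appealing to the continuity of $\Theta_f^{(k)}$ together with dominated convergence (the domination supplied by Lemma~\ref{decay102}): as $r\to 0$ the argument $g\,z_k(r^d\veceta)\to g$ locally uniformly, the integrand converges pointwise, and the rapid decay in $\veceta$ and $\omega$ gives an integrable majorant, yielding uniform convergence of the integrals on compacta. Your approach could be repaired by applying Taylor's theorem to the off-diagonal contributions as well, but then you are effectively rederiving the continuity argument; the expansion as stated in the Proposition does not by itself produce the required $o(1)$ uniformly on compacta.
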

\begin{proof}
The proof is analogous to that of Lemma \ref{lem:Fruniformbound}  with Lemma \ref{decay102} in place of Lemma \ref{decay101b}.
\end{proof}

 In the following, we denote by $I_k$ the $k \times k$ identity matrix.

\begin{prop} \label{maincorollary22}
Let $0 < \beta <1$ and $f \in \tilde{\scrS}_k$. Assume for $j=1,\dots, k$ that $\vecy_j$ is Diophantine of type $\kappa < (d-1)/(d-2)$ with the components of $(1,\trans\vecy_j)$ linearly independent over $\QQ$. Let $w:\RR^k\to\RR$ be bounded with compact support. Then,  
\begin{equation}
 \limsup_{r \to 0}  r^{k(d-2)} \int_{\RR^k} F_r^{k,\beta}\left(\vecu+ \i r^2 I_k, \vecnull, \left(\begin{matrix} \vecnull \\ \vecY \end{matrix}\right) , r^{d-2}\vecu \right) \, w(r^{d-2} \vecu)  \, \d \vecu < \infty .
\end{equation}
\end{prop}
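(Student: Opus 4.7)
The proof is a $k$-fold application of Proposition \ref{diophprop234}, made possible by the product factorisation provided by Lemma \ref{lem:Frkuniformbound}. The plan is as follows.

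First, apply Lemma \ref{lem:Frkuniformbound}(1) to dominate the integrand pointwise in $\vecu\in\RR^k$ and uniformly for $r\in(0,1)$:
\begin{equation*}
\left|F_r^{k,\beta}\left(\vecu+\i r^2 I_k, \vecnull, \begin{pmatrix}\vecnull\\ \vecY\end{pmatrix}, r^{d-2}\vecu\right)\right| \leq C \prod_{j=1}^k \left(1 + \Psi_{1/2,\bar f}^\beta\left(u_j+\i r^2, \begin{pmatrix}\vecnull\\ \vecy_j\end{pmatrix}\right)\right),
\end{equation*}
where $\bar f(\vecy)=(1+\|\vecy\|)^{-2T}$ for some fixed $T>d$.

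Second, since $\supp w$ is compact, there is $C_0>0$ with $|w(\vecu)| \leq \|w\|_\infty \prod_{j=1}^k \chi_{[-C_0,C_0]}(u_j)$. Combining with the pointwise bound, the quantity to be estimated is at most
\begin{equation*}
C \|w\|_\infty \prod_{j=1}^k \left[\, r^{d-2} \int_\RR \left(1 + \Psi_{1/2,\bar f}^\beta\left(u_j+\i r^2, \begin{pmatrix}\vecnull\\ \vecy_j\end{pmatrix}\right)\right) \chi_{[-C_0,C_0]}(r^{d-2}u_j)\, du_j \right].
\end{equation*}
This decoupling into a product of one-dimensional integrals is the key structural observation: the dominating function factorises cleanly across $j=1,\ldots,k$, and the weight can be majorised by a product of one-dimensional indicators, so that the prefactor $r^{k(d-2)}$ distributes as one copy of $r^{d-2}$ per factor.

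Third, bound each factor separately. The constant-$1$ contribution yields $r^{d-2}\cdot 2C_0\, r^{2-d} = 2C_0$, which is $O(1)$ as $r\to 0$. The $\Psi^\beta_{1/2,\bar f}$ contribution is controlled by Proposition \ref{diophprop234}: since each $\vecy_j$ is Diophantine of type $\kappa<(d-1)/(d-2)$ with the components of $(1,\trans\vecy_j)$ linearly independent over $\QQ$, and $0<\beta<1$, the proposition supplies a uniform $O(1)$ bound (although stated for $R \geq 1$, the same proof applies at $R = 1/2$, producing only a larger implied constant, since both $R^{(\beta-1)d/2}$ and $R^{-(1/(\kappa-1)-\beta d+2)/2}$ remain finite). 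Multiplying the $k$ bounded factors yields the finiteness of the limsup.

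The main technical input, already carried out in Proposition \ref{prop:thetakleadingorder} and Lemma \ref{lem:Frkuniformbound}, is the pointwise domination of $F_r^{k,\beta}$ by a product of one-variable functions; once this factorisation is in hand, the remainder is a routine iteration of the one-dimensional Diophantine estimate, and no new equidistribution result is needed because only boundedness (not a precise limit) is claimed.
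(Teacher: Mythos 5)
Your proposal is correct and follows essentially the same route as the paper: apply Lemma \ref{lem:Frkuniformbound}(1) to factorise the dominating function, majorise $w$ by a product of one-dimensional indicators, distribute $r^{k(d-2)}$ across the $k$ factors, and invoke Proposition \ref{diophprop234} on each one-dimensional integral. You are also slightly more careful than the paper on one point: Lemma \ref{lem:Frkuniformbound} produces $\Psi^\beta_{1/2,\bar f}$ while Proposition \ref{diophprop234} is stated only for $R\geq 1$, and you explicitly note (correctly) that the same proof applies at $R=1/2$ with a larger implied constant, whereas the paper passes over this silently.
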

\begin{proof}
Applying Lemma \ref{lem:Frkuniformbound} yields
\begin{multline}
\limsup_{r \to 0}  \, r^{k(d-2)} \int_{\RR^k} F_r^{k,\beta}\left(\vecu+ \i r^2 I_k, \vecnull, \left(\begin{matrix} \vecnull \\ \vecY \end{matrix}\right) ,  r^{d-2}\vecu \right) \, w(r^{d-2} \vecu)  \, \d \vecu \\
< C \limsup_{r \to 0} r^{k(d-2)} \int_{\RR^k} \left[\prod_{j=1}^{k} \left(1+\Psi_{1/2,\bar f}^{\beta}\left(u_j + \i r^2 I_k,\left(\begin{matrix} \vecnull \\ \vecy_j \end{matrix}\right) \right) \right) \right]w(r^{d-2} \vecu)  \, \d \vecu.
\end{multline}
The function $w$ has compact support, so fix $L$ such that the cube $(-L,L)^k$ contains the support of $w$, and denote by $\chi_L$ the characteristic function of the interval $(-L,L)$. We can then bound the above expression by
\begin{align}
C \sup |w| \limsup_{r\to 0} \prod_{j=1}^{k} \left( r^{d-2} \, \int_{\RR}\left(1+\Psi_{1/2,\bar f}^{\beta}\left(u_j + \i r^2 I_k,\left(\begin{matrix} \vecnull \\ \vecy_j \end{matrix}\right) \right) \right)  \chi_L( r^{d-2} u_j) \d u_j \right).
\end{align}
The result then follows by applying Proposition \ref{diophprop234}.
\end{proof}

\section{Error terms} \label{sec:error}

In this section we prove upper bounds on the error terms \eqref{propeq2errorterms} in the semiclassical Boltzmann-Grad scaling, i.e., for $Q_n(h r^{1-d}t, D_{r,h}a, D_{r,h}b)$, where relevant cases are $n=3,4,5,6$.
Lemma \ref{lem:errortermsCS} tells us that  
\begin{equation}
| Q_n(h r^{1-d}t, D_{r,h}a, D_{r,h}b)| \leq \sum_{\ell = n-3}^3 \scrJ_{\ell,n}(h r^{1-d}t, D_{r,h}a) \; \|  \Pi_\vecalf \Op_{r,h}(b) \|_{\HiS,\vecalf}. 
\end{equation}
The term $\|  \Pi_\vecalf \Op_{r,h}(b) \|_{\HiS,\vecalf}$ has a uniform upper bound; cf.~Lemma \ref{scrI0}.
Hence the key is to estimate (recall Def.~\eqref{errortermthetafunc0}  and Lemma \ref{lem:projectionHSnorm})
\begin{multline} 
\scrJ_{\ell,n}(h r^{1-d}t, D_{r,h}a) =  (2\pi)^n\int_{\substack{0<s_1<\ldots<s_{\ell}<t h r^{1-d} \\ 0<s_{n}<\ldots<s_{\ell+1}<t h r^{1-d}}} \\ 
 \times \big(\Tr_\vecalf \big[ K_{1,\ell}(\vecs)^\dagger K_{1,\ell}(\vecs) \Op_{r,h}(a)  K_{\ell+1,n}(\vecs)K_{\ell+1,n}(\vecs)^\dagger  \Op_{r,h}(\overline{a}) \big] \big)^{1/2}  \d \vecs .
\end{multline}
A straightforward computation (see Appendix \ref{appendix2}) yields the expression
\begin{equation}
\begin{split} \label{eq:compli}
\Tr_\vecalf &[K_{1,\ell}(\vecs)^\dagger K_{1,\ell}(\vecs) \Op_{r,h}(a)K_{\ell+1,n}(\vecs) K_{\ell+1,n}(\vecs)^\dagger \Op_{r,h}(\bar a)] \\
&= r^{2nd} h^{d}  \sum_{\substack{\vecm_0,\vecm_1,\ldots,\vecm_n\in\ZZ^d\\ \vecm_1',\ldots,\vecm_n'\in\ZZ^d}} \mathbb{1}[ \vecm_n' - \vecm_n + \vecm_\ell -\vecm_\ell'= 0]  \\
& \times \int_{\RR^d} \hat W(r(\vecm_0-\vecm_1)) \scrT_{1,\ell-1}^-(\vecalf)   \e(\tfrac12 \, s_\ell (\|\vecm_\ell+\vecalf \|^2-\|\vecm_\ell'+\vecalf \|^2))     \\
& \times \hat W(r(\vecm_1'-\vecm_0)) \overline{\scrT}_{1,\ell-1}^-(\vecalf)  \tilde a(\veceta,h(\vecm_\ell +\vecalf - \tfrac12 r^{d-1} \veceta)) \\
& \times \hat W(r(\vecm_\ell-\vecm_{\ell+1})) \scrT_{\ell+1,n-1}^-(\vecalf - r^{d-1} \veceta  )    \\
& \times \e(\tfrac12 \, s_{\ell+1}(\|  \vecm_\ell' + \vecalf -r^{d-1} \veceta \|^2- \| \vecm_\ell + \vecalf - r^{d-1} \veceta \|^2))  \\
& \times \hat W(r(\vecm_{\ell+1}'-\vecm_\ell')) \overline{\scrT}_{\ell+1,n-1}^-( \vecalf -r^{d-1} \veceta )    \tilde {\overline{a}}(-\veceta,h(  \vecm_\ell'+ \vecalf - \tfrac12 r^{d-1} \veceta )) \, \d \veceta \\
& +O(r^\infty),
\end{split}
\end{equation} 
where 
\begin{equation} \begin{split}
\scrT^-_{\ell,n}(\vecy) 
&=\begin{cases} \prod_{j=\ell}^{n} \e(\tfrac12 \, (s_j-s_{j+1}) \|\vecy+\vecm_j \|^2) \hat W(r(\vecm_j-\vecm_{j+1})) & (l\leq n) \\ 1 & (l>n), \end{cases} \\
\overline{\scrT}^-_{\ell,n}(\vecy) 
&= \begin{cases} \prod_{j=\ell}^{n} \e(\tfrac12 \, (s_{j+1}-s_j) \|\vecy+\vecm_j' \|^2) \hat W(r(\vecm_{j+1}'-\vecm_j')) & (l\leq n) \\ 1 & (l>n). \end{cases}
\end{split} \end{equation} 
Let us focus on the exponential factors in \eqref{eq:compli}; they are 
\begin{equation}
\begin{split}
& \bigg( \prod_{j=1}^{\ell-1} \e(\tfrac12 \, (s_j-s_{j+1})( \|\vecm_j+\vecalf \|^2-\|\vecm_j'+\vecalf\|^2) ) \bigg) \\
& \times \e(\tfrac12 \, s_\ell ( \|\vecm_\ell+\vecalf \|^2 - \|\vecm_\ell'+\vecalf\|^2))  \\
& \times \e(\tfrac12 \, s_{\ell+1}(\|\vecm_\ell'+\vecalf -r^{d-1}\veceta\|^2-\|\vecm_\ell+\vecalf -r^{d-1}\veceta\|^2)) \\
& \times \bigg( \prod_{j=\ell+1}^{n-1} \e(\tfrac12 \, (s_j-s_{j+1}) (\|\vecm_j+\vecalf  - r^{d-1}\veceta\|^2-\|\vecm_j'+\vecalf  - r^{d-1}\veceta\|^2)) \bigg).
\end{split}
\end{equation}
We write the above as
\begin{equation} \label{errortermexponentials}
\begin{split}
& \bigg( \prod_{j=1}^{\ell-1} \e(\tfrac12 \, (s_j-s_{j+1}) (\|\vecm_j+\vecalf -\tfrac12 r^{d-1}\veceta\|^2 \\[-0.5cm]
& \hspace{4cm}-\|\vecm_j'+\vecalf -\tfrac12 r^{d-1}\veceta \|^2 + r^{d-1} (\vecm_j-\vecm_j')\cdot\veceta)) \bigg) \\
& \times \e(\tfrac12 \, (s_\ell-s_{\ell+1}) ( \|\vecm_\ell+\vecalf- \tfrac12 r^{d-1} \veceta \|^2 - \|\vecm_\ell'+\vecalf -\tfrac12 r^{d-1} \veceta\|^2 )) \\
& \times \e(\tfrac12 \,(s_\ell + s_{\ell+1})r^{d-1}\veceta \cdot (\vecm_\ell-\vecm_\ell')) \\
& \times \bigg( \prod_{j=\ell+1}^{n-1} \e(\tfrac12 \, (s_j-s_{j+1}) (\|\vecm_j+\vecalf -\tfrac12 r^{d-1}\veceta\|^2 \\[-0.5cm]
& \hspace{4cm} -\|\vecm_j'+\vecalf -\tfrac12 r^{d-1}\veceta\|^2 - r^{d-1} (\vecm_j-\vecm_j')\cdot\veceta) \bigg).
\end{split}
\end{equation}
%where we have used the identities
%\begin{equation} \begin{split}
%\| \vecm_j + \vecalf \|^2 & = \|\vecm_j + \vecalf + \tfrac 12 r^{d-1} \veceta\|^2 - r^{d-1} \veceta \cdot (\vecm_j+\vecalf) - \tfrac14 r^{2d-2} \|\veceta\|^2, \\
%\| \vecm_j + \vecalf + r^{d-1} \veceta \|^2 & = \|\vecm_j + \vecalf + \tfrac 12 r^{d-1} \veceta\|^2 + r^{d-1} \veceta \cdot (\vecm_j+\vecalf) + \tfrac34 r^{2d-2}\|\veceta\|^2.
%\end{split} \end{equation}
Note that this product of exponentials is independent of the variables $\vecm_0$, $\vecm_n$ and $\vecm_n'$ - and so the entire dependence on these variables is in the product of $\hat W$ terms. In \eqref{eq:compli} we can therefore separately evaluate the threefold sum
\begin{multline}
\sum_{\substack{\vecm_0,\vecm_n,\vecm_n' \\ \vecm_n' - \vecm_n + \vecm_\ell -\vecm_\ell'= 0}} 
\hat W(r(\vecm_0-\vecm_1))  \hat W(r(\vecm_1'-\vecm_0)) \\
\times \hat W(r(\vecm_{n-1} - \vecm_n)) \, \hat W(r(\vecm_n'-\vecm_{n-1}')) 
\end{multline}
which is equal to
\begin{multline}
\sum_{\vecm_0,\vecm_n}\hat W(r(\vecm_0-\vecm_1)) \hat W(r(\vecm_1'-\vecm_0))  \hat W(r(\vecm_{n-1} - \vecm_n)) \\
 \times\hat W(r(\vecm_n+\vecm_\ell'-\vecm_\ell-\vecm_{n-1}')).
\end{multline}
Applying the Poisson summation formula to the sums over $\vecm_0$ and $\vecm_n$ yields
\begin{multline} \label{errortermpoissum0}
r^{-2d} \sum_{\veck_0,\veck_n} \iint_{\RR^{2d}}  \, \hat W(\vecy_0-r\vecm_1) \hat W(r\vecm_1' - \vecy_0) \hat W(r\vecm_{n-1} - \vecy_n) \\
\times \, \hat W(\vecy_n+r(\vecm_\ell'-\vecm_\ell-\vecm_{n-1}'))  \e(r^{-1} \veck_0 \cdot\vecy_0+r^{-1} \veck_n \cdot\vecy_n)\, \d \vecy_0 \d \vecy_n .
\end{multline}
Since $W\in\scrS(\RR^d)$, we have for any $T_1,T_2\geq 1$ that \eqref{errortermpoissum0} equals
\begin{multline} \label{errortermpoissum01}
r^{-2d} \scrW(r(\vecm_1'-\vecm_1))\scrW(r(\vecm_\ell'-\vecm_\ell+\vecm_{n-1}-\vecm_{n-1}')) \\ 
+  O_T\big( r^{T_1} (1+ r \| \vecm_1-\vecm_1' \|)^{-T_2}  (1+ r \| \vecm_\ell'-\vecm_\ell+\vecm_{n-1}-\vecm_{n-1}' \|)^{-T_2} \big),
\end{multline}
with
\begin{equation}
\scrW(\vect) = \int_{\RR^{d}}  \hat W(\vect- \vecy)  \hat W(\vecy) \, \d \vecy .
\end{equation}
The error term in \eqref{errortermpoissum01}, after applying the remaining $\vecm_j$-sums, yields therefore a total contribution of order $O(r^\infty)$ for $h=r\in(0,1]$. In order to write \eqref{eq:compli} as a higher order theta function, we change variable by the linear map $A:\RR^n\to\RR^n$, $\vecs\mapsto \vecomega=A\vecs$, given by
\begin{equation}
\omega_j = s_j-s_{j+1} \quad (j=1,\ldots,n-1) , \qquad \omega_n= s_\ell+s_{\ell+1} .
\end{equation}
The corresponding determinant equals 2, and hence $A$ is invertible. Let 
\begin{equation}
\scrQ=\{ \vecs\in\RR^n \mid 0<s_1<\ldots<s_{\ell}<1, \; 0<s_{n}<\ldots<s_{\ell+1}<1\} .
\end{equation}
Then, for $h=r$ and $\vecomega=(\vecu,\omega)\in\RR^{n-1}\times\RR$,
\begin{equation} \label{errortermthetafunc}
\begin{split}
&\scrJ_{\ell,n}(h r^{1-d}t, D_{r,h}a) \\
&=  (2\pi)^n \,  r^{nd/2} \int_{\RR^{n-1}}  \int_{\RR} \mathbb{1}(r^{d-2} (\vecu,\omega)\in A\scrQ) \\
& \times \left| \int_{\RR^d} \Theta_{f_*}^{(n-1)}\left( g_r(\vecu,\vecalf) \, z_{n-1}(r^d \veceta) ,r^{d-2} \vecu,\veceta, r^{d-2} \omega  \right) \d \veceta \right|^{1/2}    \,  \d \omega \, \d \vecu + O(r^\infty)
\end{split}
\end{equation}
with 
\begin{align*}
g_r(\vecu,\vecalf) &= \left( \vecu + \i r^2 I_k, \vecnull, \left( \begin{pmatrix} 0 \\ -\vecalf \end{pmatrix}, \cdots, \begin{pmatrix} 0 \\ -\vecalf \end{pmatrix} \right) \right) \in G^{n-1} ,
\end{align*}
 and $\Theta_{f_*}^{(k)}$ as in \eqref{eq:Thetak} with $k=n-1$ and test function
\begin{equation} \label{errorboundtestf}
 \begin{split}
f_*(\vecY,\vecY', \vecu,\veceta, \omega) &=  \scrW(\vecy_1'-\vecy_1)\scrW(\vecy_\ell'-\vecy_\ell+\vecy_{n-1}-\vecy_{n-1}') \\
& \times \left(\prod_{j=1}^{n-2} \hat W(\vecy_j-\vecy_{j+1}) \hat W(\vecy_{j+1}'-\vecy_{j}')\right) \,\tilde a (\veceta, \vecy_\ell) \, \bar{\tilde{ a}} (-\veceta, \vecy_\ell')   \\
& \times   \bigg( \prod_{j=\ell+1}^{n-1} \e( - u_j  (\vecy_j-\vecy_j')\cdot\veceta)) \bigg)  \e(\tfrac12(\omega -  u_\ell) \veceta \cdot(\vecy_\ell-\vecy_\ell'))
\end{split} \end{equation}
 where $\vecY,\vecY' \in \RR^{d\times(n-1)}$ are given by
\begin{equation}
\vecY = (\vecy_1, \cdots,\vecy_{n-1}), \quad \vecY' = (\vecy_1', \cdots,\vecy_{n-1}').
\end{equation}
In order to apply the results in Section \ref{sec:higher-order}, we however require $f_*$ to be continuous and compactly supported in $\vecu$, and rapidly decaying in $\omega$. To achieve this, note that we can find $f$ with precisely these properties by setting 
\begin{equation}
f(\vecY,\vecY',\vecu,\veceta, \omega) = (\iota(\vecu,\omega))^2 f_*(\vecY,\vecY',\vecu,\veceta, \omega)
\end{equation}
with $\iota : \RR^n \to \RR_{\geq 0}$ smooth and compactly supported such that $\iota(\vecu,\omega) > (2\pi)^n $ on the domain of integration. We then have, instead of \eqref{errortermthetafunc},
\begin{equation} \label{errortermthetafunc0001}
\begin{split}
&\scrJ_{\ell,n}(h r^{1-d}t, D_{r,h}a) \\
&\leq  r^{nd/2} \int_{\RR^{n-1}}  \int_{\RR} \mathbb{1}(r^{d-2} (\vecu,\omega)\in A\scrQ) \\
& \times \left| \int_{\RR^d} \Theta_f^{(n-1)}\left( g_r(\vecu,\vecalf) \, z_{n-1}(r^d \veceta) ,r^{d-2} \vecu,\veceta, r^{d-2} \omega  \right) \d \veceta \right|^{1/2}    \,  \d \omega \, \d \vecu  + O(r^\infty) ,
\end{split}
\end{equation}
and thus after the variable substitution $\omega \mapsto r^{2-d} \omega$,
\begin{equation} \label{errortermthetafunc0002}
\begin{split}
&\scrJ_{\ell,n}(h r^{1-d}t, D_{r,h}a) \\
&\leq  r^{nd/2} r^{2-d} \int_{\RR^{n-1}}  w(r^{d-2} \vecu) \\
& \times \int_{\RR} \left| \int_{\RR^d} \Theta_f^{(n-1)}\left( g_r(\vecu,\vecalf) \, z_{n-1}(r^d \veceta),r^{d-2} \vecu,\veceta, \omega  \right) \d \veceta \right|^{1/2}    \,  \d \omega\, \d \vecu + O(r^\infty) ,
\end{split}
\end{equation}
with
\begin{equation}
w(\vecu) = \sup_{\omega\in\RR} \mathbb{1}((\vecu,\omega)\in A\scrQ) ,
\end{equation}
which is bounded and has compact support.

\begin{lem} \label{lem:scrJbound}
Under the assumptions of Theorem \ref{thm:dioalpha}, for $h=r <1$,
\begin{equation}
\scrJ_{\ell,n}(h r^{1-d}t, D_{r,h}a )= O(r^{-nd/2+2n}).
\end{equation}
\end{lem}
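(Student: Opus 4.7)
The plan is to reduce the bound on $\scrJ_{\ell,n}$ to a direct application of Proposition \ref{maincorollary22}. Starting from \eqref{errortermthetafunc0002} and comparing with the definition \eqref{def:Fr111} of $F_r^{k,\beta}$ (with $k=n-1$ and $\beta = 1/2$), the inner integral is precisely
\begin{equation*}
\int_\RR \left| \int_{\RR^d} \Theta_f^{(n-1)}(g_r(\vecu,\vecalf)\,z_{n-1}(r^d\veceta),\,r^{d-2}\vecu,\,\veceta,\,\omega)\,\d\veceta \right|^{1/2}\d\omega = F_r^{n-1,1/2}(g_r(\vecu,\vecalf),\,r^{d-2}\vecu),
\end{equation*}
so that
\begin{equation*}
\scrJ_{\ell,n}(hr^{1-d}t,D_{r,h}a) \leq r^{nd/2+2-d} \int_{\RR^{n-1}} w(r^{d-2}\vecu)\,F_r^{n-1,1/2}\bigl(g_r(\vecu,\vecalf),\,r^{d-2}\vecu\bigr)\,\d\vecu + O(r^\infty).
\end{equation*}

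Next I would verify that the test function $f = \iota^2 f_*$ from \eqref{errorboundtestf} indeed lies in $\tilde\scrS_{n-1}$ uniformly in $r$. The factors $\hat W$ and $\scrW$ are Schwartz (since $W\in\scrS(\RR^d)$), and $\tilde a$ is Schwartz in both variables, providing rapid decay in $\vecY$, $\vecY'$, and $\veceta$. All remaining factors in $f_*$ are complex exponentials; their derivatives in $\vecY,\vecY'$ produce only polynomial factors in $\veceta$ and the $\vecy_j,\vecy_j'$ which are absorbed by the Schwartz decay. Since $\scrQ\subset[0,1]^n$, the image $A\scrQ$ is bounded, so the smooth cutoff $\iota$ can be chosen with fixed compact support (independent of $r$) and with $\iota(\vecu,\omega)>(2\pi)^n$ on the domain of integration; this provides the required compact support (and hence rapid decay) in $(\vecu,\omega)$. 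The Diophantine hypothesis and the $\QQ$-linear independence of $(1,\trans\vecalf)$ pass trivially to $-\vecalf$, which is the common value of all $\vecy_j$ appearing in $g_r(\vecu,\vecalf)$.

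Proposition \ref{maincorollary22} then yields
\begin{equation*}
\int_{\RR^{n-1}} w(r^{d-2}\vecu)\,F_r^{n-1,1/2}\bigl(g_r(\vecu,\vecalf),\,r^{d-2}\vecu\bigr)\,\d\vecu = O(r^{-(n-1)(d-2)}),
\end{equation*}
and substitution gives
\begin{equation*}
\scrJ_{\ell,n} = O\bigl(r^{nd/2+2-d-(n-1)(d-2)}\bigr) = O\bigl(r^{-nd/2+2n}\bigr),
\end{equation*}
as claimed. The only nontrivial step is the verification that $f$ belongs to $\tilde\scrS_{n-1}$ uniformly in $r$; this is precisely the reason the auxiliary cutoff $\iota$ was introduced above, converting the bounded-but-non-decaying exponential dependence on $\omega$ in $f_*$ into genuine rapid decay at the (harmless) cost of an upper bound on the characteristic function $\mathbb{1}(r^{d-2}(\vecu,\omega)\in A\scrQ)$.
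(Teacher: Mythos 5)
Your proof is correct and follows essentially the same route as the paper: identify the inner integral in \eqref{errortermthetafunc0002} as $F_r^{n-1,1/2}(g_r(\vecu,\vecalf),r^{d-2}\vecu)$, apply Proposition \ref{maincorollary22} with $k=n-1$ and $\beta=1/2$ to get $O(r^{-(n-1)(d-2)})$, and collect powers of $r$. Your explicit verification that $f=\iota^2 f_*$ lies in $\tilde\scrS_{n-1}$ (with the cutoff $\iota$ supplying the required decay in $\omega$ and compact support in $\vecu$) makes explicit the reasoning the paper carries out in the paragraph preceding the lemma, so it is a useful clarification rather than a departure.
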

\begin{proof}
For $F_r^{k,\beta}$ as in \eqref{def:Fr111}, we have
\begin{multline} \label{scrJboundstep1}
\int_{\RR} \left| \int_{\RR^d} \Theta_f^{(n-1)}\left( g \, z_{n-1}(r^d \veceta),r^{d-2} \vecu,\veceta, \omega  \right) \d \veceta \right|^{1/2}  \d \omega =  F_r^{n-1,1/2}\left( g ,r^{d-2} \vecu  \right) .
\end{multline}
Thus, applying Proposition \ref{maincorollary22} we see that the right hand side of \eqref{errortermthetafunc0002} is bounded above by a constant times
\begin{equation} 
  r^{nd/2} \times r^{2-d} \times r^{-(n-1)(d-2)}=r^{-nd/2+ 2n}.
\end{equation}

\end{proof}

\begin{proof}[Proof of Theorem \ref{thm:dioalpha}]
We recall the rescaling of $t$ and $\lambda$ in eq.~\eqref{lambdaisrescaled}.
The existence of the operators $A_n^{(r,\vecalf)}(t r^{1-d})$ follows from the Duhamel expansion in Equation \eqref{propeq2}. The error term follows from Lemma \ref{lem:errortermsCS} and Lemma \ref{lem:scrJbound}, remembering that $\lambda$ should be rescaled $\lambda \mapsto \lambda/h^2$ as in \eqref{lambdaisrescaled}. Finally, the convergence of the operators $A_n^{(r,\vecalf)}(t r^{1-d})$ in the limit $r \to 0$ is proved by combining Lemma  \ref{scrI0}, Lemma \ref{scrI1} and Corollary \ref{cor:secondorderterms}.
\end{proof} 

\section{Averages over \texorpdfstring{$\vecalf$}{alpha}}\label{sec:averages}

In this section we give the analogous results required to prove Theorem \ref{thm:allalpha}. First recall that Proposition \ref{equim} tells us that for $\yy \in \RR^d \backslash \QQ^d$ with the components of $(1,\trans\yy)$ linearly independent, and $(F_r)_{r\geq 0}$ a sequence of uniformly bounded, continuous functions we have
\begin{multline} \label{recallequim}
 \lim_{r \to 0}  r^\sigma \int_{\RR} F_r((u+\i r^2,0,(\begin{smallmatrix} \vecnull \\ \yy \end{smallmatrix})), r^\sigma u)\, w(r^\sigma u)\, \d u \\
 =  \frac{1}{\mu(\GamG)} \int_{\Gamma \backslash G} \int_{\RR} F(g,u) \, w(u) \,\d u \,\d \mu(g).
\end{multline}
Note that since the $F_r$ are uniformly bounded and continuous, and $w \in \L^1(\RR)$, the integral over $u$ is bounded uniformly in $r$ and $\yy$. Since the statement \eqref{recallequim} holds for a full measure set of $\yy \in [0,1)^d$, one can apply dominated convergence to conclude
\begin{multline}
 \lim_{r \to 0}  r^\sigma \int_{[0,1)^d} \int_{\RR} F_r((u+\i r^2,0,(\begin{smallmatrix} \vecnull \\ \yy \end{smallmatrix} )), r^\sigma u)\, w(r^\sigma u)\, \d u \, \d \yy \\
 =  \frac{1}{\mu(\GamG)} \int_{\Gamma \backslash G} \int_{\RR} F(g,u) \, w(u) \,\d u \,\d \mu(g).
\end{multline}
Thus we now just need to consider the case of unbounded test functions. It follows from \eqref{expliPsi} that
\begin{multline}
\int_{\yy\in[0,1)^d}  \Psi_{R,f}^\beta(\tau,(\begin{smallmatrix} \vecnull \\ \yy \end{smallmatrix})) \d \yy
= 2 \sum_{\vecm \in \ZZ^d} f\left( \vecm \frac{v^{1/2}}{|\tau|} \right) \frac{v^{\beta d/2}}{|\tau|^{\beta d}} \chi_R \left( \frac{v}{|\tau|^2} \right) \\
+ 2 \int_{\RR^d} f(\vecy)\d \vecy \sum_{\substack{(c,d) \in \ZZ^2 \\ gcd (c,d) = 1 \\ c>0, d\neq 0}}  \frac{v^{(\beta-1) d/2} }{|c\tau+d|^{(\beta-1) d}} \chi_R \left( \frac{v}{|c\tau+d|^2} \right).
\end{multline}
Since for $0\leq\beta\leq 1$
\begin{equation}
\frac{v^{(\beta-1) d/2} }{|c\tau+d|^{(\beta-1) d}} \leq R^{(\beta-1)d/2} ,
\end{equation}
we have
\begin{multline} \label{eq:apple}
\int_{\yy\in[0,1)^d}  \Psi_{R,f}^\beta(\tau,(\begin{smallmatrix} \vecnull \\ \yy \end{smallmatrix})) \d \yy
\leq 2 \sum_{\vecm \in \ZZ^d} f\left( \vecm \frac{v^{1/2}}{|\tau|} \right) \frac{v^{\beta d/2}}{|\tau|^{\beta d}} \chi_R \left( \frac{v}{|\tau|^2} \right) \\
+ R^{(\beta-1)d/2} X_R(\tau) \int_{\RR^d} f(\vecy)\d \vecy .
\end{multline}
This allows us to prove the following $\yy$-averaged version of Propositions \ref{diophprop} and \ref{diophprop234}.
\begin{prop} \label{avephprop}
Let $w:\RR\to\RR$ piecewise continuous with compact support, and $0<\epsilon<1$. Then, for every $R\geq 1$,
\begin{equation}
\limsup_{r \to 0} r^{d-2} \int_{|u|>r^{2-\epsilon}} \int_{[0,1)^d}  \Psi_{R,f}(u+\i r^2,(\begin{smallmatrix} \vecnull \\ \yy \end{smallmatrix}))  \, w(r^{d-2} u)\,\d \yy \, \d u \\
\ll R^{-1}.
\end{equation}
\end{prop}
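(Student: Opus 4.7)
The plan is to apply the $\yy$-averaged estimate \eqref{eq:apple} with $\beta=1$, which decomposes the integrand into a diagonal piece (the first term on the right of \eqref{eq:apple}) and an $X_R$ piece (the second term), then estimate each contribution to the $u$-integral separately, with $\tau = u + \i r^2$, and take $\limsup_{r \to 0}$.

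For the diagonal piece, I would substitute $u = r^2 t$ so that $|\tau|^2 = r^4(1+t^2)$, and exploit the crucial fact that $\chi_R(r^2/|\tau|^2) = 1$ forces $r(1+t^2)^{1/2} \leq R^{-1/2}$. In this cuspidal regime, the scale $\delta := 1/(r(1+t^2)^{1/2}) \geq R^{1/2}$ is large, so Poisson summation (or direct use of the rapid decay of $f$) gives $\sum_\vecm f(\vecm \delta^{-1}) = f(\vecnull) + O(R^{-T/2})$ for any $T > d$. Collecting factors, the leading $\vecm = \vecnull$ contribution is
\begin{equation*}
2\,f(\vecnull)\,\|w\|_\infty \int_{r^{-\epsilon}<|t|<1/(rR^{1/2})} \frac{\d t}{(1+t^2)^{d/2}} \ll \int_{r^{-\epsilon}}^\infty |t|^{-d}\, \d t \ll r^{\epsilon(d-1)},
\end{equation*}
using $(1+t^2)^{-d/2} \leq |t|^{-d}$ on $|t| \geq 1$ (valid since $d \geq 2$). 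This vanishes as $r \to 0$, so the diagonal piece contributes nothing to the $\limsup$.

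For the $X_R$ piece, I would work directly from the explicit formula
\begin{equation*}
X_R(u + \i r^2) = \sum_{\substack{c > 0 \\ \gcd(c,d) = 1}} \chi_R\!\left(\frac{r^2}{(cu+d)^2 + c^2 r^4}\right)
\end{equation*}
(the $c = 0$ coset vanishes for $r \leq R^{-1/2}$). Each indicator equals $1$ on an interval of length $(2/c)(r^2/R - c^2 r^4)^{1/2} \leq 2r/(cR^{1/2})$ centred at $-d/c$, and non-emptiness forces $c \leq 1/(rR^{1/2})$. Since $\supp w(r^{d-2}\,\cdot\,)$ is contained in $\{|u| \leq T r^{2-d}\}$ for some fixed $T$, the number of $d$ coprime to $c$ whose cusp interval meets this set is $\ll r^{2-d}\phi(c)$, so
\begin{equation*}
\int_{|u| \leq T r^{2-d}} X_R(u + \i r^2)\, \d u \ll r^{2-d} \cdot \frac{r}{R^{1/2}} \sum_{c \leq 1/(rR^{1/2})} \frac{\phi(c)}{c} \ll \frac{r^{2-d}}{R},
\end{equation*}
using $\phi(c) \leq c$. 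Multiplying by $r^{d-2}$ yields the desired $\ll R^{-1}$ bound.

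The main obstacle is the counting argument for $X_R$: one must track how many Farey fractions of denominator $c$ lie in the very large support of $w(r^{d-2}\,\cdot\,)$ of size $\sim r^{2-d}$, and combine this with the small width $\sim r/(cR^{1/2})$ of each cusp interval at height $R$, together with the fact that only $c \leq 1/(rR^{1/2})$ contribute. The diagonal piece, by contrast, is comparatively routine once the cuspidal constraint is used to collapse the $\vecm$-sum to its $\vecm = \vecnull$ value.
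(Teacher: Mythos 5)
Your proposal is correct and reaches the same conclusion, but by a genuinely different method for the $X_R$ piece. The paper invokes the equidistribution of closed horocycles: since $X_R$ is $\SL(2,\ZZ)$-invariant, bounded and piecewise constant, the horocycle at height $r^2$ equidistributes, giving the exact limit
$$\lim_{r\to 0} r^{d-2}\int_\RR X_R(u+\i r^2)\,w(r^{d-2}u)\,\d u = \frac{3}{\pi}\int_\RR w\;\int_R^\infty\frac{\d v}{v^2} = \frac{3}{\pi R}\int_\RR w,$$
from which $\ll R^{-1}$ follows immediately. You instead carry out a direct Farey-fraction counting: cusp intervals at height $\geq R$ have width $\ll r/(cR^{1/2})$, denominators are restricted to $c\le 1/(r R^{1/2})$, and the density of coprime $d$'s in the support of $w(r^{d-2}\cdot)$ gives the count $\ll\phi(c)r^{2-d}$. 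The arithmetic works out cleanly: your approach is more elementary and self-contained, at the cost of delivering only the one-sided bound $\ll R^{-1}$ rather than the exact asymptotic constant (which is all the proposition asserts). For the diagonal piece, you essentially reproduce the argument that the paper delegates to \cite[\S 6.6.1]{Marklof02}: the cuspidal constraint forces $r(1+t^2)^{1/2}\le R^{-1/2}$, so the $\vecm$-sum collapses to $\vecm=\vecnull$ and the remaining $t$-integral over $|t|>r^{-\epsilon}$ is $\ll r^{\epsilon(d-1)}\to 0$.

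One small slip worth flagging: you define $\delta=1/(r(1+t^2)^{1/2})$ as a large parameter and then write $\sum_\vecm f(\vecm\delta^{-1})$, but $v^{1/2}/|\tau| = \delta$, so the argument of $f$ is $\vecm\delta$, not $\vecm\delta^{-1}$. With the argument $\vecm\delta$ and $\delta$ large, the collapse $\sum_\vecm f(\vecm\delta)=f(\vecnull)+O(\delta^{-T})$ is indeed immediate from the rapid decay of $f$; if the argument really were $\vecm\delta^{-1}$ with $\delta^{-1}$ small, the sum would instead Poisson-sum to $\delta^d\int f$, which would ruin the estimate. Since your subsequent calculation treats the sum as having collapsed to $f(\vecnull)$, your intent was clearly $\vecm\delta$ and the conclusion stands.
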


\begin{proof}
When $\beta = 1$ the first term in the right hand side of \eqref{eq:apple} vanishes as $v\to 0$, see \cite[\S 6.6.1]{Marklof02}.
By the equidistribution of closed horocycles and the fact that $X_R$ is bounded and piecewise constant, we have for $R\geq 1$ that 
\begin{equation} \begin{split} 
& \lim_{r \to 0} r^{d-2} \int_{\RR} X_R(u+\i r^2)\, w(r^{d-2} u)\, \d u \\
& = \frac{3}{\pi} \int_\RR w(x) \, \d x \int_{\SL(2,\RR)\backslash\H} X_R(u+\i v) \frac{\d u \d v}{v^2} \\
& = \frac{3}{\pi} \int_\RR w(x) \, \d x  \int_R^\infty \frac{\d v}{v^2} = \frac3{\pi R} \int_\RR w(x) \, \d x .
\end{split} \end{equation}
\end{proof}
\begin{prop} \label{aveprop234}
Let $w:\RR\to\RR$ piecewise continuous with compact support, and $0\leq \beta <1$. Then, for every $R\geq 1$,
\begin{equation}
\limsup_{r \to 0} r^{d-2} \int_{\RR} \int_{[0,1)^d}  \Psi_{R,f}^\beta(u+\i r^2,(\begin{smallmatrix} \vecnull \\ \yy \end{smallmatrix}))  \, w(r^{d-2} u)\, \d \yy \, \d u \\
\ll R^{(\beta-1)d/2}  .
\end{equation}
\end{prop}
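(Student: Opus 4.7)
The plan is to apply the averaged upper bound \eqref{eq:apple}, which splits the integrand into two pieces, and handle each independently using techniques already developed in the excerpt.

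For the second piece on the right of \eqref{eq:apple}, namely $R^{(\beta-1)d/2}\,X_R(\tau)\int_{\RR^d}f(\vecy)\,\d\vecy$, I invoke the same computation used in the proof of Proposition \ref{avephprop}: the equidistribution of closed horocycles against the piecewise constant function $X_R$ yields
\begin{equation*}
\lim_{r \to 0} r^{d-2} \int_\RR X_R(u+\i r^2)\, w(r^{d-2} u)\, \d u \;=\; \frac{3}{\pi R} \int_\RR w(x)\, \d x .
\end{equation*}
Multiplied by the prefactor $R^{(\beta-1)d/2}$, this contributes at order $R^{(\beta-1)d/2-1}$, which is dominated by the target bound $R^{(\beta-1)d/2}$ for $R\geq 1$.

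For the first piece on the right of \eqref{eq:apple}, I mimic the opening steps of the proof of Proposition \ref{diophprop234}: the change of variables $u = vt$ with $v = r^2$ converts $r^{d-2}\int_\RR\cdots\d u$ into
\begin{equation*}
2\, v^{(1-\beta)d/2} \int_\RR \frac{w(v^{d/2} t)}{(1+t^2)^{\beta d/2}} \sum_{\vecm\in\ZZ^d} f\!\left(\frac{\vecm}{v^{1/2}(1+t^2)^{1/2}}\right) \chi_R\!\left(\frac{1}{v(1+t^2)}\right) \d t .
\end{equation*}
On the support of $\chi_R$ the pointwise bound \eqref{eq:this2} replaces the factor $v^{(1-\beta)d/2}(1+t^2)^{-\beta d/2}$ by $R^{(\beta-1)d/2}(1+t^2)^{-d/2}$, and taking $\limsup_{r\to 0}$ together with the rapid decay of $f$ leaves only the $\vecm=\vecnull$ contribution, bounded by a constant multiple of $R^{(\beta-1)d/2}$.

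I do not anticipate a substantive obstacle: the $\yy$-average in \eqref{eq:apple} has already decoupled the Diophantine-sensitive sum defining $\Psi_{R,f}^\beta$ into (i) a piece structurally identical to the $(c,d)=(0,\pm 1)$ contribution handled in the proof of Proposition \ref{diophprop234}, and (ii) a multiple of $X_R$. Combining the two bounds is immediate since $R^{(\beta-1)d/2-1} \leq R^{(\beta-1)d/2}$ for $R\geq 1$. The mild point of care is that \eqref{eq:this2} is only applied on the support of $\chi_R$, which is exactly where the summand is nonzero.
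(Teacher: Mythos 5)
Your proof is correct and follows essentially the same route as the paper: split via \eqref{eq:apple}, treat the first piece by the change of variable $u=vt$ exactly as in the proof of Proposition \ref{diophprop234}, and bound the second piece involving $X_R$. The only (inessential) difference is in the $X_R$ piece: the paper simply observes that $X_R$ is a bounded function, so $r^{d-2}\int_{\RR} X_R(u+\i r^2)\,w(r^{d-2}u)\,\d u \ll 1$ uniformly and the second piece is $\ll R^{(\beta-1)d/2}$ directly; you instead invoke the horocycle equidistribution computation from the proof of Proposition \ref{avephprop}, obtaining the sharper decay $R^{(\beta-1)d/2-1}$. Both yield the stated bound, so the extra precision is harmless but unnecessary here.
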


\begin{proof}
The first term in the right hand side of \eqref{eq:apple} has already been estimated in the proof of Proposition \ref{diophprop234}. For the remaining terms the statement now follows from the observation that $X_R$ is a bounded function. 
\end{proof}

\begin{proof}[Proof of Theorem \ref{thm:allalpha}]
The convergence of the operators $A_n^{(r)}(r^{1-d}t)$ follows in the cases $n=0,1$ directly from the calculations in Section \ref{sec:0and1} for fixed $\vecalf$. Using Proposition \ref{avephprop} one can prove an $\vecalf$-averaged version of Proposition \ref{maincorollary2}, and hence prove the convergence of $A_2^{(r)}(r^{1-d}t)$ as in Corollary \ref{cor:secondorderterms}, with $\vecy=-\vecalf$. All that remains is the bound on the error terms. One first proves the $\vecalf$-averaged version of Proposition \ref{maincorollary22}, with $\vecy_j=-\vecalf$, by using Proposition \ref{aveprop234}. The remaining analysis proceeds identically to Section \ref{sec:error}.
\end{proof}

\appendix

\section{} \label{appendix0}

The following proposition explains how Corollary \ref{conj:allalpha} and Theorem \ref{thm:allalpha} yield information on the phase-space distribution of the wavepacket $f^{(\vecp)}(t)=U_{h,\lambda}(t) f^{(\vecp)}_0$ with an initial wavepacket $f^{(\vecp)}_0$ of the form (cf.~Figure \ref{fig1})
\begin{equation}
f^{(\vecp)}_0(\vecx) = r^{d(d-1)/2} \phi(r^{d-1} \vecx) \, \e(\vecp\cdot\vecx/h) ,
\end{equation}
where $\phi\in\scrS(\RR^d)$ is assumed to have unit $\L^2$-norm, and $\vecp\in\RR^d$. 

We use the shorthand
\begin{equation}
A(t)=U_{h,\lambda}(t) \Op_{r,h} (a) U_{h,\lambda}(t)^{-1}, \qquad B=\Op_{r,h} (b).
\end{equation}

\begin{prop}
Let $f^{(\vecp)}_0$, $f^{(\vecp)}(t)$ as above, $w\in\scrS(\RR^d)$ and $b\in\scrS(\RR^d\times\RR^d$). Set
\begin{equation}\label{A3}
a(\vecx,\vecy)= | \phi(\vecx)|^2  \, w(\vecy) .
\end{equation}
Then
\begin{equation}\label{A4}
r^{-d(d-1)/2} h^{-d/2} \int_{\RR^d} \langle f^{(\vecp)}(t) ,   B\,  f^{(\vecp)}(t)  \rangle \,w(\vecp) \d\vecp 
= \langle A(t) , B \rangle_\HiS +O(r^{d-1}h) ,
\end{equation}
uniformly in $r,h,t>0$.
\end{prop}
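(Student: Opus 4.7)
The plan is to exploit the unitarity of $U_{h,\lambda}(t)$ to absorb the time evolution into the observable $B$, reducing \eqref{A4} to a $t$-independent Hilbert--Schmidt norm estimate. Setting $\tilde B := U_{h,\lambda}(t)^{-1} B U_{h,\lambda}(t)$, unitarity gives $\langle f^{(\vecp)}(t), B f^{(\vecp)}(t)\rangle = \langle f^{(\vecp)}_0, \tilde B f^{(\vecp)}_0\rangle$. Using the standard identity $\langle f, Tf\rangle = \langle |f\rangle\langle f|, T\rangle_\HiS$ (valid in the paper's conventions, since both inner products are linear in the first slot and $\langle A,B\rangle_\HiS$ pairs Schwartz kernels by $\int A(\vecx,\vecx')\overline{B(\vecx,\vecx')}\d\vecx\d\vecx'$), the left-hand side of \eqref{A4} becomes $r^{-d(d-1)/2} h^{-d/2}\langle P,\tilde B\rangle_\HiS$, where
\begin{equation*}
P := \int_{\RR^d} w(\vecp)\,|f^{(\vecp)}_0\rangle\langle f^{(\vecp)}_0|\,\d\vecp,\qquad P(\vecx,\vecx') = r^{d(d-1)}\phi(r^{d-1}\vecx)\overline{\phi(r^{d-1}\vecx')}\,\hat w\!\left(\tfrac{\vecx'-\vecx}{h}\right).
\end{equation*}
Cyclicity of the trace simultaneously rewrites the right-hand side as $\langle A(t),B\rangle_\HiS = \langle A,\tilde B\rangle_\HiS$, so \eqref{A4} amounts to $\langle r^{-d(d-1)/2}h^{-d/2}P - A,\tilde B\rangle_\HiS = O(r^{d-1}h)$.

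By Cauchy--Schwarz and unitary invariance of the HS norm, $|\langle r^{-d(d-1)/2}h^{-d/2}P-A,\tilde B\rangle_\HiS|\le \|r^{-d(d-1)/2}h^{-d/2}P-A\|_\HiS \,\|B\|_\HiS$, and $\|B\|_\HiS = \|\Op_{r,h}(b)\|_\HiS = \|b\|_{\L^2}$ is independent of $r,h,t$ (because the Weyl quantisation is an isometry from symbols to HS kernels and $D_{r,h}$ preserves $\L^2$ norms). The entire proposition therefore reduces to the single $t$-independent estimate
\begin{equation*}
\|r^{-d(d-1)/2}h^{-d/2}P - A\|_\HiS = O(r^{d-1}h).
\end{equation*}

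For this, \eqref{def:Opa} gives $A(\vecx,\vecx') = r^{d(d-1)/2} h^{-d/2} |\phi(r^{d-1}(\vecx+\vecx')/2)|^2 \hat w((\vecx'-\vecx)/h)$, so the two kernels differ only in whether $\phi$ is evaluated as a mixed product at the endpoints or as $|\phi|^2$ at the midpoint. Passing to coordinates $\vecu = r^{d-1}(\vecx+\vecx')/2$, $\vecv = r^{d-1}(\vecx-\vecx')/2$ and then rescaling $\vecz = 2r^{1-d}\vecv/h$, the squared Hilbert--Schmidt norm of the difference becomes (up to an explicit constant prefactor that all telescopes into $1$)
\begin{equation*}
\int_{\RR^{2d}} \bigl|\phi(\vecu+\tfrac{hr^{d-1}}{2}\vecz)\overline{\phi(\vecu-\tfrac{hr^{d-1}}{2}\vecz)} - |\phi(\vecu)|^2\bigr|^2 \, |\hat w(\vecz)|^2\,\d\vecu\,\d\vecz.
\end{equation*}
A first-order Taylor expansion of $\phi$ about $\vecu$ shows the bracketed expression is $O(h r^{d-1}\|\vecz\|)$ pointwise, and the Schwartz decay of $\phi$ and $\hat w$ makes $\int\|\vecz\|^2|\hat w(\vecz)|^2\d\vecz$ and the resulting $\vecu$-integrals finite with bounds depending only on $\phi$ and $w$. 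This yields the claimed $O((r^{d-1}h)^2)$ bound. There is no substantive obstacle: once the unitary conjugation has isolated the $t$-independent kernel difference, the remainder is a routine phase-space calculation, with the only mild care needed in keeping the constants uniform via the Schwartz estimates.
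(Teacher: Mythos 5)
Your proof is essentially the same as the paper's: both integrate the rank-one projector $|f_0^{(\vecp)}\rangle\langle f_0^{(\vecp)}|$ against $w(\vecp)$ to form a Hilbert--Schmidt operator whose kernel matches that of $r^{d(d-1)/2}h^{d/2}\Op_{r,h}(a)$ except that $\phi$ is evaluated at the two endpoints rather than the midpoint, then Taylor-expand $\phi$ about the midpoint and bound the Hilbert--Schmidt norm of the remainder before applying Cauchy--Schwarz and the unitary invariance of the HS norm. The only cosmetic difference is that you conjugate $B$ by $U_{h,\lambda}(t)$ to obtain a $t$-independent estimate, whereas the paper conjugates $\Op_{r,h}(a)$ and the error operator $E_{r,h}$; the two are equivalent by cyclicity of the trace.
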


(The pre-factor $r^{-d(d-1)/2} h^{-d/2}$ in \eqref{A4} compensates the $\L^2$-normalisation of $B=\Op_{r,h}(b)$ in \eqref{BGscaling}, which is not suitable in the present setting.)

\begin{proof}
Consider the linear operator $F^{(\vecp)}_{r,h}:\L^2(\RR^d)\to\L^2(\RR^d)$ with Schwartz kernel
\begin{equation}
F^{(\vecp)}_{r,h}(\vecx,\vecx') = f^{(\vecp)}_0(\vecx) \, \overline{f^{(\vecp)}_0(\vecx')} 
= r^{d(d-1)} \phi(r^{d-1} \vecx)\, \overline{\phi(r^{d-1} \vecx')} \, \e(\vecp\cdot(\vecx-\vecx')/h) .
\end{equation}
Using the Fourier transform $\widehat w$ of $w$ yields
\begin{equation}
\begin{split}
F_{r,h}(\vecx,\vecx') & = \int F^{(\vecp)}_{r,h}(\vecx,\vecx') w(\vecp) \d\vecp   \\
& = r^{d(d-1)} \phi(r^{d-1} \vecx)\, \overline{\phi(r^{d-1} \vecx')} \, \widehat w((\vecx'-\vecx)/h)
\end{split}
\end{equation}
and by Taylor's theorem we have
\begin{equation}
\phi(r^{d-1} \vecx)\,  = \phi(\tfrac12 r^{d-1}(\vecx+\vecx'))  + R_{r,h}(\vecx,\vecx'),
\end{equation}
with remainder
\begin{equation}
R_{r,h}(\vecx,\vecx') = \tfrac12 r^{d-1} \int_0^1 (\vecx-\vecx') \cdot \nabla \phi(\tfrac12 r^{d-1}((\vecx+\vecx') + s(\vecx-\vecx'))) \d s .
\end{equation}
We can express this term in the form
\begin{equation}\label{eqRRR}
R_{r,h}(\vecx,\vecx')=\tfrac12 r^{d-1} h \, S_b(\tfrac12 r^{d-1}(\vecx+\vecx'), (\vecx-\vecx')/h), \qquad b=\tfrac12 r^{d-1} h,
\end{equation}
with
\begin{equation}
S_b(\vecx,\vecy)= \int_0^1 \vecy \cdot \nabla \phi(\vecx+ s b \vecy) \d s .
\end{equation}
Now
\begin{equation}\label{A11}
\begin{split}
F_{r,h}(\vecx,\vecx') & = r^{d(d-1)} | \phi(\tfrac12 r^{d-1}(\vecx+\vecx'))|^2  \, \widehat w((\vecx'-\vecx)/h) +E_{r,h}(\vecx,\vecx') ,
\end{split}
\end{equation}
with
\begin{multline}
E_{r,h}(\vecx,\vecx') = r^{d(d-1)} \widehat w((\vecx'-\vecx)/h) 
\big\{ \phi(\tfrac12 r^{d-1}(\vecx+\vecx')) \overline{R_{r,h}(\vecx',\vecx)} \\
+ R_{r,h}(\vecx,\vecx') \overline{\phi(\tfrac12 r^{d-1}(\vecx+\vecx'))}
+  R_{r,h}(\vecx,\vecx' ) \overline{R_{r,h}(\vecx',\vecx)} \big\} .
\end{multline}
On account of \eqref{eqRRR},
\begin{equation}
E_{r,h}(\vecx,\vecx') = r^{(d+1)(d-1)} h \, W_b ( \tfrac12 r^{d-1} (\vecx+\vecx'), (\vecx-\vecx')/h),  \qquad b=\tfrac12 r^{d-1} h,
\end{equation}
with
\begin{equation}
W_b ( \vecx,\vecy) = \tfrac12 \widehat w(\vecy) 
\big\{ \phi(\vecx) \overline{S_b(\vecx,-\vecy)} 
+ S_b(\vecx,\vecy) \overline{\phi(\vecx)}
+  b \, S_b(\vecx,\vecy) \overline{S_b(\vecx,-\vecy))} \big\} .
\end{equation}
We re-write \eqref{A11} as
\begin{multline}
F_{r,h}(\vecx,\vecx') = r^{d(d-1)} h^d \int_{\RR^d} | \phi(\tfrac12 r^{d-1}(\vecx+\vecx'))|^2 \, w(h \vecy) \,\e( (\vecx-\vecx')\cdot\vecy) \, \d\vecy \\+E_{r,h}(\vecx,\vecx') ,
\end{multline}
and so, for $a$ as in \eqref{A3},
\begin{equation}
F_{r,h} = r^{d(d-1)/2} h^{d/2} \Op_{r,h}(a) + E_{r,h} .
\end{equation}
We conclude
\begin{equation}
\begin{split}
&  r^{-d(d-1)/2} h^{-d/2} \int_{\RR^d} \langle f^{(\vecp)}(t) , B\, f^{(\vecp)}(t)  \rangle \,w(\vecp) \d\vecp \\
& = r^{-d(d-1)/2} h^{-d/2} \langle U_{h,\lambda}(t) F_{r,h} U_{h,\lambda}(t)^{-1} , B \rangle_\HiS \\
& = \langle U_{h,\lambda}(t) \Op_{r,h}(a) U_{h,\lambda}(t)^{-1} , B \rangle_\HiS + O(r^{d-1}h) ,
\end{split}
\end{equation}
where the error term follows from the upper bounds
\begin{equation}
\big| \langle U_{h,\lambda}(t) E_{r,h} U_{h,\lambda}(t)^{-1} , B \rangle_\HiS \big| 
\leq  \| E_{r,h} \|_\HiS \| B \|_\HiS ,
\end{equation}
and
\begin{equation}
\begin{split}
\| E_{r,h} \|_\HiS & = r^{(d+1)(d-1)} h \bigg(\int_{\RR^d\times\RR^d} \big|W_b( \tfrac12 r^{d-1} (\vecx+\vecx'), (\vecx-\vecx')/h)\big|^2  \,\d\vecx\,\d\vecx' \bigg)^{1/2} \\
& = r^{(1+d/2)(d-1)} h^{d/2+1} \bigg(\int_{\RR^d\times\RR^d} \big|W_b( \vecx, \vecy)\big|^2  \,\d\vecx\,\d\vecy \bigg)^{1/2} \\
%& \ll r^{(1+d/2)(d-1)} h^{d/2+1} .
\end{split}
\end{equation}
with
\begin{equation}
\lim_{b\to 0} \int_{\RR^d\times\RR^d} \big|W_b( \vecx, \vecy)\big|^2  \,\d\vecx\,\d\vecy
=\int_{\RR^d\times\RR^d} \big|W_0( \vecx, \vecy)\big|^2  \,\d\vecx\,\d\vecy <\infty.
\end{equation}
\end{proof}

\section{} \label{appendix1}

In this section we compute the expression \eqref{Is} for $\scrI_{\ell,n}$. Recall that
\begin{multline}
[\hat K_{\ell,n}(\vecs) ] (\vecy,\vecy') \\
= r^{(n-\ell+1)d} \hspace{-0.2cm} \sum_{\vecm_\ell,\ldots,\vecm_n\in\ZZ^d} \e(-\tfrac12 \, s_\ell\|\vecy\|^2) \hat W(r\vecm_\ell) 
\scrT_{\ell,n-1}(\vecy) \e(\tfrac12 \, s_n \|\vecy-\vecm_n \|^2) \delta_{\vecm_n}(\vecy-\vecy').
\end{multline}
Hence we have for $1 \leq \ell \leq n-1$ that
\begin{equation} \begin{split}
\scrI_{\ell,n}(\vecs) &= \Tr_\vecalf[K_{1,\ell}(\vecs) \Op( D_{r,h} a) K_{\ell+1,n}(\vecs)\Op(D_{r,h} b)] \\
&=r^{nd} r^{-d(d-1)} h^d \int_{\RR^d} \sum_{\vecm_0,\dots,\vecm_n}  \\
& \times \e(-\tfrac12 \, s_1\|\vecm_0+\vecalf\|^2) \hat W(r\vecm_1) 
\scrT_{1,\ell-1}(\vecm_0+\vecalf) \e(\tfrac12 \, s_\ell \|\vecm_0+\vecalf-\vecm_\ell \|^2) \\
& \times \tilde a (r^{1-d}(\vecm_0+\vecalf-\vecm_\ell - \veceta), \tfrac h2 (\vecm_0+\vecalf-\vecm_\ell+\veceta)) \\
& \times \e(-\tfrac12 \, s_{\ell+1}\|\veceta \|^2) \hat W(r\vecm_{\ell+1}) 
\scrT_{\ell+1,n-1}(\veceta ) \e(\tfrac12 \, s_n \|\veceta -\vecm_n \|^2) \\
& \times \tilde b( r^{1-d} (\veceta-\vecm_n - \vecm_0-\vecalf), \tfrac h2 (\veceta-\vecm_n + \vecm_0+\vecalf)) \,  \d \veceta.
\end{split} \end{equation}
We then make the variable substitution $\veceta \to r^{d-1} \veceta + \vecm_0 +\vecalf - \vecm_\ell$ so that $\tilde a$ has first argument $-\veceta$. This leaves $\tilde{b}$ with first argument $\veceta - r^{1-d}(\vecm_n+\vecm_\ell)$, and by the rapid decay of $\tilde a$ and $\tilde b$ the leading order terms come from when $\vecm_n+\vecm_\ell=0$, and we incur an error of order $r^\infty$. We thus have
\begin{equation} \begin{split}
\scrI_{\ell,n}(\vecs) &=r^{nd} h^d \int_{\RR^d} \sum_{\vecm_0,\dots,\vecm_{n}} \, \mathbb{1}[\vecm_n+\vecm_\ell=0]  \\
& \times \e(-\tfrac12 \, s_1\|\vecm_0+\vecalf\|^2) \hat W(r\vecm_1) 
\scrT_{1,\ell-1}(\vecm_0+\vecalf) \e(\tfrac12 \, s_\ell \|\vecm_0+\vecalf-\vecm_\ell \|^2) \\
& \times \tilde a (-\veceta, h (\vecm_0+\vecalf-\vecm_\ell+\tfrac12 r^{d-1}\veceta)) \,  \e(-\tfrac12 \, s_{\ell+1}\|\vecm_0+\vecalf -\vecm_\ell + r^{d-1} \veceta\|^2) \\
& \times\hat W(r\vecm_{\ell+1}) 
\scrT_{\ell+1,n-1}(\vecm_0+\vecalf + r^{d-1} \veceta - \vecm_\ell ) \e(\tfrac12 \, s_n \|\vecm_0+\vecalf + r^{d-1} \veceta \|^2) \\
& \times \tilde b( \veceta, h (\vecm_0+\vecalf + \tfrac12 r^{d-1} \veceta)) \, \d \veceta  + O(r^\infty).
\end{split} \end{equation}
Finally, we make the substitutions $\vecm_j \to \vecm_0 - \vecm_j$ for $j=1,\dots,\ell$ followed by $\vecm_j \to \vecm_\ell - \vecm_{j}$ for $j= \ell+1, \dots, n$ to obtain
\begin{equation} \begin{split}
\scrI_{\ell,n}(\vecs) &=r^{nd} h^d \int_{\RR^d} \sum_{\vecm_0,\dots,\vecm_{n}} \, \mathbb{1}[\vecm_n=\vecm_0]  \\
& \times \e(-\tfrac12 \, s_1\|\vecm_0+\vecalf\|^2) \hat W(r(\vecm_0-\vecm_1)) 
\scrT_{1,\ell-1}^-(\vecalf) \e(\tfrac12 \, s_\ell \|\vecm_\ell+\vecalf \|^2) \\
& \times \tilde a (-\veceta, h (\vecm_\ell+\vecalf+\tfrac12 r^{d-1}\veceta)) \,  \e(-\tfrac12 \, s_{\ell+1}\|\vecm_\ell+\vecalf + r^{d-1} \veceta\|^2) \\
& \times\hat W(r(\vecm_{\ell}-\vecm_{\ell+1})) 
\scrT_{\ell+1,n-1}^-(\vecalf + r^{d-1} \veceta) \e(\tfrac12 \, s_n \|\vecm_0+\vecalf + r^{d-1} \veceta \|^2) \\
& \times \tilde b( \veceta, h (\vecm_0+\vecalf + \tfrac12 r^{d-1} \veceta)) \, \d \veceta + O(r^\infty) .
\end{split} \end{equation}
This proves \eqref{Is}.

\section{}\label{appendix2}
This section establishes relation \eqref{eq:compli}, which is needed in the analysis of $\scrJ_{\ell,n}(t,a)$. First we compute the kernel of $\hat K_{\ell,n}^\dagger=\scrF K_{\ell,n}^\dagger \scrF^{-1}$. By taking complex conjugate and switching $\vecy$ and $\vecy'$ in \eqref{Kker}, we obtain
\begin{multline}
[\hat K_{\ell,n}(\vecs)^\dagger] (\vecy,\vecy') 
= r^{(n-\ell+1)d} \hspace{-0.2cm} \sum_{\vecm_\ell',\ldots,\vecm_n' \in\ZZ^d} \e(\tfrac12 \, s_\ell\|\vecy + \vecm_n'\|^2) \hat W(-r\vecm_\ell') \\ \times
\overline{\scrT}_{\ell,n-1}(\vecy + \vecm_n') \e(-\tfrac12 \, s_n \|\vecy \|^2) \delta_{\vecm_n'}(\vecy'-\vecy),
\end{multline}
where
\begin{equation}
\overline{\scrT}_{\ell,n}(\vecy) 
=\prod_{j=\ell}^{n} \e(\tfrac12 \, (s_{j+1}-s_j) \|\vecy-\vecm_j' \|^2) \hat W(r(\vecm_{j}'-\vecm_{j+1}')) .
\end{equation}
Thus, using the formulae for the kernels of $\hat K_{\ell,n}$, $\hat K_{\ell,n}^\dagger$ and $\hOp_{r,h}$ we have that
%\begin{multline}
%[K_{\ell,n}(\vecs)^\dagger K_{\ell,n}(\vecs) ] (\vecy,\vecy') \\
%= r^{2(n-\ell+1)d} \hspace{-0.2cm} \sum_{\vecm_\ell,\ldots,\vecm_n\in\ZZ^d} \sum_{\vecm_\ell',\ldots,\vecm_n'\in\ZZ^d} \e(\tfrac12 \, s_\ell\|\vecy + \vecm_n\|^2) \hat W(-r\vecm_\ell) 
%\overline{\scrT}_{\ell,n-1}(\vecy + \vecm_n) \e(-\tfrac12 \, s_n \|\vecy \|^2) \\
%\e(-\tfrac12 \, s_\ell\|\vecy + \vecm_n \|^2) \hat W(r\vecm_\ell') 
%\scrT_{\ell,n-1}(\vecy + \vecm_n) \e(\tfrac12 \, s_n \|\vecy + \vecm_n-\vecm_n' \|^2) \delta_{\vecm_n'-\vecm_n}(\vecy-\vecy')
%\end{multline}
%and also
\begin{equation} \begin{split}
& [\hat K_{\ell,n}(\vecs)^\dagger \hat K_{\ell,n}(\vecs) \, \hOp_{r,h}(a)] (\vecy,\vecy') \\
& = r^{2(n-\ell+1)d} \hspace{-0.2cm} \sum_{\vecm_\ell,\ldots,\vecm_n\in\ZZ^d} \sum_{\vecm_\ell',\ldots,\vecm_n'\in\ZZ^d} \\
& \times \e(\tfrac12 \, s_\ell\|\vecy + \vecm_n'\|^2) \hat W(-r\vecm_\ell') 
\overline{\scrT}_{\ell,n-1}(\vecy + \vecm_n') \e(-\tfrac12 \, s_n \|\vecy \|^2) \\
& \times \e(-\tfrac12 \, s_\ell\|\vecy + \vecm_n' \|^2) \hat W(r\vecm_\ell) 
\scrT_{\ell,n-1}(\vecy + \vecm_n') \e(\tfrac12 \, s_n \|\vecy + \vecm_n'-\vecm_n \|^2) \\
& \times \tilde a(r^{1-d} (\vecy - \vecm_n + \vecm_n' -\vecy'),\tfrac h2 (\vecy - \vecm_n + \vecm_n'+\vecy')),
\end{split}\end{equation}
and similarly
\begin{equation} \begin{split}
& [\hat K_{\ell,n}(\vecs) \hat K_{\ell,n}(\vecs)^\dagger \,\hOp_{r,h}(a)] (\vecy,\vecy') \\
& = r^{2(n-\ell+1)d} \hspace{-0.2cm} \sum_{\vecm_\ell,\ldots,\vecm_n\in\ZZ^d} \sum_{\vecm_\ell',\ldots,\vecm_n'\in\ZZ^d} \\
& \times \e(-\tfrac12 \, s_\ell\|\vecy\|^2) \hat W(r\vecm_\ell) 
\scrT_{\ell,n-1}(\vecy) \e(\tfrac12 \, s_n \|\vecy-\vecm_n \|^2) \\
& \times \e(\tfrac12 \, s_\ell\|\vecy-\vecm_n + \vecm_n'\|^2) \hat W(-r\vecm_\ell') 
\overline{\scrT}_{\ell,n-1}(\vecy-\vecm_n + \vecm_n') \e(-\tfrac12 \, s_n \|\vecy-\vecm_n \|^2) \\
& \times  \tilde a(r^{1-d} (\vecy + \vecm_n'-\vecm_n-\vecy'),\tfrac h2 (\vecy + \vecm_n'-\vecm_n+\vecy')).
\end{split}\end{equation}

Combining these yields explicitly 
\begin{equation}
\begin{split}
\Tr_\vecalf &  [K_{1,\ell}(\vecs)^\dagger K_{1,\ell}(\vecs)\,  \Op_{r,h}(a)\, K_{\ell+1,n}(\vecs) K_{\ell+1,n}(\vecs)^\dagger \,\Op_{r,h}(\bar a)]\\
&= r^{2nd-d(d-1)} h^{d}  \sum_{\substack{\vecm_0,\vecm_1,\ldots,\vecm_n\in\ZZ^d\\ \vecm_1',\ldots,\vecm_n'\in\ZZ^d}} \\
& \times \int_{\RR^d}  \hat W(r\vecm_1) 
\scrT_{1,\ell-1}(\vecm_0+\vecm_\ell'+\vecalf) \e(\tfrac12 \, s_\ell \|\vecm_0+\vecm_\ell'-\vecm_\ell+\vecalf \|^2)    \\
& \times  \hat W(-r\vecm_1') 
\overline{\scrT}_{1,\ell-1}(\vecm_0+\vecm_\ell'+\vecalf) \e(-\tfrac12 \, s_\ell \|\vecm_0+\vecalf \|^2) \\
& \times \tilde a(r^{1-d} (\vecm_0+\vecm_\ell'-\vecm_\ell+\vecalf -\vecy),\tfrac{h}2(\vecm_0+\vecm_\ell'-\vecm_\ell +\vecalf+\vecy)) \\
& \times\e(-\tfrac12 \, s_{\ell+1}\|\vecy\|^2) \hat W(r\vecm_{\ell+1}) 
\scrT_{\ell+1,n-1}(\vecy)   \\
& \times \e(\tfrac12 \, s_{\ell+1}\|\vecy+\vecm_n'-\vecm_n\|^2) \hat W(-r\vecm_{\ell+1}') 
\overline{\scrT}_{\ell+1,n-1}(\vecy+\vecm_n'-\vecm_n)   \\
& \times  \tilde {\overline{a}}(r^{1-d} (\vecy+\vecm_n'-\vecm_n-\vecm_0-\vecalf),\tfrac{h}2(\vecy+\vecm_n'-\vecm_n+\vecm_0+\vecalf)) \, \d \vecy .
\end{split}
\end{equation} 
Now we make the substitution $\vecy = r^{d-1} \veceta + \vecm_0 + \vecalf + \vecm_\ell'-\vecm_\ell $ so that the first argument of $\tilde a$ becomes $-\veceta$.  Now $\tilde{\overline{a}}$ has first argument $\veceta + r^{1-d}( \vecm_n'-\vecm_n + \vecm_\ell'-\vecm_\ell)$, and hence (using the rapid decay of $\tilde a$) we have that $\vecm_n' - \vecm_n + \vecm_\ell' - \vecm_\ell =0$. This yields the expression
\begin{equation}
\begin{split}
\Tr_\vecalf &[K_{1,\ell}(\vecs)^\dagger K_{1,\ell}(\vecs)\, \Op_{r,h}(a)\, K_{\ell+1,n}(\vecs) K_{\ell+1,n}(\vecs)^\dagger \,\Op_{r,h}(\bar a)] \\
&= r^{2nd} h^{d}  \sum_{\substack{\vecm_0,\vecm_1,\ldots,\vecm_n\in\ZZ^d\\ \vecm_1',\ldots,\vecm_n'\in\ZZ^d}} \mathbb{1}[ \vecm_n' - \vecm_n + \vecm_\ell' - \vecm_\ell =0] \\
& \times \int_{\RR^d} \hat W(r\vecm_1) 
\scrT_{1,\ell-1}(\vecm_0+\vecm_\ell'+\vecalf) \e(\tfrac12 \, s_\ell \|\vecm_0+\vecm_\ell'-\vecm_\ell+\vecalf \|^2)    \\
& \times  \hat W(-r\vecm_1') 
\overline{\scrT}_{1,\ell-1}(\vecm_0+\vecm_\ell'+\vecalf) \e(-\tfrac12 \, s_\ell \|\vecm_0+\vecalf \|^2) \\
& \times \tilde a(-\veceta,h(\vecm_0+\vecm_\ell'-\vecm_\ell +\vecalf+ \tfrac12 r^{d-1} \veceta)) \\
& \times\e(-\tfrac12 \, s_{\ell+1}\| r^{d-1} \veceta + \vecm_0 + \vecalf + \vecm_\ell'-\vecm_\ell \|^2) \hat W(r\vecm_{\ell+1}) \\
& \times \scrT_{\ell+1,n-1}( r^{d-1} \veceta + \vecm_0 + \vecalf + \vecm_\ell'-\vecm_\ell )   \\
& \times \e(\tfrac12 \, s_{\ell+1}\| r^{d-1} \veceta + \vecm_0 + \vecalf \|^2) \hat W(-r\vecm_{\ell+1}') 
\overline{\scrT}_{\ell+1,n-1}( r^{d-1} \veceta + \vecm_0 + \vecalf )   \\
& \times  \tilde {\overline{a}}(\veceta,h(  \vecm_0 + \vecalf + \tfrac12 r^{d-1} \veceta )) \, \d \veceta + O(r^\infty).
\end{split}
\end{equation}
We then make the substitution $\vecm_0 \to \vecm_0 - \vecm_\ell'$, followed by the substitutions $\vecm_j \to \vecm_0 - \vecm_j$ for $j=1,\dots,\ell$ and $\vecm_j \to \vecm_\ell - \vecm_j$ for $j=\ell+1, \dots, n$ as well as the analogous substitutions for the $\vecm_j'$. This yields the simpler expression

\begin{equation}
\begin{split}
\Tr_\vecalf & [K_{1,\ell}(\vecs)^\dagger K_{1,\ell}(\vecs)\, \Op_{r,h}(a)\, K_{\ell+1,n}(\vecs) K_{\ell+1,n}(\vecs)^\dagger \, \Op_{r,h}(\bar a)] \\&= r^{2nd} h^{d}  \sum_{\substack{\vecm_0,\vecm_1,\ldots,\vecm_n\in\ZZ^d\\ \vecm_1',\ldots,\vecm_n'\in\ZZ^d}} \mathbb{1}[ \vecm_n' - \vecm_n + \vecm_\ell -\vecm_\ell'= 0]  \\
& \times  \int_{\RR^d} \hat W(r(\vecm_0-\vecm_1)) \scrT_{1,\ell-1}^-(\vecalf)   \e(\tfrac12 \, s_\ell (\|\vecm_\ell+\vecalf \|^2-\|\vecm_\ell'+\vecalf \|^2))     \\
& \times \hat W(r(\vecm_1'-\vecm_0)) \overline{\scrT}_{1,\ell-1}^-(\vecalf)  \tilde a(-\veceta,h(\vecm_\ell +\vecalf+ \tfrac12 r^{d-1} \veceta)) \\
& \times \hat W(r(\vecm_\ell-\vecm_{\ell+1})) \scrT_{\ell+1,n-1}^-(r^{d-1} \veceta + \vecalf )    \\
& \times \e(\tfrac12 \, s_{\ell+1}(\| r^{d-1} \veceta + \vecm_\ell' + \vecalf \|^2- \| r^{d-1} \veceta + \vecalf + \vecm_\ell \|^2))  \\
& \times \hat W(r(\vecm_{\ell+1}'-\vecm_\ell')) \overline{\scrT}_{\ell+1,n-1}^-(r^{d-1} \veceta+ \vecalf )    \tilde {\overline{a}}(\veceta,h(  \vecm_\ell'+ \vecalf + \tfrac12 r^{d-1} \veceta )) \, \d \veceta \\
& + O(r^\infty).
\end{split}
\end{equation}
This yields \eqref{eq:compli} after substituting $\veceta \to - \veceta$.


\begin{thebibliography}{99}

\bibitem{Allaire05}
G. Allaire and A. Piatnitski, Homogenization of the Schr\"odinger equation and effective mass theorems.{ Comm. Math. Phys.} 258 (2005), no. 1, 1--22.

\bibitem{Bal99}
G. Bal, A. Fannjiang, G. Papanicolaou and L. Ryzhik, Radiative transport in a periodic structure. {J. Stat. Phys} {95}  (1999), 479--494.

\bibitem{Bal2002}
G. Bal, G. Papanicolaou and L. Ryzhik, Radiative transport limit for the random Schr\"{o}dinger equation. {Nonlinearity} {15} (2002), 513--529.

\bibitem{Bal2011}
G. Bal, T. Komorowski and L. Ryzhik, Asymptotics of the Solutions of the random
Schr\"{o}dinger Equation, {Arch. Rational Mech. Anal.} {200} (2011), 613--664.

\bibitem{Beniot}
A. Benoit and A. Gloria, Long-time homogenization and asymptotic ballistic transport of classical waves, {Pre-print}, (2017), arXiv:1701.08600.

\bibitem{BLP}
A. Bensoussan, J.-L. Lions and G. Papanicolaou,
{ Asymptotic analysis for periodic structures}. 
AMS Chelsea Publishing, Providence, RI, 2011. 

\bibitem{Berry81}
M. V. Berry,
Quantizing a classically ergodic system: Sinai's billiard and the KKR method. 
{ Ann. Physics 131 (1981)}, no. 1, 163--216. 

\bibitem{Birman}
M. Sh. Birman and T.A. Suslina, Periodic second-order differential operators. Threshold properties and averaging. (Russian) { Algebra i Analiz} 15 (2003), no. 5, 1--108; translation in { St. Petersburg Math. J.} 15 (2004), no. 5, 639--714

\bibitem{Boldrighini83}
C. Boldrighini, L.A. Bunimovich and Y.G. Sinai, On the Boltzmann equation for the Lorentz gas.  { J. Stat. Phys.}  {32}  (1983), 477--501.

\bibitem{CagliotiGolse}
E. Caglioti and F. Golse, On the Boltzmann-Grad limit for the two dimensional periodic Lorentz gas. { J. Stat. Phys.} {141} (2010), 264--317. 

\bibitem{Castella_WC}
F.~Castella.
\newblock On the derivation of a quantum Boltzmann equation from the periodic
  von Neumann equation.
\newblock { ESAIM: M2AN}, 33(2):329--349, 1999.

\bibitem{Castella_LD}
F.~Castella.
\newblock From the von Neumann equation to the quantum Boltzmann equation in a
  deterministic framework.
\newblock { J. Stat. Phys.}, 104:387--447, 2001.

\bibitem{Castella_LD2}
F.~Castella.
From the von Neumann equation to the quantum Boltzmann equation. II. Identifying the Born series. 
\newblock { J. Stat. Phys.}, 106:1197--1220, 2002.


\bibitem{Castella_Plagne_LD}
F.~Castella and A.~Plagne.
\newblock Non derivation of the quantum {B}oltzmann equation from the periodic
  von {N}eumann equation.
\newblock { Indiana Univ. Math. J.}, 51(4):963--1016, 2001.

\bibitem{Craster10}
R.V. Craster, J. Kaplunov and A.V. Pichugin, High-frequency homogenization for periodic media. { Proc. R. Soc. Lond. Ser. A Math. Phys. Eng. Sci.} 466 (2010), no. 2120, 2341--2362. 

\bibitem{Dahlqvist98}
P. Dahlqvist and G. Vattay, 
Periodic orbit quantization of the Sinai billiard in the small scatterer limit. 
{ J. Phys. A} 31 (1998), no. 30, 6333--6345. 

\bibitem{Eng_Erdos}
D.~Eng and L.~Erd\"{o}s.
\newblock The linear Boltzmann equation as the low density limit of a random
  Schr\"odinger equation.
\newblock { Reviews in Mathematical Physics}, 17(06):669--743, 2005.

\bibitem{Erdos_Yau}
L. Erd\"os and H.-T. Yau, Linear Boltzmann equation as the weak coupling limit of the random Schr\"odinger equation, { Comm. Pure Appl. Math.} LIII (2000) 667--735.

\bibitem{EskinMargulisMozes}A. Eskin, G. Margulis and S. Mozes, Quadratic forms of signature $(2,2)$ and eigenvalue spacings on rectangular $2$-tori.  Ann. of Math. (2) 161 (2005),
no. 2, 679--725.

\bibitem{Folland}
G. B.\,Folland, Harmonic analysis in phase space. { Annals of Mathematics Studies}, 122.

\bibitem{Gallavotti69}
G. Gallavotti,
Divergences and approach to equilibrium in the Lorentz and the
Wind-tree-models, { Physical Review} 185 (1969), 308--322.

\bibitem{Gerard91}
P. G\'erard, Mesures semi-classiques et ondes de Bloch. S\'eminaire sur les \'Equations aux D\'eriv\'ees Partielles, 1990--1991, Exp. No. XVI, 19 pp., \'Ecole Polytech., Palaiseau, 1991.

\bibitem{Gerard97}
P. G\'erard,  P.A. Markowich, N.J. Mauser and F. Poupaud, Homogenization limits and Wigner transforms. Comm. Pure Appl. Math. 50 (1997), no. 4, 323--379. 

\bibitem{Golse07}
F. Golse, On the periodic Lorentz gas and the Lorentz kinetic equation. Ann. Fac. Sci. Toulouse Math. (6) 17 (2008), no. 4, 735--749.

\bibitem{Griffin-thesis}
J. Griffin, Quantum Dynamics in Highly Localised Periodic Potentials, PhD Thesis Bristol 2017

\bibitem{GM19}
J. Griffin and J. Marklof, Probabilistic model for quantum transport in a low-density periodic potential, in preparation

\bibitem{HMC16}
D. Harutyunyan, G. Milton, and R.V. Craster, High-frequency homogenization for travelling waves in periodic media. Proc. A. 472 (2016), no. 2191, 20160066, 18 pp. 

\bibitem{Lorentz}
H. Lorentz, Le mouvement des \'electrons dans les m\'etaux,
Arch. N\'eerl. {10} (1905), 336--371.

\bibitem{Marklof02}
J. Marklof, Pair correlation densities of inhomogeneous quadratic forms II,
{ Duke. Math. J.} 115 (2002) 409-434, Correction, ibid. 120 (2003) 227-228

\bibitem{Marklof03}
J. Marklof, Pair correlation densities of inhomogeneous quadratic forms,
{ Ann. of Math.} 158 (2003) 419-471

\bibitem{partIII}
J. Marklof and A. Str\"ombergsson, %
Kinetic transport in the two-dimensional periodic Lorentz gas, Nonlinearity
{21} (2008) 1413--1422.

\bibitem{partI}
J. Marklof and A. Str\"ombergsson, The distribution of free path lengths in
the periodic Lorentz gas and related lattice point problems,
Annals of Math. {172} (2010), 1949--2033.

\bibitem{partII}
J. Marklof and A. Str\"ombergsson, The Boltzmann-Grad limit of the periodic Lorentz gas, Annals of Math. {174} (2011) 225--298.

\bibitem{partIV}
J. Marklof and A. Str\"ombergsson, The periodic Lorentz gas in the Boltzmann-Grad limit: Asymptotic estimates,
GAFA.\ {21} (2011), 560-647.

\bibitem{Markowich94}
P. A. Markowich, N.J. Mauser and F.A. Poupaud, Wigner-function approach to (semi)classical limits: electrons in a periodic potential. J. Math. Phys. 35 (1994), no. 3, 1066--1094. 

\bibitem{MargulisMohammadi}
G. Margulis and A. Mohammadi,  Quantitative version of the Oppenheim conjecture for inhomogeneous quadratic forms. Duke Math. J. 158 (2011), no. 1, 121--160.

\bibitem{Nier95}
F. Nier, Asymptotic analysis of a scaled wigner equation and quantum scattering, Trans. Theory Stat. Phys. 24:4--5, (1995), 591--628

\bibitem{Nier96}
F. Nier, A semi-classical picture of quantum scattering, Annales scientifiques de l'\'Ecole Normale Sup\'erieure 29.2 (1996) 149--183

\bibitem{Panati03}
G. Panati, H. Spohn and S. Teufel, Effective dynamics for Bloch electrons: Peierls substitution and beyond, Commun. Math. Phys. 242 (2003) 547--578.


\bibitem{Spohn77}
H. Spohn, Derivation of the transport equation for electrons moving through random
impurities, J. Stat. Phys. 17 (1977) 385--412.

\bibitem{Spohn78}
H. Spohn, The Lorentz process converges to a random flight process,  Comm.
Math. Phys.  60  (1978), 277--290.

\bibitem{SP19}
A. Str\"ombergsson and P. Vishe, An effective equidistribution result for $\SL(2,\RR)\ltimes(\RR^2)^{\otimes k}$ and application to inhomogeneous quadratic forms, arXiv:1811.10340

\end{thebibliography}
\end{document}